%% file: patchwork.tex
\documentclass[runningheads]{llncs}

\usepackage[mobile]{eccv}

\usepackage{eccvabbrv}

\usepackage{graphicx}
\usepackage{booktabs}
\usepackage{pifont} 
\usepackage{subcaption}
\usepackage{float}

\usepackage[accsupp]{axessibility}  

\usepackage[pagebackref]{hyperref}

\usepackage{orcidlink}

\usepackage{ulem}

\newcommand{\journalvsarxiv}[2]{#2}

\input{fig/figure_macros}

\begin{document}

\title{Patchwork: A compact representation for 3D polygonal shapes} 

\titlerunning{Patchwork: A compact representation for 3D polygonal shapes}

\author{Ruichen Zheng
\orcidlink{0009-0008-3657-351X} 
\and
Biao Zhang
\orcidlink{0000-0001-5685-6092} \and
Michael Birsak
\orcidlink{0000-0001-6375-8124} \and
Mikhail Skopenkov
\orcidlink{0000-0003-2453-0009} \and
Peter Wonka 
\orcidlink{0000-0003-0627-9746}
}

\authorrunning{R.~Zheng et al.}

\institute{King Abdullah University of Science and Technology, Thuwal, 23955, Saudi Arabia \\
\email{zheng$\cdot$ruichen@kaust$\cdot$edu$\cdot$sa, biao$\cdot$z@outlook$\cdot$com, michael$\cdot$birsak@kaust$\cdot$edu$\cdot$sa, mikhail$\cdot$skopenkov@gmail$\cdot$com, pwonka@gmail$\cdot$com}}

\maketitle

\input{sec/0_abstract}
\input{sec/1_intro}
\input{sec/2_related_work}
\input{sec/3_method_basics}
\input{sec/4_method_optimization}
\input{sec/6_results}

\input{sec/5_conclusions}

\section*{Acknowledgements}
The research reported in this publication was supported by funding from King Abdullah University of Science and Technology (KAUST) – Center of Excellence for Generative AI, under award number 5940 and a gift from Google.
\clearpage

\input{sec/6_appendix}

\clearpage


%

\bibliographystyle{splncs04}
\bibliography{patchwork}
\end{document}

%% file: fig/figure_macros.tex
\newcommand{\figpath}{fig/fig_5}
\newcommand{\cmpres}{512} 

\newlength{\cmpimgw}
\newlength{\cmpspace}
\newcommand{\cmpsep}{\hspace{\cmpspace}}

\newlength{\cmplabelw}
\newcommand{\cmplabel}[1]{\parbox[c]{\cmplabelw}{\centering\scriptsize #1}}

\newcommand{\cmpimg}[3]{%
  \includegraphics[width=\cmpimgw]{\figpath/#1/#2/\cmpres/#3.jpg}%
}

\newcommand{\cmprow}[2]{%
  \cmpimg{#1}{#2}{comparison_mc}\cmpsep
  \cmpimg{#1}{#2}{comparison_rtfa}\cmpsep
  \cmpimg{#1}{#2}{comparison_spsr}\cmpsep
  \cmpimg{#1}{#2}{comparison_voromesh}\cmpsep
  \cmpimg{#1}{#2}{comparison_ponq}\cmpsep
  \cmpimg{#1}{#2}{comparison_siren}\cmpsep
  \cmpimg{#1}{#2}{ablation_geo_init}\cmpsep
  \cmpimg{#1}{#2}{ours3}\cmpsep
  \cmpimg{#1}{#2}{gt}
}

\newcommand{\cmplabels}{%
  \cmplabel{MC}%
  \cmplabel{RTFA}%
  \cmplabel{SPSR}%
  \cmplabel{VoroMesh}%
  \cmplabel{PoNQ}%
  \cmplabel{SIREN}%
  \cmplabel{Ours Init}%
  \cmplabel{Ours}%
  \cmplabel{GT}%
}

%% file: sec/0_abstract.tex
\begin{abstract}
We introduce Patchwork, a new general-purpose shape representation capable of modeling 2D and 3D geometry with a small number of parameters. Patchwork is grounded in a rigorous mathematical framework, providing provable complexity bounds and the ability to approximate arbitrary shapes with arbitrary precision in any dimension. We propose an efficient gradient-based optimization scheme to fit Patchwork representations to 2D and 3D data, along with a novel regularization loss that progressively prunes redundant elements, yielding high compactness after convergence. Our approach offers fast fitting performance, a fraction of the required parameters compared to existing alternatives, and native support for inside-outside classification, making it a versatile and compact representation for geometric learning and reconstruction tasks, with future potential for 3D generation. Our implementation is available at: \url{https://github.com/Ankbzpx/patchwork-experiment}.

\keywords{3D representation \and Surface reconstruction}

\end{abstract}

%% file: sec/1_intro.tex
\section{Introduction}
\label{sec:intro}

Shape representations are at the core of AI-driven geometry processing. Among the many existing ones, the signed distance function (SDF) and occupancy function proved to be the most efficient. The most direct methods for storing this function are the regular-grid discretization and the Fourier transform. They satisfy the key requirements of generative AI: the parameters of the representations are continuous real numbers, and their quantity is fixed. However, the number of parameters can be large. A more attractive approach, actively developed in recent years, is to use neural networks themselves to represent the function. Our work is devoted to this direction.

While the recent SDF fitting methods leverage neural networks and specialized data structures, such as octrees~\cite{takikawa2021neural} and hash grids~\cite{muller2022instant}, to enhance scalability and expressiveness, we pose the complementary question: What constitutes a more efficient network-friendly shape representation? In seeking the answer, we draw inspiration from Cvxnet \cite{Deng-etal-20}, which represents the SDF of a convex shape as the soft maximum over a set of linear functions, and composes multiple such primitives to model general shapes. While this formulation appears compact, it still introduces inefficiencies: each polyhedron requires multiple sets of coefficients, and the composition of convex parts may overlap, leading to redundancies. To overcome these key limitations, we bring in \textit{Viro's patchworking} idea from pure mathematics \cite{IV-96}. In patchworks, individual linear functions contribute to the soft maximum with \textit{signs}, resulting in non-convex shapes. Furthermore, we refine this approach by replacing the signs with arbitrary real \textit{weights} and achieving the key principle of a fixed number of real parameters.

\input{fig/teaser/teaser}

However, patchwork optimization is strongly underdetermined, as it implicitly induces the combinatorics of the interface between competing linear functions \cite{siahkamari2020piecewise}, making it ill-posed for local gradient update. We instead adapt recent advances in deep learning optimization: reparameterization with soft maximum and weight normalization to facilitate gradient updates, regularization through constrained overparameterization that is progressively pruned during optimization, and fully-fused kernels for fast, memory-efficient fitting. Interestingly, our patchwork formulation resembles wide two-layer Multi-Layer Perceptrons (MLPs), yet with proper weight initialization, it has comparable capacity to encode general smooth shapes, matching highly expressive counterparts such as SIREN, while requiring only a fraction of parameters.

In this paper, we develop the theoretical foundation and practical optimization techniques for this new representation and demonstrate that it significantly improves upon competing representations in the low-parameter regime across a wide range of shapes. We envision that our shape representation will be important for compact geometric learning and generative tasks in the future.

\textbf{Theoretical foundations.}
Originally, patchwork representation for curves appeared in pure mathematics as a constructive approach to famous \textit{Hilbert's sixteenth problem}. This problem was stated in the milestone talk ``Mathematical Problems'' by David Hilbert at the Second International Congress of Mathematicians in 1900. The first part of this problem asked for a possible number and mutual position of components of a curve represented by an algebraic equation of a given degree. This question remains widely open, starting from degree $8$. It further decomposes into two sub-tasks: to \textit{construct} an algebraic equation that represents a curve of given topological type and to \textit{bound} the complexity of algebraic curves of given degree. 

\textit{Patchworking} is an approach to this construction, which traces back to the works by Gudkov~\cite{gudkov1974topology} from the 1970s and was later formalized and generalized by Viro~\cite{viro1986progress}. See an elementary introduction in~\cite{IV-96}. Using this approach, a counterexample to the original Hilbert's result was constructed (seventy years later), as well as to other famous conjectures in algebraic geometry, and Hilbert's problem was solved for equations of degree up to $7$. Furthermore, it served as a motivation and foundation to \textit{tropical geometry}, which studies piecewise-linear analogs of algebraic curves and surfaces~\cite{maclagan2015introduction}. The underlying
formalism has much similarity with neural networks,
so it is natural to build a bridge from patchworks to neural shape representations; cf.~\cite{brandenburg2024real}.

The \textit{complexity bounds} for algebraic curves are obtained in \textit{real algebraic geometry} by very different methods~\cite{Yuan-14}. An example is Harnack's curve theorem, which bounds the number of components of an algebraic curve of given degree. Such results also bound the representation capacity of patchworks. 
For algebraic surfaces in space, a tight bound is unknown starting from degree $5$; see \cite{itenberg2006asymptotically}.

\textbf{Contributions.}
Our contributions are summarized as follows:
\begin{itemize}
    \item We propose a new shape representation that is expressive and yet compact compared to known alternatives, capable of representing 2D and 3D, smooth and polygonal, convex and non-convex shapes with any
    precision. It has 
    a simple closed-form expression with a fixed number of real parameters, %
    and built-in connections to neural networks and known 
    mathematical theories.
    \item We propose geometric initialization and progressive pruning to make fitting our representation viable using gradient descent, yielding high-quality reconstructions comparable to baselines while only requiring a fraction of parameters.
    \item We implement fully fused kernels that greatly reduce memory consumption. 
\end{itemize}

\textbf{Organization of the paper.} We first discuss related work (Sec.~\ref{sec:related-work}), followed by our main concept: patchwork shape representation (Sec.~\ref{sec:method}). We then introduce practical implementations to make fitting such representations using gradient descent viable (Sec.~\ref{sec:optimization}). In Sec.~\ref{sec:results} and~\ref{sec:conclusions}, we discuss the main results and give a concise conclusion. We provide additional theoretical expressiveness and complexity analysis, implementation details, 2D and 3D visual results, direct polygon extraction, and future work in \journalvsarxiv{the supplementary material}{Appendices~\ref{sec:dependence}--\ref{sec:future-work}}.

%% file: fig/teaser/teaser.tex

\newcommand\blur{0.50} 

\newcommand\tbot{0.15}
\newcommand\tr{0.1}
\newcommand\ttop{0.2}

\newcommand\myspace{} 

\begin{figure*}[t]
\centering
  

  
\includegraphics[width=1.0\textwidth]{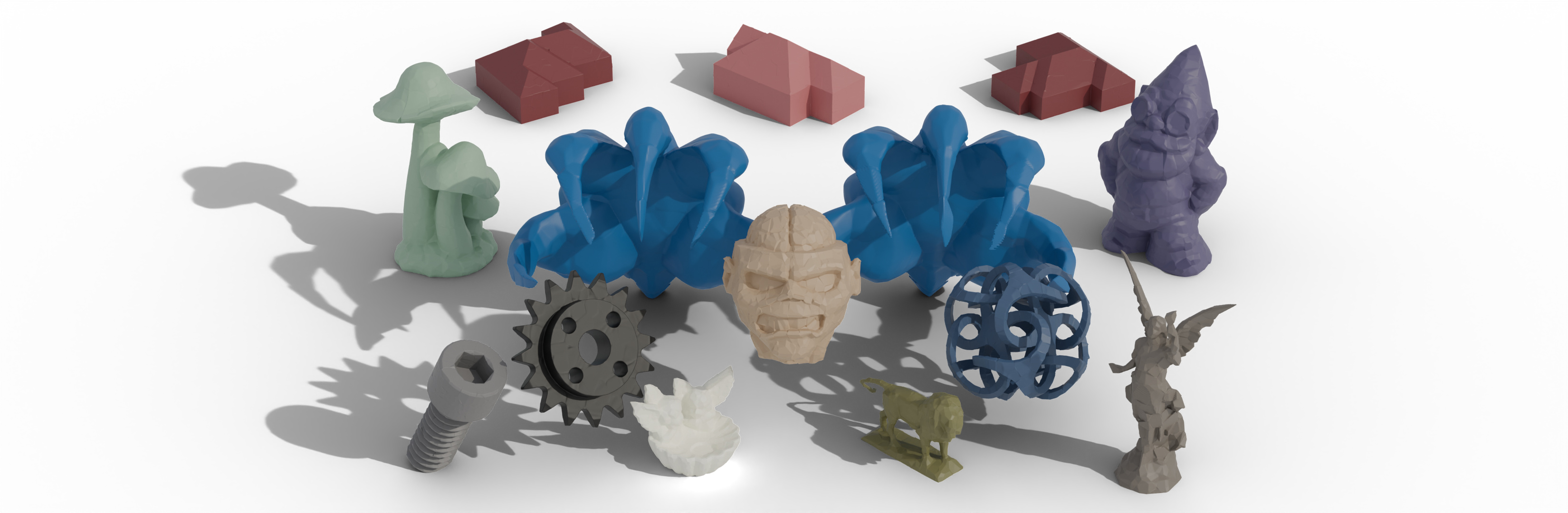}
\captionof{figure}{Our method represents 3D shapes in a very compact and memory-efficient way.}
\label{fig:teaser}
\end{figure*}


%% file: sec/2_related_work.tex
\section{Related work}
\label{sec:related-work}

The representation of 3D shapes using neural networks has been explored through various approaches. Early methods focused on modifying the network architecture itself to improve fidelity. For instance, SIREN~\cite{sitzmann2020implicit} introduced periodic activation functions for implicit neural representations, while FFN~\cite{tancik2020fourier} utilized Fourier features to help MLPs learn high-frequency details.

Another line of research pursued acceleration and accuracy by integrating specialized data structures. NGLOD~\cite{takikawa2021neural} utilized octree data structures, and InstantNGP~\cite{muller2022instant} famously employed multi-resolution hash grids to achieve real-time performance.

More recently, a significant focus has shifted towards compact representations, often with the goal of enabling generative modeling. Works like VecSet~\cite{zhang20233dshape2vecset}, M-SDF~\cite{yariv2024mosaic}, and efunc~\cite{zhang2025efunc} have explored compressing shapes into compact representation spaces. Our work shares this objective of achieving a compact and efficient shape representation.

Other methods have approached this problem from different perspectives. Some, like Tulsiani et al.~\cite{tulsiani2017learning}, represent shapes using high-level abstractions such as collections of cubes or cylinders. While highly compact, these methods often fail to capture fine-grained geometric details. More relevant to our approach are methods that also utilize hyperplanes to define shapes, such as BSP-Net~\cite{chen2020bsp} and CvxNet~\cite{Deng-etal-20}. However, these methods typically rely on a recursive binary space partitioning scheme, which differs from our method's use of hyperplanes.

Our approach is directly inspired by the mathematical concept of patchworking \cite{viro1986progress,IV-96,itenberg2006asymptotically} and tropical geometry~\cite{maclagan2015introduction}.
For applications of tropical geometry for deep learning, see~\cite{alfarra2022decision,brandenburg2024real,maragos2021tropical,zhang2018tropical}.

%% file: sec/3_method_basics.tex
\section{Method}
\label{sec:method}

To introduce our main concept, we start with a concise 
definition in 2D, with a discussion of its visualization and expressive power. Then we generalize it to~3D.

\subsection{2D}
We represent a curve in the plane as the zero level set of a field $F\colon\mathbb{R}^2 \to \mathbb{R}$ of a certain special form.

\begin{definition}[
See Figs.~\ref{fig-patchwork}--\ref{fig-patchwork-examples}]
A \emph{2D patchwork field} is a function $\mathbb{R}^2\to\mathbb{R}$ of the form
\begin{equation}
\label{eq-def-patchwork}
F(x,y)=\frac{1}{\beta}
\log\sum_{i=1}^n s_i\exp\left(\beta\left(a_{i}x+b_{i}y+c_{i}\right)\right),
\end{equation}
depending on the parameters $n\in\mathbb{Z}_{>0}$ (called \emph{width}), $\beta\in\mathbb{R}_{>0}$, and $a_i,b_i,c_i, s_i\in \mathbb{R}$, where $i=1,\dots,n$.

A \emph{patchwork} is the curve in the plane 
given by 
$F(x,y)=0$.

We also introduce the limiting case $\beta\to+\infty$ by setting 
\begin{equation}
\label{eq-def-patchwork-limiting}
f(x,y)=\max_{i:s_i>0}\left\{a_{i}x+b_{i}y+c_{i}\right\}-\max_{i:s_i<0}\left\{a_{i}x+b_{i}y+c_{i}\right\},
\end{equation}
where the first maximum is taken over indices $i$ such that $s_i>0$ and the second maximum is over $i$ such that $s_i<0$.
The curve given by 
$f(x,y)=0$
is also called a \emph{patchwork}.
\end{definition}

\begin{figure}[htb]
    \centering
    \includegraphics[width=0.15\textwidth]{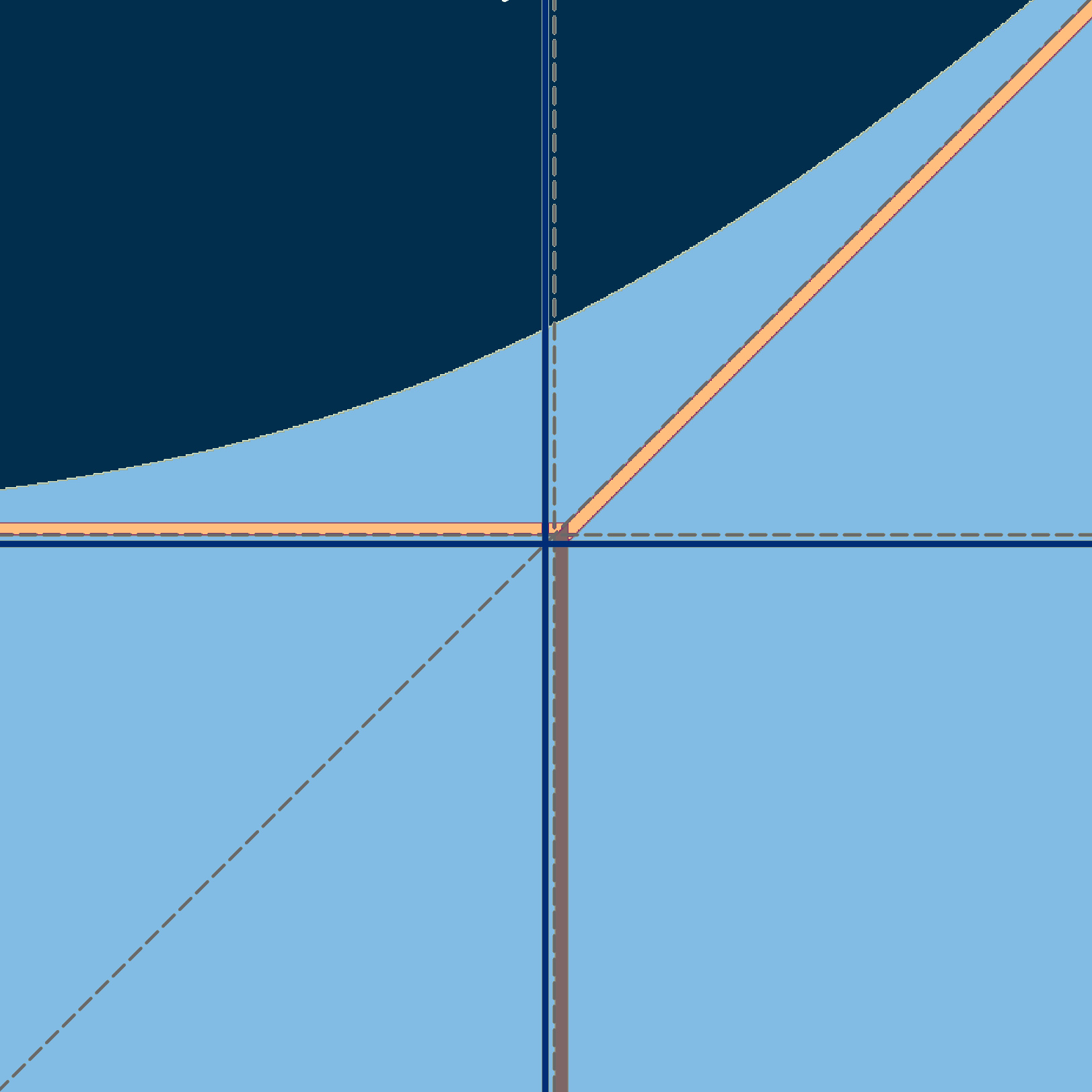}\,
    \includegraphics[width=0.15\textwidth]{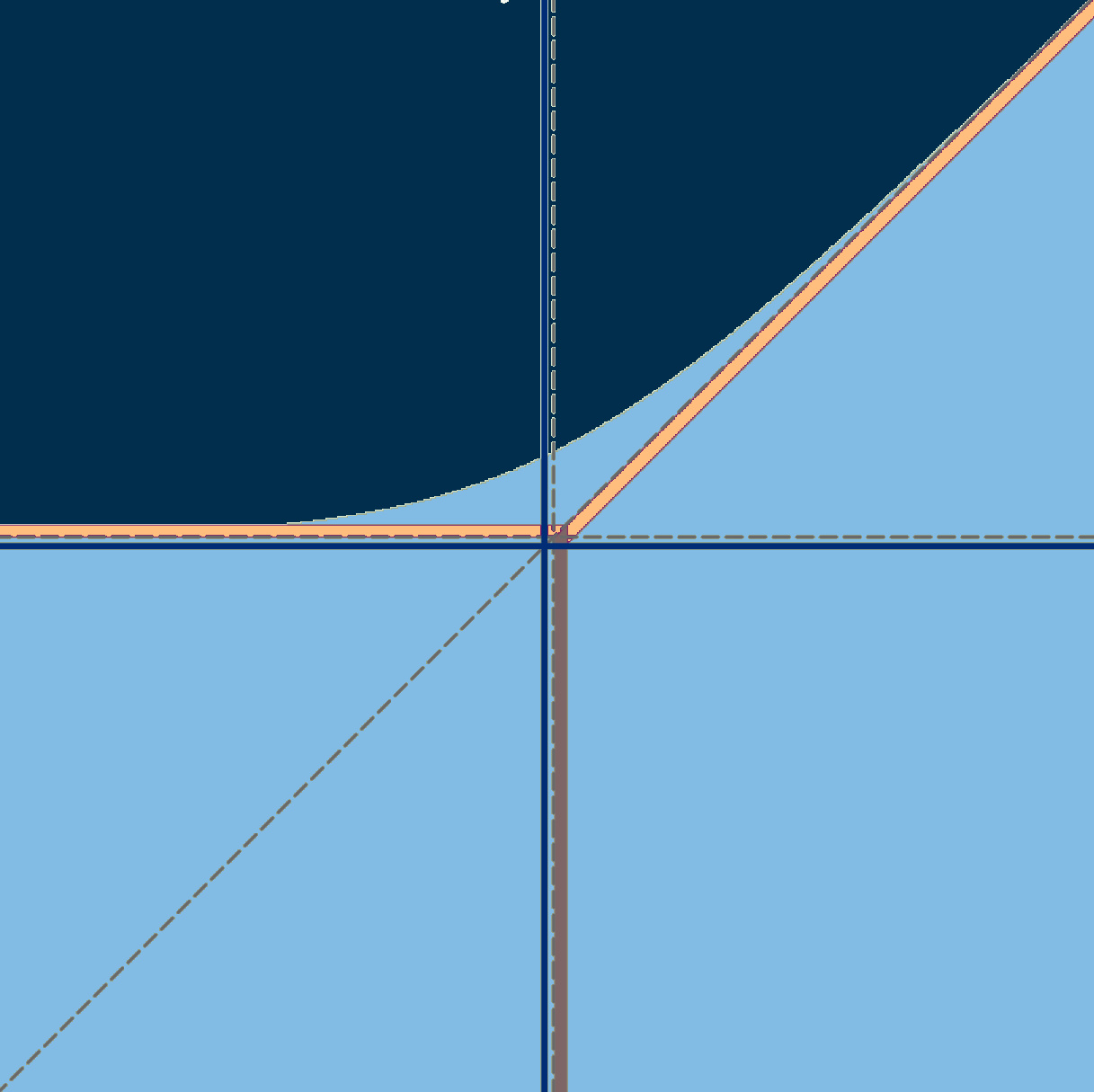}\,
    \includegraphics[width=0.15\textwidth]{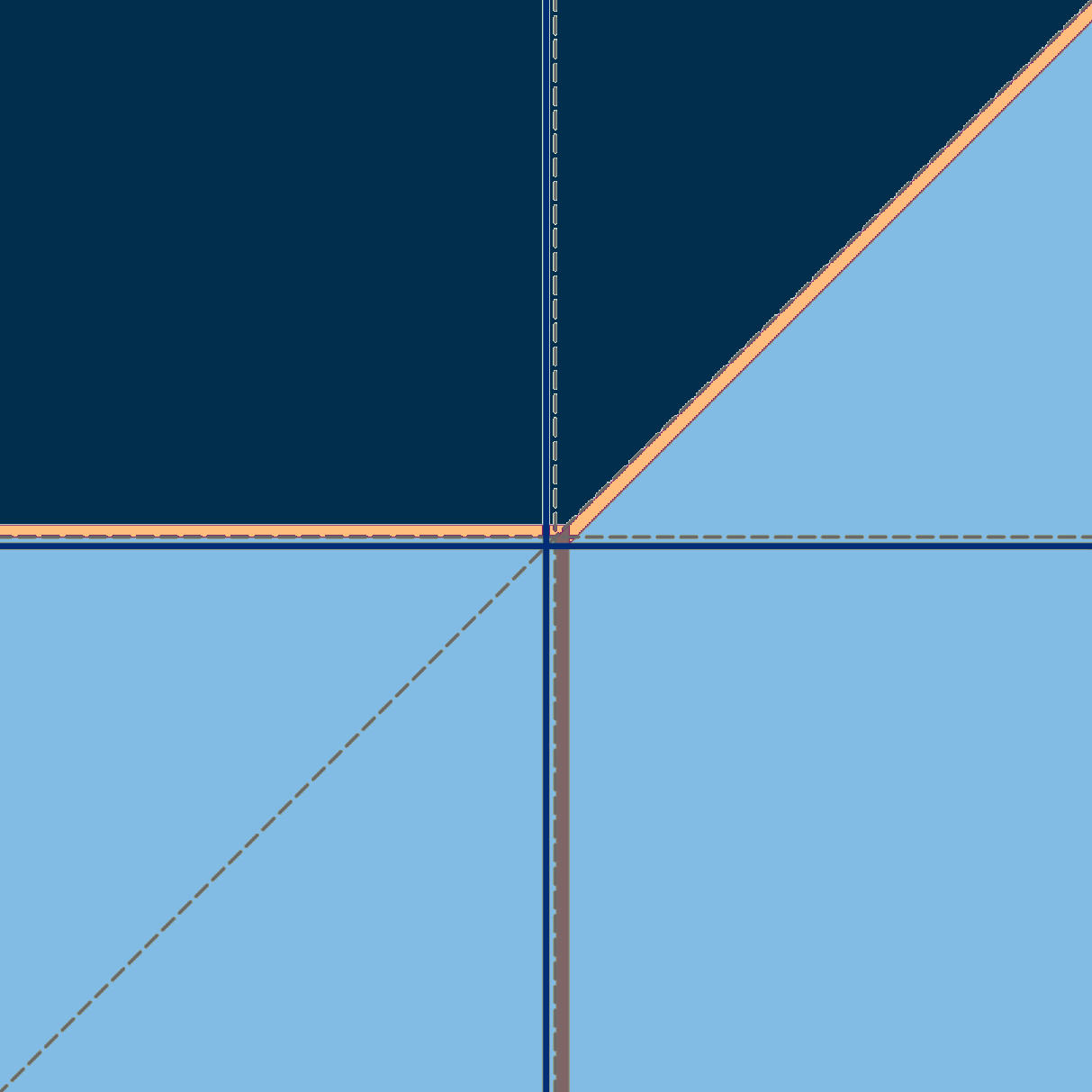}
    \caption{A patchwork (the boundary between the dark and light areas)
    for increasing values of $\beta$ (left and middle) and the limiting case $\beta\to+\infty$ (right). The design lines (blue), the equality lines (dashed), the candidate intervals (brown/orange), and the active intervals (orange). 
    Just two design lines 
    represent a concave angle.} 
    \label{fig-patchwork}
\end{figure}

\begin{figure}[htb]
    \centering
    \includegraphics[width=0.15\textwidth]{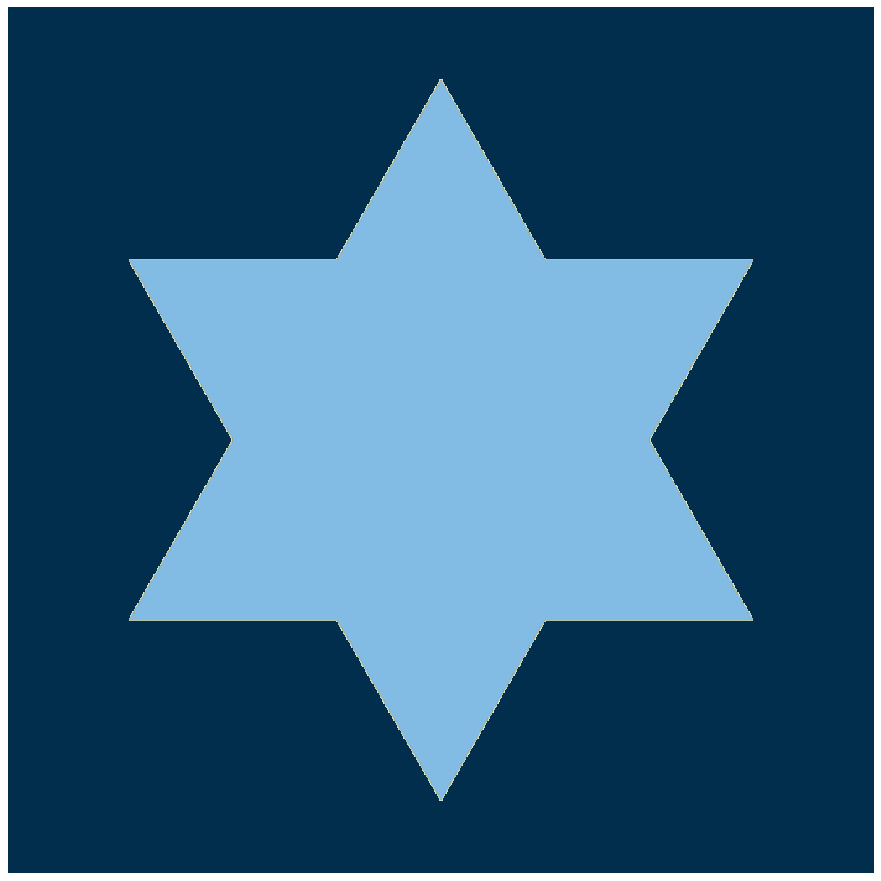}
    \includegraphics[width=0.15\textwidth]{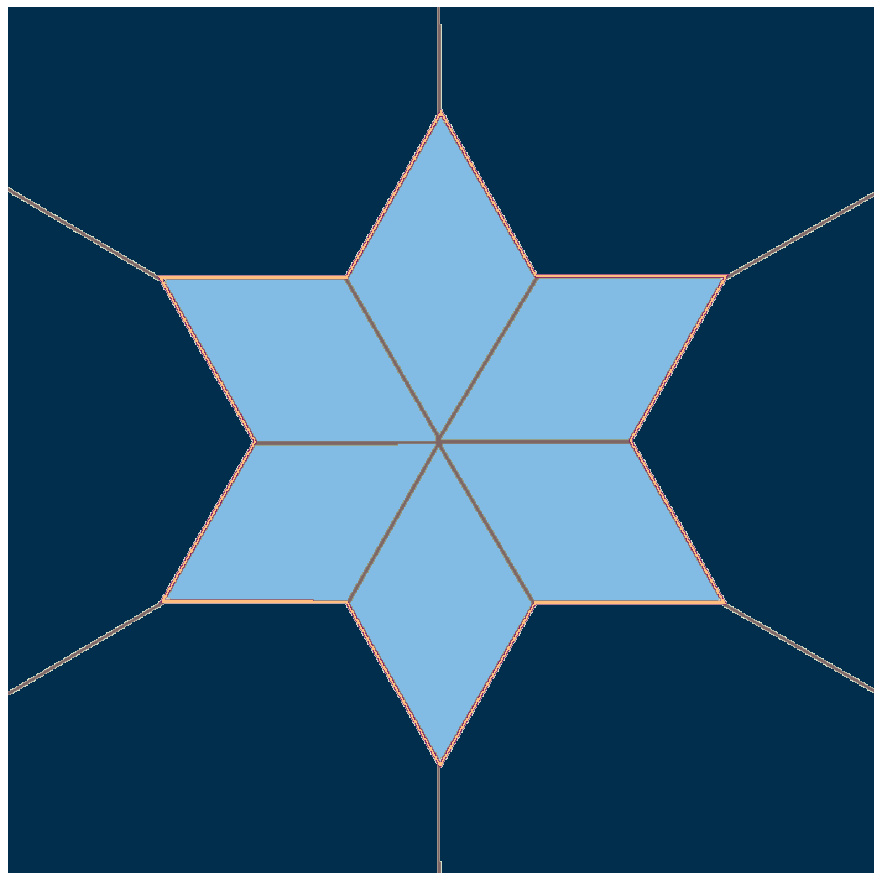}\hfill
    \includegraphics[width=0.15\textwidth]{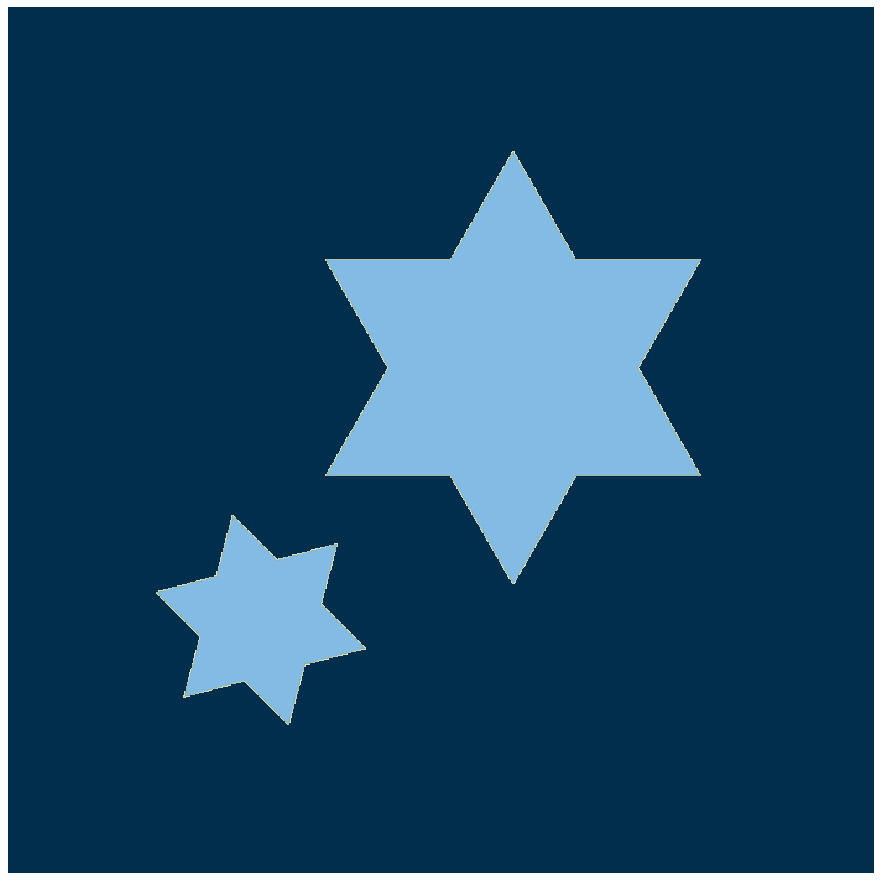}
    \includegraphics[width=0.15\textwidth]{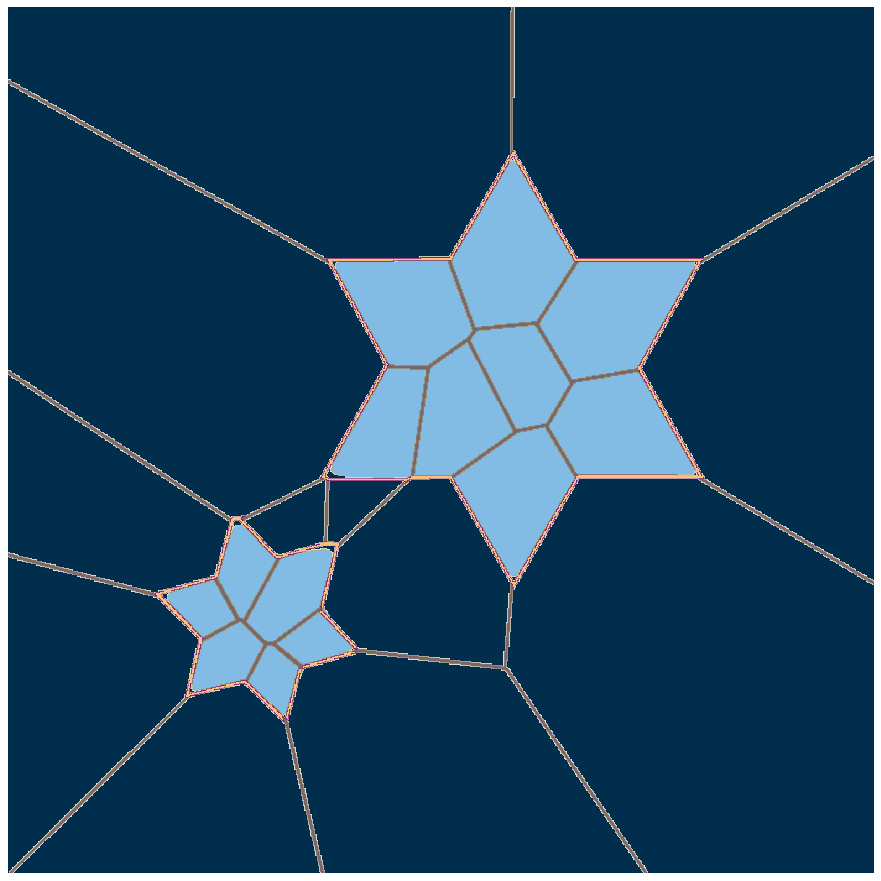}\hfill
    \includegraphics[width=0.15\textwidth]{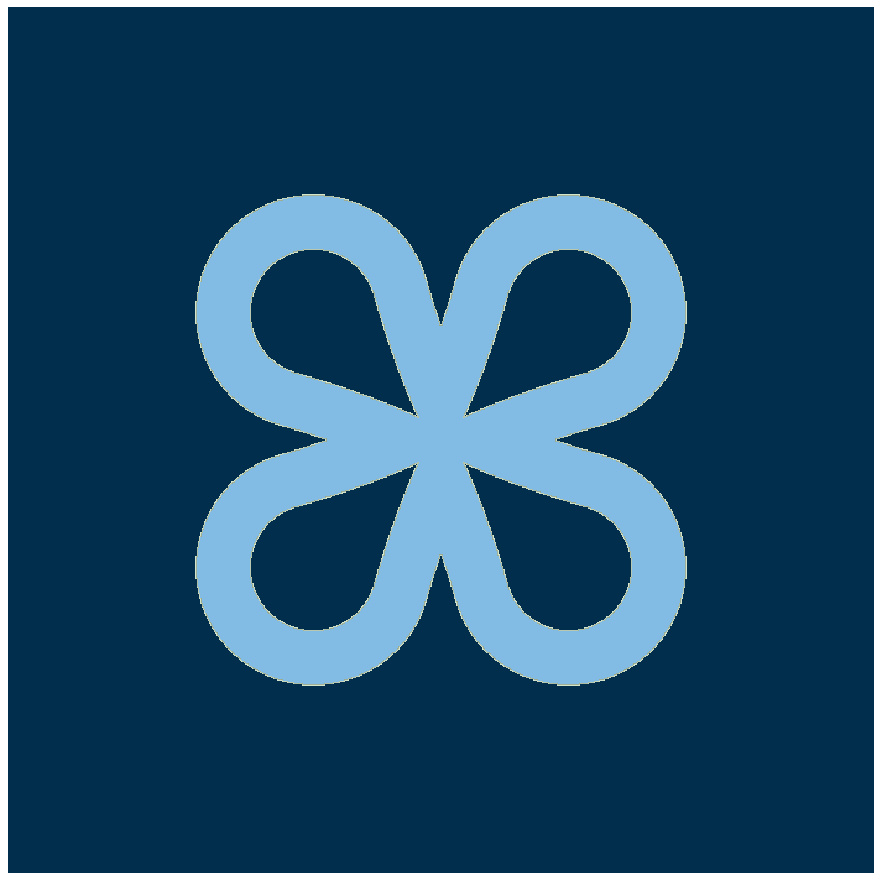}
    \includegraphics[width=0.15\textwidth]{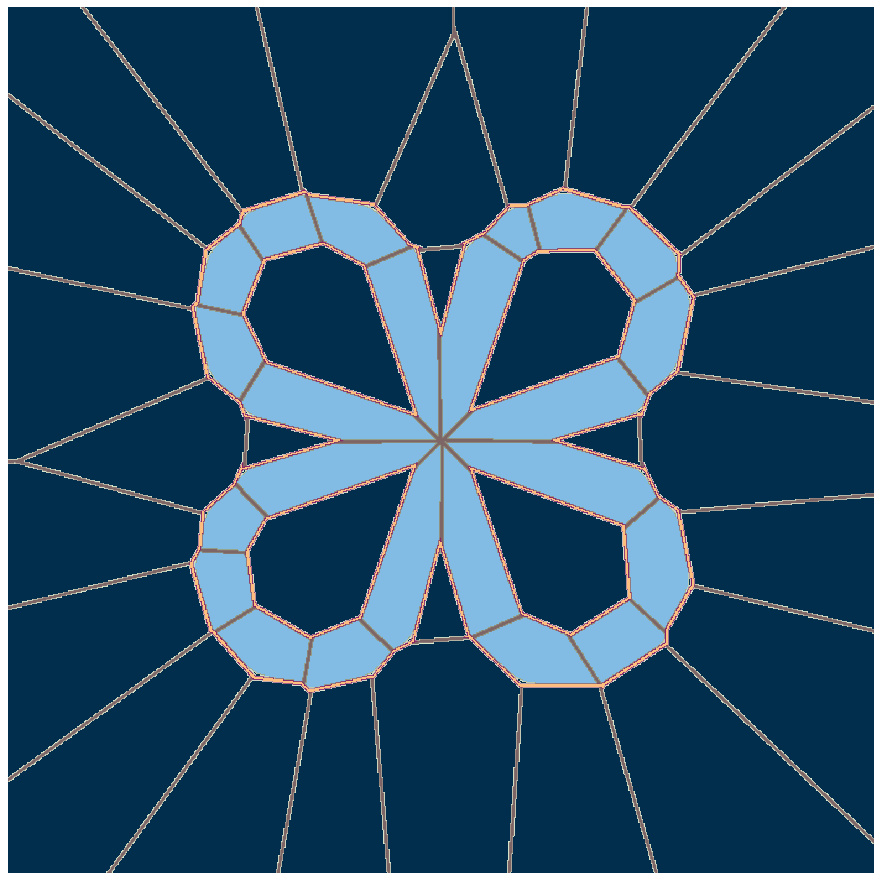}
    \caption{
    Nonconvex (left), disconnected (middle), and 
    non-simply-connected (right) shapes and their approximations by patchworks computed by optimization (right per pair). Each patchwork is the boundary between a dark and a light region. These compact representations have just $n=12, 37, 55$ design lines (left to right). 
    The candidate intervals (brown/orange) decompose the plane into polygons. The ``interior'' of the patchwork is roughly the union of the polygons corresponding to the indices $i$ with $s_i<0$, and the patchwork is roughly the union of active intervals (orange).  
    }
\label{fig-patchwork-examples}
\end{figure}
\textbf{Visualization.}
This construction is visualized
as follows; see Fig.~\ref{fig-patchwork}. 
Consider the patchwork $f(x,y)=0$ and let us determine its shape.
Introduce the linear function
\begin{equation}\label{eq-design-lines}
l_i(x,y):=a_{i}x+b_{i}y+c_{i} \qquad\text{for each }i=1,\dots,n.
\end{equation}
Assume that the functions $l_1(x,y),\dots,l_n(x,y)$ are distinct and one of them is a positive constant.
If $l_i(x,y)$ is not a constant, then $l_i(x,y)=0$ is an equation of a line in the plane, referred to as a \emph{design line}.  For instance, in Fig.~\ref{fig-patchwork},
$l_1(x,y)=x$, $l_2(x,y)=y$, $l_3(x,y)=0.01$ so that the design lines are the coordinate axes. 

Observe that function~\eqref{eq-def-patchwork-limiting} vanishes only if both maxima are equal, in particular, $l_i(x,y)=l_j(x,y)$ for some $i\ne j$. The lines given by the equations $l_i(x,y)=l_j(x,y)$ for all pairs of distinct indices $i,j\in\{1,\dots,n\}$ are called \emph{equality lines}. In what follows, fix such a pair $i\ne j$. For instance, in Fig.~\ref{fig-patchwork}, the equality lines are $x=y$, $x=0.01$, and $y=0.01$.

Next, a point $(x,y)$ on the equality line satisfies~$f(x,y)=0$,
only if the value $l_i(x,y)=l_j(x,y)$ is maximal among
$l_1(x,y),\dots,l_n(x,y)$. The intervals where
\begin{equation}
l_i(x,y)=l_j(x,y)\ge \max_k l_k(x,y)
\end{equation}
are called \emph{candidate intervals}. They can be straight line segments, rays, or lines. For instance, in Fig.~\ref{fig-patchwork}, these are three rays emanating from the origin in the north-east, south, and west directions. Their union is known as the \emph{tropical line}.

The candidate intervals split the plane into convex polygons, possibly empty or unbounded, given by the condition $l_i(x,y)\ge \max_{k}l_k(x,y)$ for some~$i$.
Each one is referred to as the \emph{candidate polygon corresponding to the index $i$}. 
In Fig.~\ref{fig-patchwork} on the right, the
ones with $s_i<0$ are light, and the one with $s_i>0$ is dark.

\input{fig/hex/hex}

Finally, to satisfy~$f(x,y)=0$,
the values $s_i$ and $s_j$ must have opposite signs. Candidate intervals with this property are called \emph{active intervals}.  In Fig.~\ref{fig-patchwork}, they are orange.

We conclude that the patchwork is
the union of active intervals  (if $\beta=+\infty$, some $l_i(x,y)$ is a positive constant, and the parameters are in general position; see \journalvsarxiv{the supplementary materials}{Appendix~\ref{sec:dependence}}). The parameter $\beta$ controls the sharpness of the corners. The larger $\beta$, the closer the patchwork is to this union. \journalvsarxiv{See Theorem~A.1
in the supplementary material.}{See Theorem~\ref{th-tropical-limit} in Appendix~\ref{sec:dependence}.}

\textbf{Expressive power.} This construction allows visual control
and compact representations even of complicated non-convex shapes; see Fig.~\ref{fig-patchwork-examples}. It also suggests the following design of a patchwork approximating a given curve; see Fig.~\ref{fig:hex}.

Let the curve lie inside the box $|x|,|y|\le 1$.
Superimpose a grid of step $1/N$.
Paint the squares
lying inside the
curve light, and the remaining ones 
dark. The union of the sides
separating
colors is called a \emph{digital curve}.
It can be explicitly represented as a patchwork.  
For notational convenience, replace the index $i$ 
in~\eqref{eq-def-patchwork} with a pair of indices $k,l\in [-N,N]$, so we write $a_{k,l}$ instead of $a_{i}$ etc. Set 
\begin{equation}\label{eq-hex}
\begin{aligned}
&a_{k,l}=k, \quad 
b_{k,l}=l, \quad 
c_{k,l}=(1-k^2-l^2)/(2N), \\
&s_{k,l}=\begin{cases}
    +1,
    &\text{if the candidate polygon} \\ &\text{corresponding to the pair $k,l$ is light,}\\
    -1, &\text{otherwise.}
\end{cases}
\end{aligned}
\end{equation}
In particular, $l_{0,0}(x,y)=1/(2N)$.
The resulting candidate polygons form a square grid inside the box (this can be shown directly or using \cite[Lemma~1]{pirahmad2025}), and for $\beta=+\infty$, the patchwork coincides with the digital curve.
For large $\beta$ and $N$, the patchwork is close to its limiting case $\beta=+\infty$ and hence to the given curve; cf.~\cite[Lemma~3.8]{KSS-25}. 
We conclude that any plane curve can be approximated by patchworks.
\journalvsarxiv{
See Theorem B.1 
in the supplementary material.}{See Theorem~\ref{th-representation-property-2D} in Appendix~\ref{sec:proofs}.}

There, we also discuss hexagonal lattices (see Fig.~\ref{fig:hex} on the right) and representation complexity\journalvsarxiv{}{ (see Tab.~\ref{tab-bounds})}.  
We can further optimize
the representation; see Fig.~\ref{fig-patchwork-examples}.

\subsection{3D}
The construction in 3D is completely analogous. We represent a closed surface $S$ as the zero level set of a field $F\colon\mathbb{R}^3 \to \mathbb{R}$
of a certain special form. Compared to a polygonal mesh, this gives a simple 
way to determine if a point $x$ is inside $S$ by computing the sign of $F(x)$.

\begin{definition}[ 
see Fig.~\ref{fig:sphere}] 
A \emph{patchwork field} is a function $\mathbb{R}^3\to\mathbb{R}$ of the form
\begin{equation}
\label{eq-def-3d-patchwork}
\begin{aligned}
F(x)&=
\frac{1}{\beta}
\log\sum_{i=1}^n s_i\exp\left(\beta\left(\langle a_i, x \rangle + c_i\right)\right),
\end{aligned}
\end{equation}
depending on the parameters $n\in\mathbb{Z}_{>0}$, $\beta\in\mathbb{R}_{>0}$, $a_i \in \mathbb{R}^3$, $c_i, s_i \in \mathbb{R}$, where $i=1,\dots,n$. 
A \emph{patchwork} is the surface in $\mathbb{R}^3$ 
given by 
$F(x)=0$. 
The limiting case $\beta\to+\infty$ is 
\begin{equation}\label{eq-def-3d-patchwork-limiting}
    f(x)= \max_{i:s_i > 0} \{\langle a_i, x \rangle + c_i\} - \max_{i:s_i < 0} \{\langle a_i, x \rangle + c_i\}.
\end{equation}
The surface $f(x)=0$ is still called a \emph{patchwork}.
\end{definition}

\begin{figure}
    \centering
    \begin{subfigure}{0.15\textwidth}
        \includegraphics[width=\linewidth]{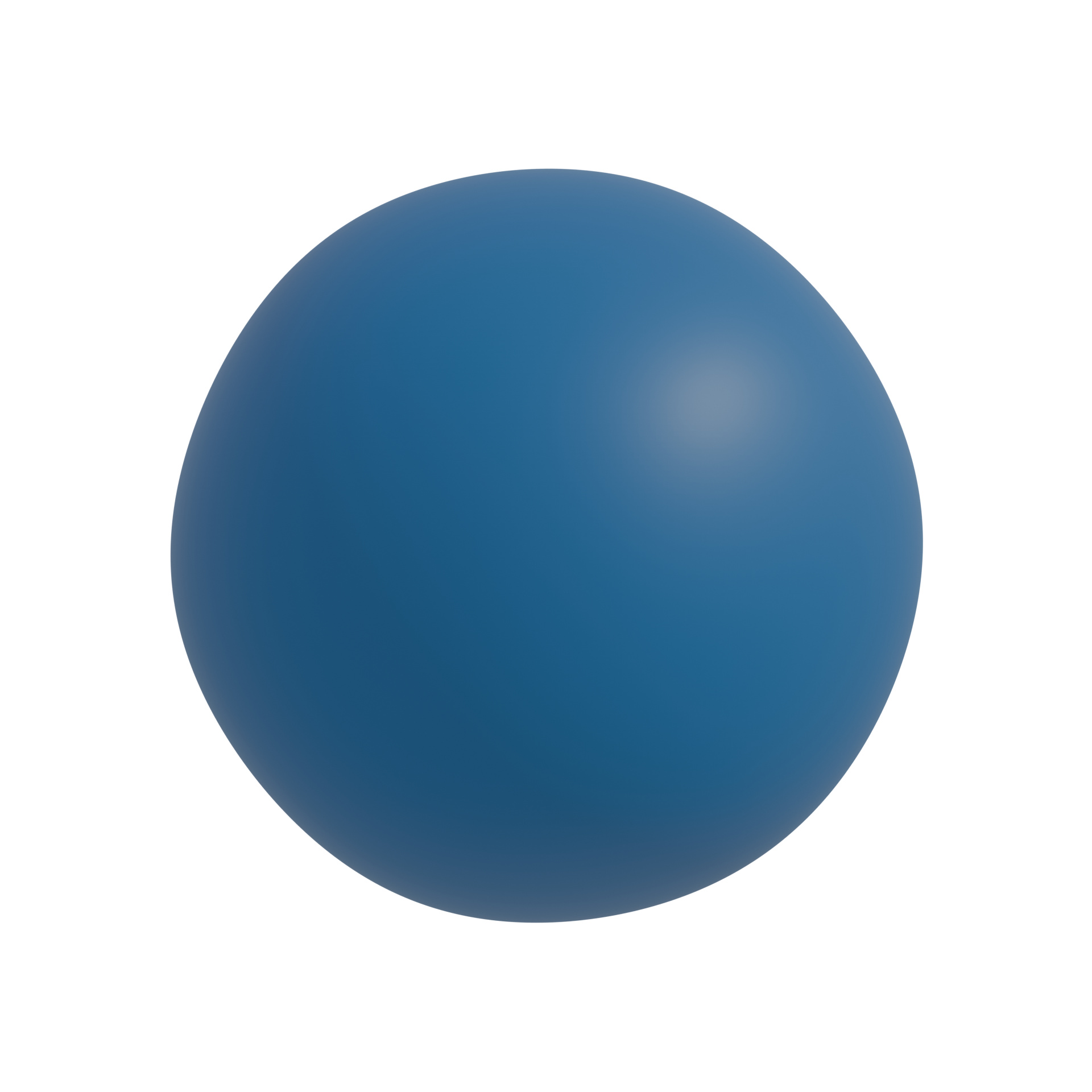}
        \caption{}
    \end{subfigure}
    \begin{subfigure}{0.15\textwidth}
        \includegraphics[width=\linewidth]{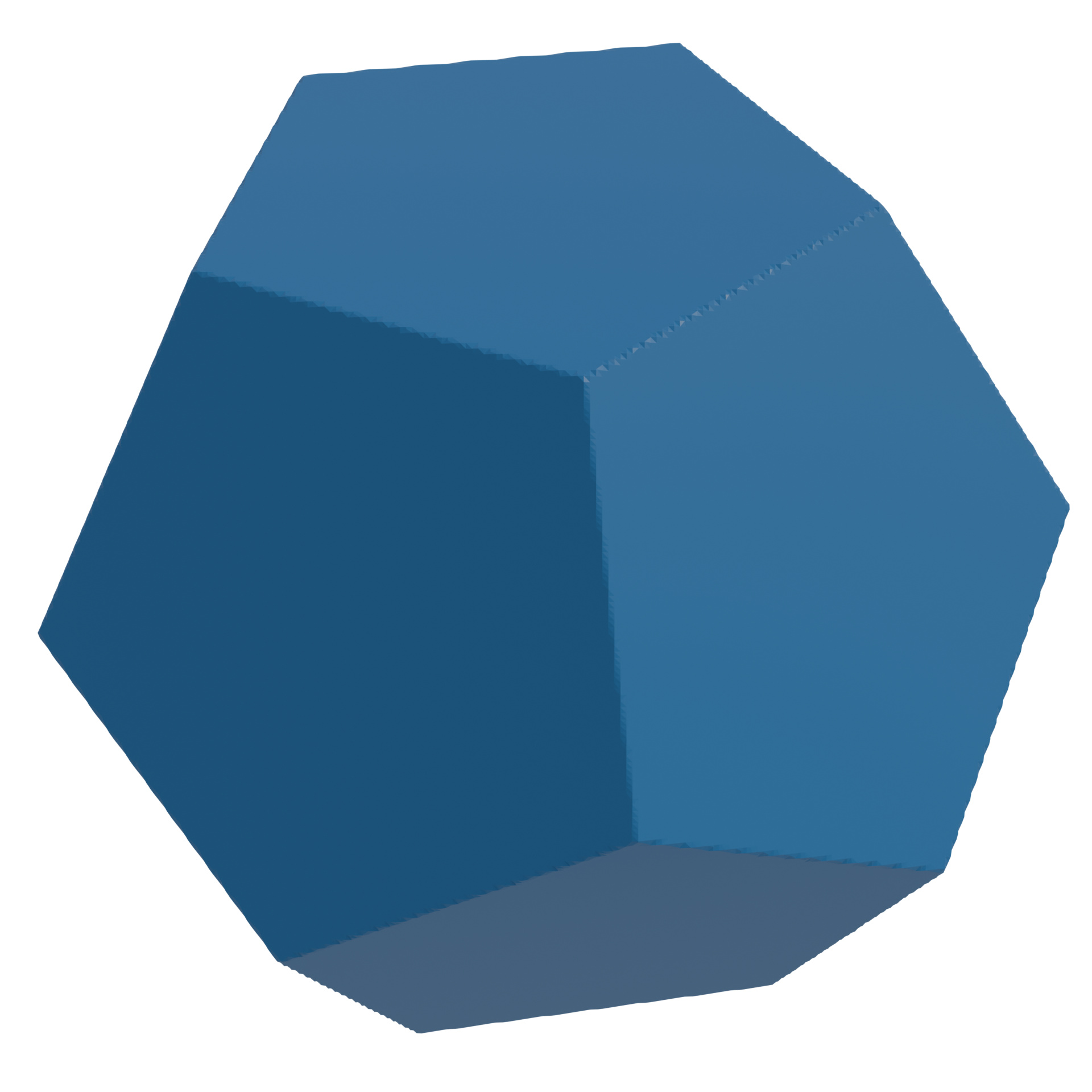}
        \caption{}
    \end{subfigure}\hfill\hfill
    \begin{subfigure}{0.15\textwidth}
        \includegraphics[width=\linewidth]{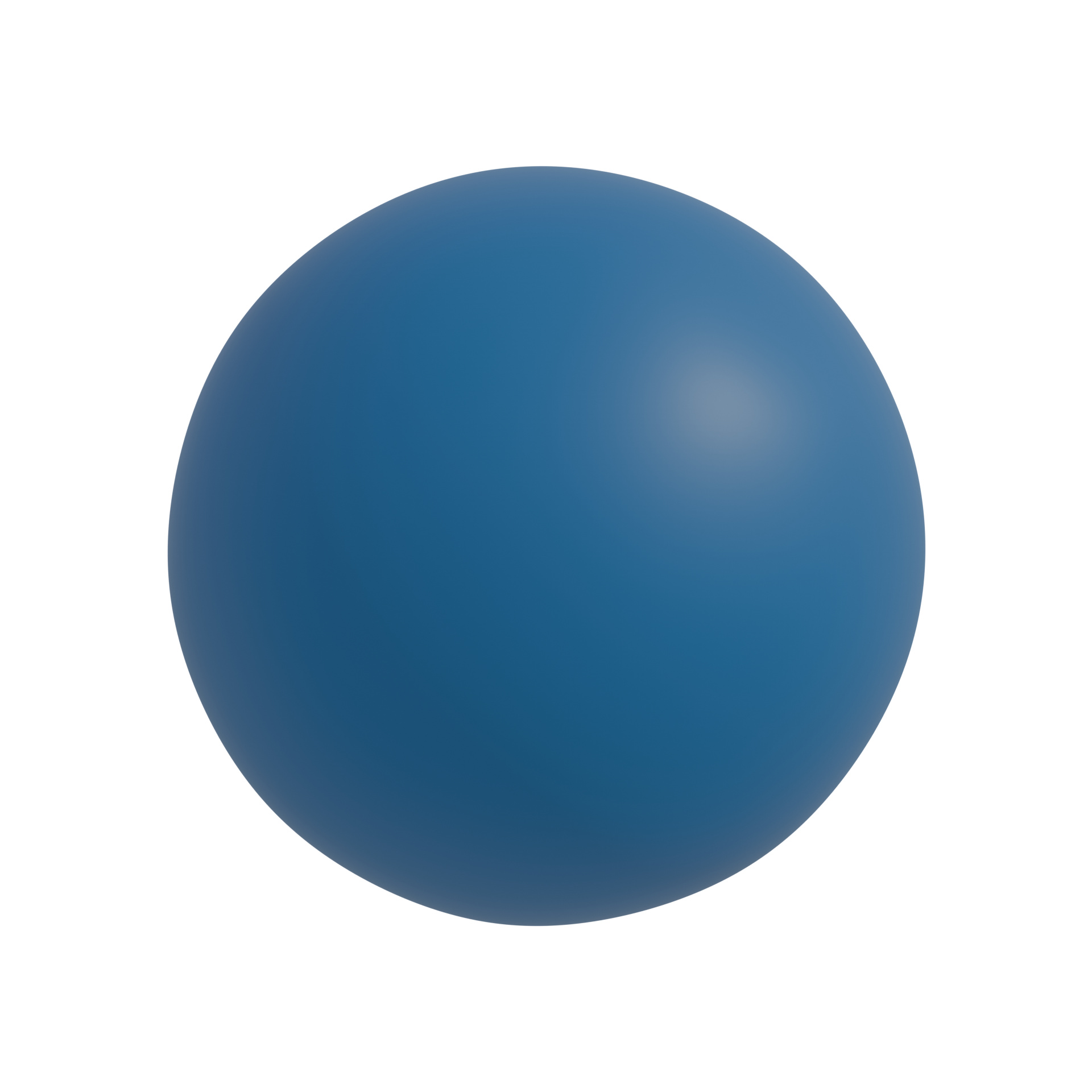}
        \caption{}
    \end{subfigure}
    \begin{subfigure}{0.15\textwidth}
        \includegraphics[width=\linewidth]{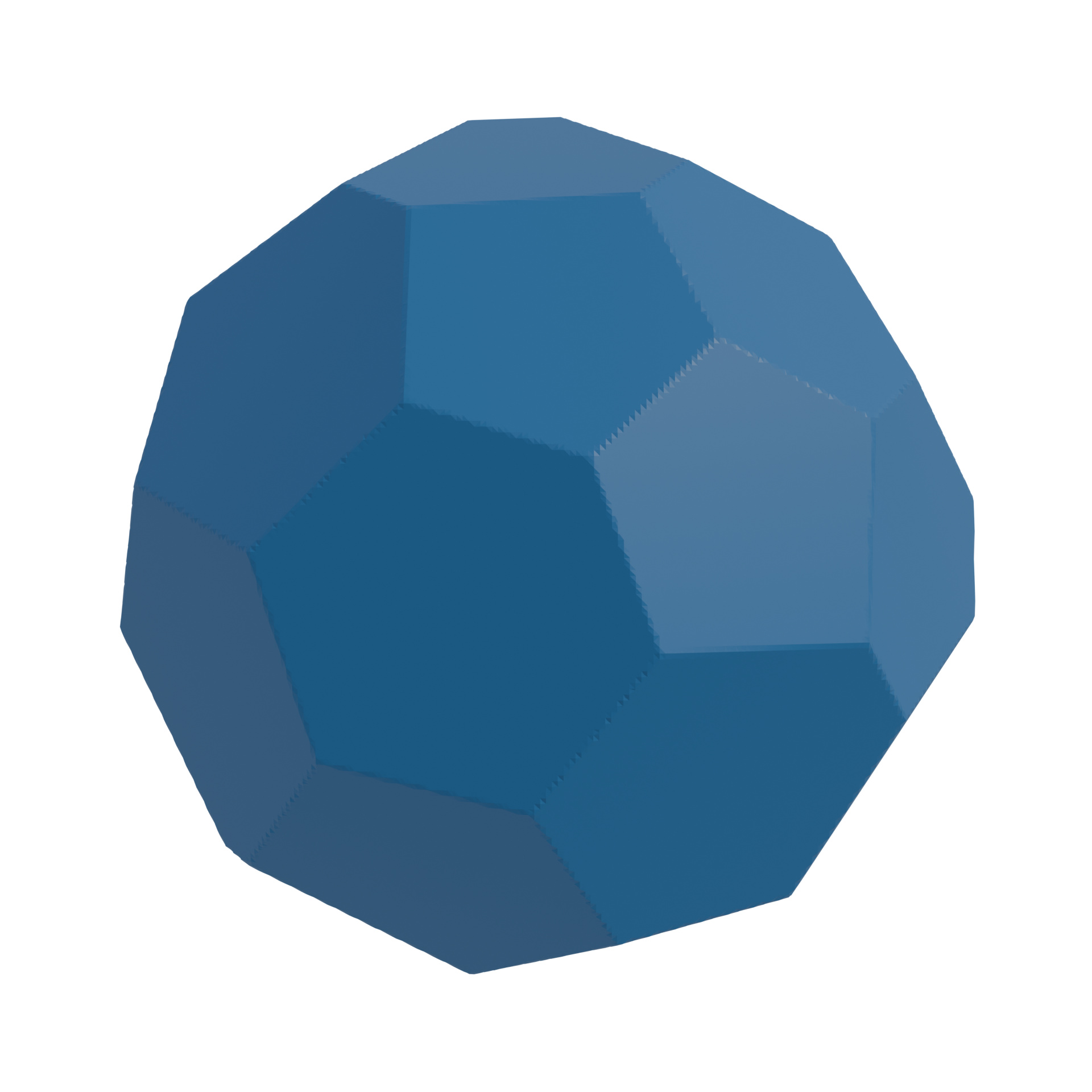}
        \caption{}
    \end{subfigure}\hfill\hfill
    \begin{subfigure}{0.15\textwidth}
        \includegraphics[width=\linewidth]{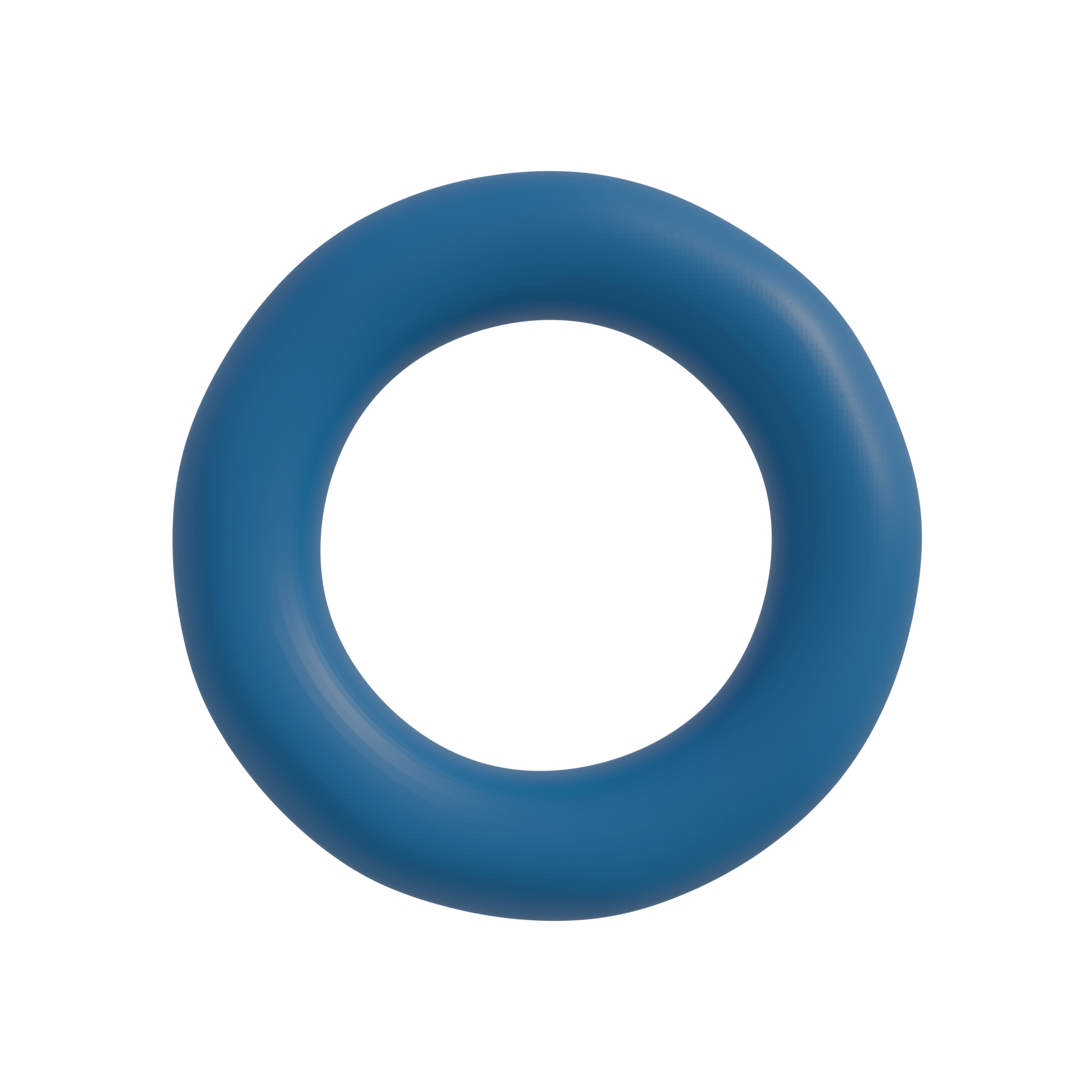}
        \caption{}
    \end{subfigure}
    \begin{subfigure}{0.15\textwidth}
        \includegraphics[width=\linewidth]{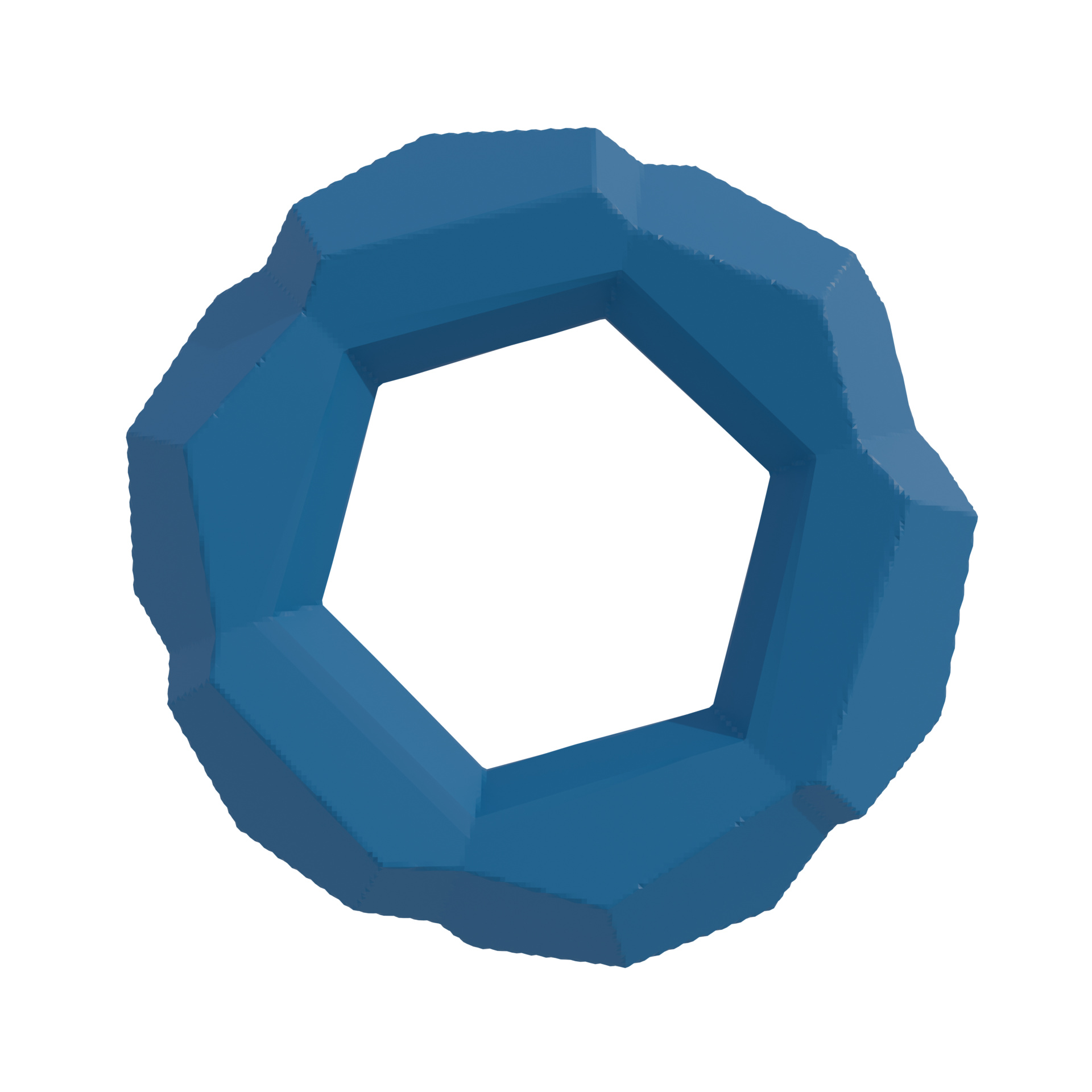}
        \caption{}
    \end{subfigure}
    \caption{Patchworks approximating a sphere (a and c) and a torus (e), and their limiting cases $\beta\to+\infty$ (b, d, f). The patchworks (a, c, e) visually coincide with the shapes, although their width is just $n=13$, $31$, $30$ (left to right). For (a and b), the $12$ design planes contain the faces of the dodecahedron (b), and the $13$th linear function is
    constant. For (e and f), just $30$ design planes
    represent an intricate nonconvex shape. 
    }
    \label{fig:sphere}
\end{figure}

\textbf{Visualization.} 
Introduce linear functions $\mathbb{R}^3\to\mathbb{R}$ 
\begin{equation}\label{eq-design-planes}
    l_i(x)=\langle a_i, x \rangle + c_i. 
\end{equation}
The planes $l_i(x)=0$ (with $l_i(x)\ne\mathrm{const}$) are called \emph{design planes}. The planes $l_i(x)=l_j(x)$ for distinct $i,j\in\{1,\dots,n\}$ are called \emph{equality planes}. The (possibly unbounded) convex polygons given by the conditions 
\begin{equation}
l_i(x)=l_j(x)\ge \max_{k}l_k(x)
\end{equation}
are called \emph{candidate polygons}. \emph{Active polygons} are the candidate polygons such that $s_is_j<0$. The candidate polygons decompose space into convex polyhedra (possibly empty or unbounded) given by the condition $l_i(x)\ge \max_{k}l_k(x)$ for some~$i$. These polyhedra are called \emph{candidate polyhedra}.

For $\beta=+\infty$ and general position functions $l_i(x)$ 
including a positive constant, 
the patchwork is just 
the union of active polygons. For large $\beta$, the patchwork is close to the union. \journalvsarxiv{See Theorem~A.2 
in the supplementary materials.}{See Theorem~\ref{th-tropical-limit-3d} 
in Appendix~\ref{sec:dependence}.} This leads to compact surface representations; 
see Fig.~\ref{fig:sphere}.
Expressive power and complexity bounds of 3D patchworks are established in \journalvsarxiv{Theorem~B.2 
and Table~7 in the supplementary materials.}{Theorem~\ref{th-3d-representation-property} in Appendix~\ref{sec:proofs} and Tab.~\ref{tab-3d-bounds} in Appendix~\ref{sec:representation-complexity}, respectively.}

%% file: fig/hex/hex.tex
\begin{figure}[hbt]
    \centering
    \includegraphics[height=2.5cm]{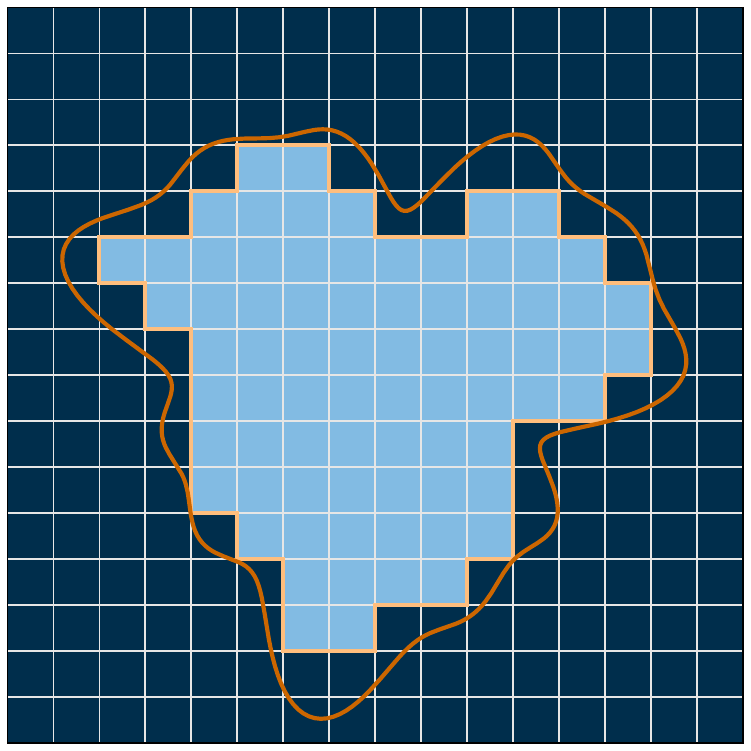}
    \includegraphics[height=2.5cm]{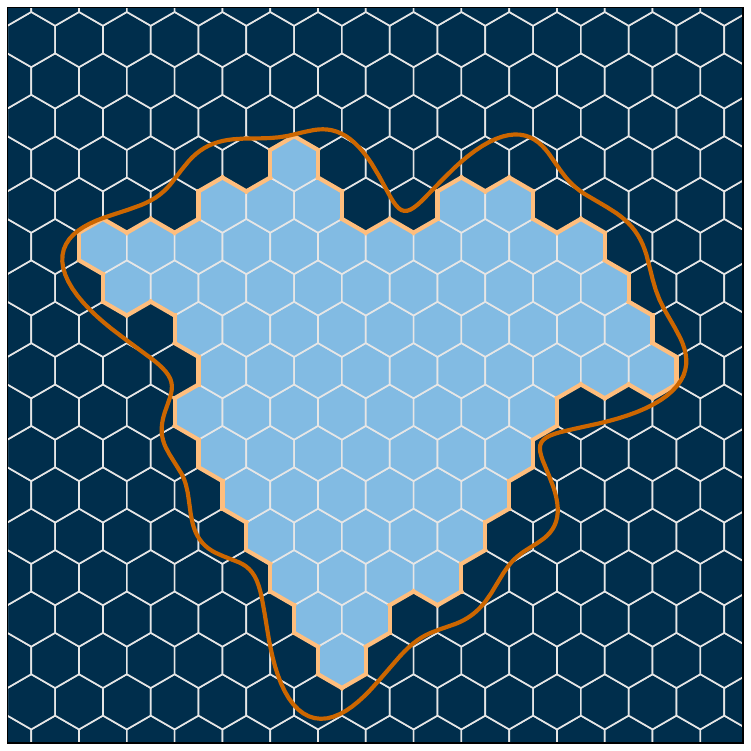}
    \includegraphics[height=2.5cm]{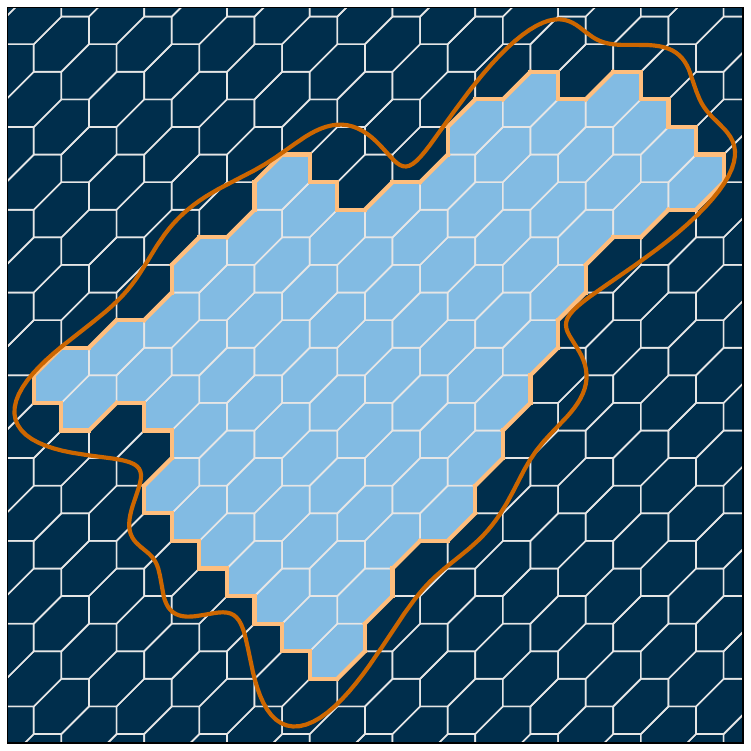}
    \caption{Patchworks (orange) approximating a given curve (red). The construction can employ a square grid (left), a honeycomb lattice (middle), or an affine transformation of the latter (right). The squares/hexagons inside the curve are light, the other ones are dark, and the patchwork separates the colors.}
    \label{fig:hex}
\end{figure}

%% file: sec/4_method_optimization.tex
\section{Optimization}
\label{sec:optimization}
Given an oriented point cloud $\{x_j, n_j\}_{j=1}^m$, where $x_j\in \mathbb{R}^3$ are points and $n_j\in \mathbb{R}^3$ are normal vectors. We aim to fit $F(x)$ such that its zero level set $\Omega_0 =\{x\in\mathbb{R}^3|F(x)= 0\}$ describes the underlying geometry. To this end, we minimize the surface-approximation error and the normal-misalignment penalty:
\begin{equation}
\begin{aligned}
\mathcal{L}_{\text{sur}} &= \frac{1}{m} \sum_{j=1}^m |F(x_j)|
\quad\text{and}\quad
\mathcal{L}_{\text{normal}} = \frac{1}{m} \sum_{j=1}^m \left(1 - \frac{\nabla F(x_j) \cdot n_j}{\|\nabla F(x_j)\|\|n_j\|}\right).
\end{aligned}
\end{equation}

To reduce off-surface parts of $\Omega_0$, we additionally sample $m$ off-surface samples $\{y_k\}_{k=1}^m, y_k \in \mathbb{R}^3 \setminus \Omega_0$ per iteration for a soft occupancy regularization
\begin{equation}
\begin{aligned}
g_{dw}(x) &= 4 (x - 0.5)^2 - 4 |x - 0.5| + 1, \\
\mathcal{L}_{\text{reg}} &= \frac{1}{m} \sum_{k=1}^{m} g_{dw}(\text{sigmoid}(-F(y_k)))^2,
\end{aligned}
\end{equation}
where $g_{dw}$ is the double-well potential \cite{lipman2021phasetransitionsdistancefunctions} that attains minimum 
at $x=0, 1$.

The regularization $\mathcal{L}_{\text{reg}}$ additionally forces some candidate polyhedra to expand, leaving the other ones redundant. We hence introduce $\mathcal{L}_{\text{prune}}$ in Sec.~\ref{subsec:optimization} to progressively eliminate unnecessary ones during the fitting process.

Our total losses are
$$
\mathcal{L}_{\text{total}} = \mathcal{L}_{\text{sur}} + \mathcal{L}_{\text{normal}} + \mathcal{L}_{\text{reg}} + \mathcal{L}_{\text{prune}}.
$$

\subsection{Explicit geometric initialization}
\label{subsec:initialization}

For clarity, we consider patchwork in the following augmented form:
\begin{equation}
\label{eq-augmented-form}
\begin{aligned}
F(x)&=\frac{1}{\beta^{+}}
\log\sum_{i=1}^{n^+} \exp\left(\beta^+(\langle a^{+}_{i}, x\rangle +c^{+}_{i}) + \log{s_i^+}\right) \\
&- \frac{1}{\beta^{-}}
\log\sum_{i=1}^{n^-} \exp\left(\beta^{-}(\langle a^{-}_{i}, x\rangle+c^{-}_{i}) + \log{s_i^-}\right),
\end{aligned}
\end{equation}
where $n^+ + n^- = n$, $a_i^{\pm} \in \mathbb{R}^3$, $c_i^{\pm} \in \mathbb{R}$, $s_i^{\pm}, \beta^{\pm} \in \mathbb{R}_{>0}$. Note that $s_i^{\pm}$ can be losslessly incorporated into $c_i^{\pm}$, and we keep them here only for pruning purposes.

Our structure resembles 2-layer MLPs, with
$\begin{bmatrix}
    \dots & a_i^\pm & \dots
\end{bmatrix}^T$ the weights
and
$\begin{bmatrix}
    \dots & c_i^\pm + \frac{\log s_i^{\pm}}{\beta^\pm} \dots
\end{bmatrix}^T$
the biases, and the Log-Sum-Exponential (LSE) the group activation function. While the soft maximum allows gradient flow through non-maximum functions, gradient truncation, however, can still be substantial, preventing meaningful updates. Therefore, rather than relying on the standard practice of pre-specifying a hidden size and using Kaiming initialization, we propose directly initializing the parameters from the input point cloud.

From an optimization point of view, the Patchwork falls into the broad category of Difference-of-Convex (DC) problems. Specifically, its limiting form is the difference of two convex functions. Inspired by Piecewise Linear Regression (PLR) \cite{siahkamari2020piecewise} in DC literature, we assign each input point $x_j$ two linear functions (i.e., set $n^+ = n^- = m$) and force the two maxima at $x_j$ to be attained with their corresponding coefficients $\{a_j^+, c_j^+, a_j^-, c_j^-\}$. Formally,
\begin{equation} \label{eq-PLR-assumptions}
\begin{aligned}
&f(x) = \max_{i} \{\langle a_i^+, x \rangle + c_i^+\} - \max_{i} \{\langle a_i^-, x \rangle + c_i^-\}, \\
& \arg \max_{i} \{\langle a_i^+, x_j\rangle + c_i^+\} = \arg \max_{i} \{\langle a_i^-, x_j\rangle + c_i^-\}=j.
\end{aligned}
\end{equation}
The two pointwise global-maximum constraints significantly reduce the solution space, yet the formulation remains a DC, a universal approximator. 

To further reduce the solution space, we assume $f(x)$ at the initialization approximates the signed distance function. Inspired by proximal DC decomposition \cite{sun2024understandeffectivenessshortcutslens}, we perform the following construction
\begin{equation}\label{eq-hpm}
h^+(x) = \tfrac{1}{2} \rho \|x\|^2 + \text{sdf}(x), \ h^-(x) = \tfrac{1}{2} \rho \|x\|^2, \ f(x) \approx h^+(x) - h^-(x),
\end{equation}
where $\rho \in \mathbb{R}$ is some 
parameter large enough to make $h^+(x)$ convex.

Conviently, with oriented surface samples $\{x_j, n_j\}$, we can set $f(x_j)=\text{sdf}(x_j) = 0$ and $\nabla f(x_j)=\nabla \text{sdf}(x_j) = n_j$, yeilding the closed-form solution
\begin{equation}\label{eq-geometric-initialization}
\begin{aligned}
a^+_j &= \nabla h^+(x_j) = \rho x_j + n_j, &
c_j^+ &= h^+(x_j) - \langle a^+_j, x_j \rangle = -\tfrac{1}{2} \rho \|x_j\|^2 -\langle n_j, x_j \rangle, \\
a^-_j &= \nabla h^-(x_j) = \rho x_j, &
c_j^- &= h^-(x_j) - \langle a^-_j, x_j \rangle = -\tfrac{1}{2} \rho \|x_j\|^2.
\end{aligned}
\end{equation}
We defer the detailed derivation to \journalvsarxiv{the supplementary}{Appendix~\ref{sec:geometric-initialization}}. In practice, we initialize with $\rho=200$, $s_i^{\pm}=1$, $\beta^\pm=75$ across all our experiments. Notice the norm of $a_j^\pm$ scales with $\rho$, making updating its direction stiff. So we apply weight normalization to disentangle its magnitude and direction gradient.

\subsection{Progressive pruning}
\label{subsec:optimization}
We propose the following loss to progressively eliminate non-contributing linear functions
\begin{equation}
\label{eq-loss-prune}
\begin{aligned}
w^{\pm}_i(x) &= \text{sg}\left(\frac{\beta^{\pm}(\langle a^{\pm}_{i}, x\rangle +c^{\pm}_{i}) + \log s^{\pm}_i}{\sum_{l=1}^{n^{\pm}} \left(\beta^\pm(\langle a^{\pm}_{l}, x\rangle +c^{\pm}_{l}) + \log s^{\pm}_l\right)}\right), \\
\mathcal{L}_{\text{prune}}^\pm &= \frac{1}{n^{\pm}} \sum_{i=1}^{n^{\pm}} s_i^{\pm} + \frac{1}{m} \sum_{j=1}^m \text{ReLU}\left(1 - \sum_{i=1}^{n^{\pm}} s_i^\pm \cdot w^{\pm}_i(x_j)\right), \\
\mathcal{L}_{\text{prune}} &= \mathcal{L}_{\text{prune}}^+ + \mathcal{L}_{\text{prune}}^-,
\end{aligned}
\end{equation}
where $\text{sg}$ is the stop gradient operator, and $w^{\pm}_i$ can be viewed as a  typical soft maximum. The first term encourages all $s^{\pm}_i$ to be diminished, while the second term maintains the magnitude of $s^{\pm}_i$ that contributes to candidate polygons.

For every fixed iterations, we detect linear functions with $s^{\pm}_i < 10^{-2}$, and disable them softly by setting $s^{\pm}_i = 10^{-5}$.

\subsection{Fully-fused kernel acceleration}
For $n$ linear functions and $m$ samples, a naive computation of LSE in \eqref{eq-augmented-form} or the soft maximum in \eqref{eq-loss-prune} would both materialize the $\mathbb{R}^{n \times m}$ dot product matrix, resulting in $\Theta(nm)$ memory complexity. Inspired by the FlashAttention \cite{dao2022flashattention} style kernel implementation in efunc \cite{zhang2025efunc}, we fuse the two operations into a single kernel, while reducing the memory complexity to $\Theta(m)$. For $n=m=16384$, our fused implementation reduces the intermediate memory footprint of the two operators from 2.15 GB to 65.8 KB in the forward pass, and from 2.14 GB to 394 KB in the backward pass.

\subsection{Implementation details}
We normalize shapes in datasets to $[-1, 1]$ and uniformly sample $m$ surface points with face normals, with $m$ depending on the experiment setup. For each iteration, we additionally sample $m$ off-surface points for regularization. For each shape, we run 10k iterations with a batch size of 16384 using the Adam optimizer with a learning rate of $10^{-3}$ on a single RTX3090 GPU (24 GB VRAM). We perform pruning every 2K iterations. We implement training code using JAX~\cite{jax2018github} and fused kernel using Triton \cite{triton2019}. The fitting takes around 1.5 minutes for $m=1024$ and 5 minutes for $m=16384$. Our full implementation is available at \url{https://github.com/Ankbzpx/patchwork-experiment}.

%% file: sec/6_results.tex
\begin{table}[t]
  \caption{Quantitative results on the roof modeling dataset. The bold text indicates the best score. To approximate the SPSR parameter count, we voxelize input clouds over sparse octrees of the same depth, then average the number of nodes. Ours is the average parameter size after the progressive pruning. CH$=$Chamfer; HD$=$Hausdorff; FS$=$F-score. }
  \label{tab:roof}
  \centering
  \begin{tabular}{lclccc}
    \toprule
    Method & CH $\downarrow$ & HD $\downarrow$ & FS $\uparrow$ & $\#$ Params \\
    \midrule
    MC & 6.79 & 8.68 & 49.93 & 8000 \\
    RFTA & 3.62 & 5.88 & 36.83 & 8000 \\
    SPSR & 5.55 & 7.20 & 19.66 & 8806 \\
    VoroMesh & 17.27 & 16.81 & 7.95 & 8000 \\
    PoNQ & 8.01 & 13.44 & 53.61 & 8800 \\
    SIREN & 22.57 & 77.28 & 51.52 & 8641 \\
    Ours init & 18.11 & 62.52 & 60.16 & 8192 \\
    Ours & \textbf{1.42} & \textbf{1.45} & \textbf{76.88} & 1301 \\
    \bottomrule
  \end{tabular}
\end{table}

\section{Results}
\label{sec:results}
To demonstrate our compactness and expressiveness, we present a comparison of competing representations of the matching parameter size with our \textit{overparameterized} initialization. In addition, we ablate our design choices, present a discussion, and leave future work to \journalvsarxiv{the supplementary materials}{Appendix~\ref{sec:future-work}}.

\subsection{Dataset}
\paragraph{Roof modeling} As explained earlier, the main advantage of our method is its compactness—using only a few planes, we can assemble complex shapes, with the number of planes corresponding exactly to the candidate polyhedra. Therefore, we compare our representation on the Roof-Image dataset \cite{ren2021intuitiveefficientroofmodeling}, consisting of 3K unique building roof meshes, each with a structure composed of simple yet diverse and topologically complex polygons. We randomly select $200$ cases, normalize them to the range $[-1, 1]$, and sample $m=1024$ points without further processing.

\paragraph{ABC and Thingi10k} To further evaluate our expressiveness in handling more general shapes, we additionally choose two widely used datasets, ABC \cite{koch2019abc} and Thingi10k \cite{zhou2016thingi10k}, the former is a large collection of CAD models, while the latter consists of 10K more general shapes. We pick 100 models from each, following the common test split \cite{Erler2020Points2Surf, huang2023nksr}, normalize them to $[-1, 1]$ and set sample size to $m=16384$.

\input{fig/fig_5/figure_05}

\subsection{Comparisons}
\label{subsec:comparison}
We compare 3D representation with other common, topology-agnostic ones in deep learning, namely regular grids, point clouds, and MLPs.
For fairness, we configure those to match the parameter sizes of our \textit{overparameterized} initialization. Consequently, we ignore $s_i^{\pm}$, which is used solely for pruning, and count our parameter size as $8 m$, where $m$ is the surface sample size.

\paragraph{Regular grid} Given an SDF grid, the de facto way of extracting the underlying surface is Marching Cube (MC) \cite{lorensen1987mc}, yet it overlooks the SDF's spherical property \cite{sellán2023reachspherestangencyawaresurface}, hence underutilizes the stored information. Reach for the Arc (RFTA) \cite{Sellan2024RFTA} explicitly exploits this property and achieves the state-of-the-art performance at recovering high-quality surfaces from discrete grids. We compare both MC and RFTA on a fixed-dimensional grid of $20^3$ for the Roof modeling dataset and $50^3$ for ABC and Thingi10k, with SDFs calculated from ground-truth data to ensure accuracy and fairness.

\paragraph{Point cloud} We compare the axiomatic method of surface reconstruction from oriented point cloud, the Screened Poisson Surface Reconstruction (SPSR) \cite{kazhdan2013spsr}. We estimate its memory complexity and set depth to $6$ for the Roof modeling dataset and depth to $8$ for ABC and Thingi10k, while using the same input point clouds as ours. VoroMesh \cite{maruani2023voromesh} fits double covering point clouds as Voronoi generators and extracts crossing boundaries as a reconstructed mesh. Since they initialize generators using voxel centers, we use $20^3$ for the Roof modeling dataset and $50^3$ for ABC and Thingi10k. PoNQ \cite{maruani2024ponq} associates each oriented sample with a $\mathbb{R}^{4 \times 4}$ quadric matrix to guide triangulation. To account for extra parameters per sample, we set $m=400$ for the Roof modeling dataset and $m=6000$ for ABC and Thingi10k.

\paragraph{MLP} Neural implicit representation uses coordinate MLPs to encode SDFs as continuous functions, hence requiring high expressiveness to model discrete jumps such as sharp edges. We choose SIREN \cite{sitzmann2020implicit} as a baseline as it facilitates learning high-frequency signals. We use $2$ hidden layers of $64$ units for the Roof modeling dataset (i.e., $2$ weight matrices in $\mathbb{R}^{64 \times 64}$), and with $2$ hidden layers of $256$ units for ABC and Thingi10k (i.e., $2$ weight matrices in $\mathbb{R}^{256 \times 256}$).

\paragraph{Metrics} For quantitative evaluation, we sample 1M points on both ground truth and extracted surface and report Chamfer Distance ($\times 10^{-3}$), Hausdorff distance ($\times 10^{-2}$), and F-score with a cutoff $0.1\%$ maximum dimension of the bounding box. For our method and MLP, we use MC of resolution $512^3$ to extract the underlying surfaces.

\begin{table}[t]
  \caption{Quantitative results on ABC / Thingi10K. The bold text indicates the best score; the underlined text indicates the second-best. To approximate the SPSR parameter count, we voxelize input clouds over sparse octrees of the same depth, then average the number of nodes. Ours is the average parameter size after the progressive pruning. CH$=$Chamfer; HD$=$Hausdorff; FS$=$F-score.}
  \label{tab:abc_thingi10k}
  \centering
  \begin{tabular}{lcccc}
    \toprule
    Method & CH $\downarrow$ & HD $\downarrow$ & FS $\uparrow$ & $\#$ Params \\
    \midrule
    MC & 4.61 / 3.56 & 6.00 / 6.78 & 48.10 / 43.82 & 125000 \\
    RFTA & 3.82 / 12.92 & 5.60 / 7.15 & 53.09 / 53.51 & 125000 \\
    SPSR & \underline{2.49} / \textbf{2.02} & \textbf{3.77} / \textbf{4.32} & 58.01 / \textbf{63.21} & 150975 \\
    VoroMesh & 5.95 / 3.62 & 6.94 / 6.65 & 36.99 / 42.17 & 125000 \\
    PoNQ & 3.30 / 2.87 & 5.82 / 6.20 & \underline{58.59} / 51.79 & 132000 \\
    SIREN & 2.79 / 3.18 & 25.32 / 26.24 & 57.25 / 61.28 & 132865 \\
    Ours init & 5.53 / 11.44 & 21.53 / 40.69  & 54.10 / 56.68 & 131072 \\
    Ours & \textbf{2.48} / \underline{2.39} & \underline{4.93} / \underline{5.88} & \textbf{60.83} / \underline{61.59} & 6262 \\
    \bottomrule
  \end{tabular}
\end{table}

\subsection{Discussion}
For the roof modeling dataset, as illustrated in Fig.~\ref{fig:comparison_roof_modeling} and Tab.~\ref{tab:roof}, our method surpasses all others by a large margin while requiring only a fraction of as many parameters. Interestingly, our initialization, although exhibiting some surplus parts (as measured by Chamfer and Hausdorff distances), more closely resembles the target shape than all other methods (as measured by F-score).

For ABC and Thingi10 (Fig.~\ref{fig:comparison_abc_thingi10k}, Tab.~\ref{tab:abc_thingi10k}), our method achieves the best for ABC in terms of Chamfer distance and F-score, and the second best for all other metrics, while requiring only half the parameters. For Thingi10k, our performance is relatively lower because we have two failure cases (with F-scores close to $0$), which we discuss in detail in \journalvsarxiv{the supplementary materials.}{Appendix~\ref{sec:failure-cases}. See more results in Figs.~\ref{fig:supp_comparison_roof_modeling}--\ref{fig-suppl-2d-examples}.}

\input{fig/fig_6/figure_06}

\begin{table}[H]
  \caption{Ablation study of our explicit geometric initialization on roof modeling / ABC / Thingi10k datasets.}
  \label{tab:ablate_init}
  \centering
  \begin{tabular}{lccccc}
    \toprule
     & Chamfer $\downarrow$ & & Hausdorff $\downarrow$ & & F-score $\uparrow$ \\
    \midrule
    w.o. \quad & 14.33 / 61.73 / 59.00 & & 18.20 / 40.24 / 41.46 & &29.87 / 8.16 / 3.98 \\
    w. & \textbf{1.42} / \textbf{2.48} / \textbf{2.39} & \quad\qquad & \textbf{1.45} / \textbf{4.93} / \textbf{5.88} &  \quad\qquad &\textbf{76.88} / \textbf{60.83} / \textbf{61.59} \\
    \bottomrule
  \end{tabular}
\end{table}

\subsection{Ablation studies}

\paragraph{Explicit geometric construction} We ablate the effectiveness of geometric 
w.r.t. Kaiming initialization (Tab. \ref{tab:ablate_init}). Without geometric initialization, the soft maximum renders most parameters dead, preventing fitting of any meaningful shapes.

\begin{table}
  \caption{Ablation study of $\mathcal{L}_{\text{sur}}$, $\mathcal{L}_{\text{normal}}$, and $\mathcal{L}_{\text{reg}}$ on the roof modeling dataset.}
  \label{tab:loss}
  \centering
  \begin{tabular}{lccclccc}
    \toprule
     $\mathcal{L}_{\text{sur}}$ & $\mathcal{L}_{\text{normal}}$ & $\mathcal{L}_{\text{reg}}$ & Chamfer $\downarrow$ & Hausdorff $\downarrow$ & F-score $\uparrow$ \\
    \midrule
    \ding{51} & \ding{55} & \ding{55} & 132.68 & 115.16 & 3.36 \\
    \ding{51} & \ding{51} & \ding{55} & 57.49 & 83.27 & 39.92 \\
    \ding{51} & \ding{55} & \ding{51} & 1.56 & 2.06 & 76.79 \\
    \ding{55} & \ding{55} & \ding{51} & 496.99 & 114.51 & 0.05 \\
    \ding{55} & \ding{51} & \ding{51} & 503.22 & 132.20 & 0.09 \\
    \ding{51} & \ding{51} & \ding{51} & \textbf{1.42} & \textbf{1.45} & \textbf{76.88} \\
    \bottomrule
  \end{tabular}
\end{table}

\paragraph{Losses} We further ablate our three losses, $\mathcal{L}_{\text{sur}}$, $\mathcal{L}_{\text{normal}}$, and $\mathcal{L}_{\text{reg}}$. As shown in Tab.~\ref{tab:loss}, both  $\mathcal{L}_{\text{sur}}$ and $\mathcal{L}_{\text{reg}}$ are essential for bounding the geometry, while $\mathcal{L}_{\text{normal}}$ help improve accuracy.

\paragraph{Progressive Pruning}
Lastly, we ablate the effect of progressive pruning in Tab.~\ref{tab:prune}. Pruning diminishes the weights of covered parts, forcing the optimizer to attend to contributing linear functions and thereby improving performance. Moreover, as shown in the Sec. \ref{subsec:comparison}, pruning substantially reduces our parameter count.

\begin{table}
  \caption{Ablation study of progressive pruning on the roof modeling dataset.}
  \label{tab:prune}
  \centering
  \begin{tabular}{lclcc}
    \toprule
    N & Chamfer $\downarrow$ & Hausdorff $\downarrow$ & F-score $\uparrow$ \\
    \midrule
    w.o. & 1.69 & 1.90 & 76.71 \\
    w. & \textbf{1.42} & \textbf{1.45} & \textbf{76.88} \\
    \bottomrule
  \end{tabular}
\end{table}

%% file: fig/fig_5/figure_05.tex

\begin{figure*}[t]
\centering

\cmprow{jing_houses}{CG10_500_040066_0034}
\cmprow{jing_houses}{CG10_500_044065_0005}
\cmprow{jing_houses}{CG10_500_042065_0033}
\vspace{-10pt}\hrule\vspace{3pt}

\cmplabels

  \caption{Qualitative results on the roof modeling dataset. Our method has clear advantages both in visual quality and compactness.}
  \label{fig:comparison_roof_modeling}

\end{figure*}

%% file: fig/fig_6/figure_06.tex
\renewcommand{\figpath}{fig/fig_6}

\setlength{\cmpimgw}{0.15\linewidth}

\setlength{\cmpspace}{\dimexpr(\linewidth - 9\cmpimgw)/8\relax}

\setlength{\cmplabelw}{\dimexpr\linewidth/9\relax}
\renewcommand{\cmplabel}[1]{\parbox[c]{\cmplabelw}{\centering\scriptsize #1}}

\renewcommand{\cmpimg}[3]{%
  \includegraphics[width=\cmpimgw]{\figpath/#1/#2/\cmpres/#3.jpg}%
}

\renewcommand{\cmprow}[2]{%
  \cmpimg{#1}{#2}{comparison_mc}\cmpsep
  \cmpimg{#1}{#2}{comparison_rtfa}\cmpsep
  \cmpimg{#1}{#2}{comparison_spsr}\cmpsep
  \cmpimg{#1}{#2}{comparison_voromesh}\cmpsep
  \cmpimg{#1}{#2}{comparison_ponq}\cmpsep
  \cmpimg{#1}{#2}{comparison_siren}\cmpsep
  \cmpimg{#1}{#2}{ablation_geo_init}\cmpsep
  \cmpimg{#1}{#2}{ours2}\cmpsep
  \cmpimg{#1}{#2}{gt}
}

\renewcommand{\cmplabels}{%
  \cmplabel{MC}%
  \cmplabel{RTFA}%
  \cmplabel{SPSR}%
  \cmplabel{VoroMesh}%
  \cmplabel{PoNQ}%
  \cmplabel{SIREN}%
  \cmplabel{Ours Init}%
  \cmplabel{Ours}%
  \cmplabel{GT}%
}

\begin{figure*}[t]
\centering

\cmprow{abc}{00011602_c087f04c99464bf7ab2380c4}
\cmprow{thingi10k}{316358}
\vspace{-10pt}\hrule\vspace{3pt}

\cmplabels

\caption{Qualitative results on the ABC and Thingi10k datasets. Our method can compactly represent CAD and general shapes.}
\label{fig:comparison_abc_thingi10k}

\end{figure*}

%% file: sec/5_conclusions.tex
\section{Conclusions}
\label{sec:conclusions}

We propose a highly compact shape representation for 3D shapes.
It is designed to be compatible with neural networks and is particularly effective for representing coarse, man-made shapes and CAD models. 
We bring in \textit{Viro's patchworking} construction from pure mathematics and refine it by introducing arbitrary real weights, thus achieving a fixed number of real parameters in the representation.
Our results show significant improvements compared to current SOTA 3D shape representations.

%% file: sec/6_appendix.tex
\appendix

\section{Dependence on the parameters}\label{sec:dependence}

In this supplementary material, we discuss the dependence of the patchwork on the parameters and clarify the technical assumptions required for Section~\ref{sec:method} above. The derivations there are valid in general position, that is, unless the parameters satisfy certain specific relations. We show that a general position patchwork $f(x,y)=0$ is a limiting case of $F(x,y)=0$ as $\beta\to+\infty$; see Fig.~\ref{fig-patchwork}.

More precisely, under notation~\eqref{eq-def-patchwork-limiting}--\eqref{eq-design-lines}, we say that $a_i,b_i,c_i, s_i\in \mathbb{R}$ are \emph{general position parameters including a positive constant}, if the following conditions hold:
\begin{itemize}
    \item for some index $i$, we have $l_i(x,y)=\mathrm{const}>0$,
    $s_i\ne 0$;
    \item for distinct indices $i$, the 
    functions $l_i(x,y)$ are distinct;
    \item for any triple of distinct indices $i,j,k$, the equation $l_i(x,y)=l_j(x,y)=l_k(x,y)$ has finitely many solutions;
    \item for any four distinct indices $i_1,\dots,i_4$, the equation $l_{i_1}(x,y)=\dots=l_{i_4}(x,y)$ has no solutions.
\end{itemize}
By the argument in Section~\ref{sec:method}, for general position parameters including a positive constant, the patchwork $f(x,y)=0$ is the union of active segments. Here, none of the 
conditions 
can be dropped. For instance, for $l_1(x,y)=x$, $l_2(x,y)=y$, $l_3(x,y)=1$, $l_4(x,y)=(x+y+1)/3$, $s_1=s_2=s_3=-s_4=1$ (violating the last condition), 
the set $f(x,y)=0$ consists of a single point $(1,1)$, whereas 
there are no active segments.
The following theorem holds.

\begin{theorem}[Tropical limit] \label{th-tropical-limit}
Under notation~\eqref{eq-def-patchwork}--\eqref{eq-design-lines}, 
for any general position parameters including a positive constant,
the Hausdorff distance between the sets given by $F(x,y)=0$ and $f(x,y)=0$ tends to zero as $\beta\to+\infty$. 
\end{theorem}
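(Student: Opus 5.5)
The plan is to avoid the logarithm and work with the continuous function
\[
G_\beta(x,y)=\sum_{i=1}^n s_i e^{\beta l_i(x,y)}-1=P_\beta(x,y)-N_\beta(x,y)-1,
\]
where $P_\beta$ is the sum over $s_i>0$ and $N_\beta$ is the sum of $|s_i|e^{\beta l_i}$ over $s_i<0$. Then $F(x,y)=0$ if and only if $G_\beta(x,y)=0$, and this formulation also covers points where the sum is nonpositive and $F$ is undefined. The positive constant $l_{i_0}\equiv c>0$ enters in one way: $M(x,y):=\max_k l_k(x,y)\ge c$ everywhere, so $\max(P_\beta,N_\beta)\ge \min_i|s_i|\,e^{\beta c}$. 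Hence the constant $-1$ is negligible relative to the dominant exponential, uniformly on the whole plane. Throughout, $C$ denotes a constant depending only on $n$ and the $s_i$.

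\textbf{Step 1: zeros of $F$ lie near zeros of $f$.} Write $m_\pm$ for the two maxima in~\eqref{eq-def-patchwork-limiting}, with the convention $m_\pm=-\infty$ if the corresponding index set is empty. The elementary bounds
\[
m_+ + \tfrac1\beta\log\min|s_i| \;\le\; \tfrac1\beta\log P_\beta \;\le\; m_+ + \tfrac1\beta\log(n\max|s_i|),
\]
and the analogous ones for $N_\beta$, do the work. At a zero of $G_\beta$ we have $P_\beta=1+N_\beta$, so $\max(m_+,m_-)=M\ge c$ forces $P_\beta\ge e^{\beta c}\min|s_i|$ and therefore $N_\beta\ge P_\beta-1$ is also huge. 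Consequently $\bigl|\tfrac1\beta\log P_\beta-\tfrac1\beta\log N_\beta\bigr|\le \tfrac1\beta\log(1+1/N_\beta)\to0$, which gives $|f|\le C/\beta$ on $\{F=0\}$, uniformly. It remains to convert this into a distance bound. The function $f$ is piecewise affine with finitely many polyhedral cells: on each cell $\{l_i\ge l_k\ \forall k\}\cap\{l_j\ge l_k\ \forall k\}$ we have $f=l_i-l_j$. On each cell I will apply a Hoffman-type error bound, $\operatorname{dist}(x,\{f=0\}\cap\text{cell})\le K|f(x)|$ when that zero set is nonempty. When it is empty, linear programming shows that $|f|$ attains a positive minimum on the closed cell, so small $|f|$ is impossible there for large $\beta$. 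Taking the maximum of $K$ over finitely many cells yields $\operatorname{dist}(x,\{f=0\})\le C'/\beta$.

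\textbf{Step 2: zeros of $f$ lie near zeros of $F$.} By the argument in Section~\ref{sec:method} and general position, $\{f=0\}$ is the union of active intervals, on each of which $f=l_i-l_j$ with $s_is_j<0$ and $\nabla(l_i-l_j)\neq0$, since the $l_i$ are distinct. First take a point $p$ on an active interval at distance at least $\eta$ from its endpoints. On the transversal segment through $p$ of length $2\eta'$, with $\eta'\ll\eta$, only $l_i$ and $l_j$ attain the maximum. So $G_\beta$ is dominated by $s_ie^{\beta l_i}$ on one end and by $s_je^{\beta l_j}$ on the other, with relative errors $e^{-\beta\delta}$ and the $-1$ term negligible. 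Therefore $G_\beta$ changes sign and has a zero within $\eta'$ of $p$ once $\beta$ is large. The separation $\delta$ between the competing maxima and the other $l_k$ is constant along unbounded rays, because it is an affine function with the ray's recession behaviour. This makes the estimate uniform even on unbounded intervals. Points within $\eta$ of a vertex are handled by moving to a nearby interior point of the same interval. Letting $\eta,\eta'\to0$ slowly with $\beta$ finishes the proof.

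The main obstacle is uniformity over the unbounded plane in Step 1, i.e.\ that small $|f|$ really forces closeness to $\{f=0\}$ with no escape to infinity. I plan to handle it with the per-cell Hoffman bound and the attained-minimum argument above. The counterexample given before the theorem shows that it is exactly general position (no four $l_i$ equal) that rules out isolated zeros of $f$ not lying on active intervals.
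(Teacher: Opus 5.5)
Your proof is correct and takes a genuinely different route from the paper's. The paper proves a single sign statement. Let $U$ be the $\varepsilon$-neighbourhood of $\{f=0\}$. Then on each candidate polygon $P_i$ minus $U$, the continuous function $e^{\beta F}-1=\sum_k s_k e^{\beta l_k}-1$ has the sign of $s_i$ once $\beta$ exceeds the explicit threshold~\eqref{eq-beta-bound}. The input this needs is $|f|>\delta(\varepsilon)$ on $\mathbb{R}^2\setminus U$. The paper gets it from compactness on a box containing all intersections of equality lines, plus monotonicity of the affine function $f$ along rays outside the box. The sign pattern then gives $\{F=0\}\subset U$ at once. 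The reverse inclusion is only implicit in the word ``clearly'': it comes from the sign change across each active interval.

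You run the first half contrapositively. The log-sum-exp sandwich gives $|f|\le C/\beta$ on $\{F=0\}$ directly. Hoffman's error bound on the linearity cells of $f$ then replaces the box-and-rays argument for the escape to infinity. This gives an explicit $O(1/\beta)$ rate for that half, and shows that half uses only the positive constant, not general position. Your Step~2 writes out, via the intermediate value theorem on transversal segments, exactly the direction the paper leaves implicit. It also identifies general position as what that direction needs. The paper's version is shorter and derives both directions from one sign pattern. Yours is more quantitative and clearer about where each hypothesis enters.

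A few statements need repair, though none affects the structure.
\begin{itemize}
\item The sets $\{l_i\ge l_k\ \forall k\}\cap\{l_j\ge l_k\ \forall k\}$ are pieces of the tropical curve, not the linearity domains of $f$. They do not cover the plane, so as written the Hoffman step says nothing about most zeros of $F$. The cells must be $\{l_i\ge l_k \text{ for all } k \text{ with } s_k>0\}\cap\{l_j\ge l_k \text{ for all } k \text{ with } s_k<0\}$ with $s_i>0>s_j$; on these, $f=l_i-l_j$.
\item The separation $\min_{k\ne i,j}(l_i-l_k)$ along an unbounded active ray is generally not constant. It is affine and nonnegative on the ray, hence nondecreasing. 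So it is bounded below by its value at distance $\eta$ from the vertex, which is what you actually use.
\item At the ends of the transversal segment, the competing term is suppressed only by the factor $e^{-\beta|\nabla(l_i-l_j)|\eta'}$, not by $e^{-\beta\delta}$. Your choice of $\eta'$ must therefore keep $\beta\eta'\to\infty$.
\item $\min_i|s_i|$ should be taken over indices with $s_i\ne0$.
\end{itemize}
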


\begin{proof}
    Consider the $\varepsilon$-neighborhood $U$ of the patchwork $f(x,y)=0$ for some $\varepsilon>0$
    and the candidate polygon $P_i$ corresponding to an index $i$. 
    It suffices to show that the sign of $e^{\beta F(x,y)}-1$ 
    in $P_i\setminus U$ is the same as the sign of $s_i$ for sufficiently large $\beta$. 
    Clearly, then the Hausdorff distance between the sets given by $F(x,y)=0$ and $f(x,y)=0$ is $O(\varepsilon)$ and tends to zero as $\beta\to+\infty$.

    We start with showing that 
    $f(x,y)$ is bounded from $0$ in $\mathbb{R}^2\setminus U$. 
    Indeed, take a box $B$ containing all pairwise intersections of equality lines. The complement to $B\cup U$ can be covered by rays starting at the boundary of $B$ and not crossing the equality lines. On each ray, 
    $f(x,y)$ is linear and non-vanishing. Hence, $|f(x,y)|$ attains a minimum at the starting point. By compactness, there is $\delta=\delta(\varepsilon)>0$ such that $|f(x,y)|>\delta$ in $B\setminus U$, hence in $\mathbb{R}^2\setminus U$. 

    Next,
    take $(x,y)\in P_i\setminus U$. By definition, the sign of $f(x,y)$ is the same as the sign of $s_i$. Assume that $s_i>0$ (the other case is analogous). Then $f(x,y)>\delta$. 
    Pick any 
    \begin{equation*}
        j\in\arg\max_{k:s_k<0}\{l_k(x,y)\}.
    \end{equation*}
    Then $l_i(x,y)>\delta+l_j(x,y)$. 
    Since some $l_k(x,y)=c>0$ is a positive constant, it follows that $l_i(x,y)\ge c$.  
    Then 
     \begin{equation}
     \begin{aligned}
         &e^{\beta F(x,y)}=\sum_{k:s_k>0}s_ke^{\beta l_k(x,y)}-\sum_{k:s_k<0}|s_k|e^{\beta l_k(x,y)}\\
         &\ge 
         \begin{cases}
         \left(s_ie^{\beta \delta}
         - \sum_{k
         }|s_k|\right)e^{\beta l_j(x,y)},
         &\text{if }l_j(x,y)\ge 0,\\
         \left(s_ie^{\beta c}
         - \sum_{k
         }|s_k|\right),
         &\text{if }l_j(x,y)< 0,
         \end{cases}
         \end{aligned}
    \end{equation}
    We conclude that $e^{\beta F(x,y)}>1$, once
    \begin{equation}\label{eq-beta-bound}
        \beta>\frac{1}{\min\{\delta,c\}}\log
        \frac{\sum_{k}|s_k|+1}{\min_{k:s_k\ne 0}\{|s_k|\}}.
    \end{equation}
    Analogously, $e^{\beta F(x,y)}<1$ in $P_i\setminus U$ 
    for $s_i<0$. QED 
\end{proof}

Let us give a few remarks on the dependence on the parameters other than $\beta$. 
The change of $a_i,b_i,c_i,s_i$ affects only the boundary of the candidate polygon corresponding to index $i$, and hence is a local change of the patchwork $f(x,y)=0$. The changes to the patchwork $F(x,y)=0$ exponentially decrease with distance from the polygon. When $s_i$ passes through $0$, the polygon is abruptly added to or removed from the ``interior'' of the patchwork $f(x,y)=0$. For the patchwork $F(x,y)=0$, this jump is smeared out. 

In 3D, we say that $a_i\in \mathbb{R}^3$ and $c_i, s_i\in \mathbb{R}$ are \emph{general position parameters including a positive constant}, if 
\begin{itemize}
    \item for some index $i$, we have $l_i(x)=\mathrm{const}>0$
    and    
    $s_i\ne 0$;
    \item for any $k\ge 2$ distinct indices $i_1,\dots,i_k$, we have 
    \begin{equation}
        \dim\{x\in\mathbb{R}^3:l_{i_1}(x)=\dots=l_{i_k}(x)\}\le 4-k.
    \end{equation}
\end{itemize}
For such parameters, the patchwork $f(x)=0$ is the union of active polygons. 
The following theorem is proved analogously to Theorem~\ref{th-tropical-limit}.

\begin{theorem}[Tropical limit] \label{th-tropical-limit-3d}
Under notation~\eqref{eq-def-3d-patchwork}--\eqref{eq-design-planes}, 
for any general position parameters including a positive constant, 
the Hausdorff distance between the sets given by $F(x)=0$ and $f(x)=0$ tends to zero as $\beta\to+\infty$. 
\end{theorem}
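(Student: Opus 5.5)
We follow the proof of Theorem~\ref{th-tropical-limit} line by line, replacing lines by planes and polygons by polyhedra. Fix $\varepsilon>0$ and let $U$ be the $\varepsilon$-neighborhood of the patchwork $f(x)=0$. Let $P_i$ be the candidate polyhedron of index $i$, given by $l_i(x)\ge\max_k l_k(x)$. The claim is that, for all sufficiently large $\beta$, the sign of $e^{\beta F(x)}-1$ on $P_i\setminus U$ equals the sign of $s_i$.

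If this holds, then $F(x)=0$ has no solutions in $\mathbb{R}^3\setminus U$, so the set $F=0$ lies in $U$. Conversely, every point of the set $f=0$ lies on an active polygon by the general position description. Near such a point, within distance $\varepsilon$, there are points of $P_i\setminus U$ and $P_j\setminus U$ with $s_is_j<0$, and $F$ has opposite signs at them. Since $F$ is continuous on the region where $\sum_k s_ke^{\beta l_k}>0$, it vanishes within distance $O(\varepsilon)$. So the Hausdorff distance is $O(\varepsilon)$.

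The first key step is a uniform bound $|f(x)|>\delta(\varepsilon)>0$ on $\mathbb{R}^3\setminus U$; this is the main obstacle, because the 2D argument used the fact that equality lines meet in finitely many points. In 3D the equality planes intersect along lines, so no bounded box contains all the intersections. We handle this using the piecewise-linear structure instead. The equality planes $l_i=l_j$ subdivide $\mathbb{R}^3$ into finitely many closed convex polyhedral cells $C$. On each cell the maximizing indices among $s_k>0$ and among $s_k<0$ are constant, so $f|_C$ is affine.

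On a cell $C$, the zero set of $f$ is $\{f=0\}\cap C$, a convex set. For an affine function $g$ on a closed polyhedron $C$, the quantity $|g(x)|$ is bounded below by a positive multiple of $\operatorname{dist}(x,\{g=0\}\cap C)$, provided $\{g=0\}\cap C\neq\emptyset$. This is a Hoffman-type error bound for linear systems. If instead $g$ does not vanish on $C$, then $\inf_C |g|>0$, because the infimum of an affine function over a polyhedron is attained or is $-\infty$, and sign-constancy rules out the latter. Taking the minimum over the finitely many cells gives $\delta$.

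Alternatively, one can argue by recession cones: $f$ is positively homogeneous at infinity up to bounded error, and we compactify along rays in each cell. The Hoffman bound is cleaner, so we use it. If $f$ is identically zero on a whole cell, general position rules this out, since then the zero set would have nonempty interior.

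The second step repeats the estimate from Theorem~\ref{th-tropical-limit} verbatim. For $x\in P_i\setminus U$ with $s_i>0$, we have $f(x)>\delta$. Choose $j$ maximizing $l_k(x)$ over $s_k<0$; then $l_i(x)>l_j(x)+\delta$. Because some $l_k\equiv c>0$, we also get $l_i(x)\ge c$. Splitting into the cases $l_j(x)\ge0$ and $l_j(x)<0$ gives $e^{\beta F(x)}>1$ as soon as $\beta$ exceeds the bound \eqref{eq-beta-bound}. This bound is uniform in $x$, and the case $s_i<0$ is symmetric. Here general position ensures that the patchwork $f=0$ is exactly the union of active polygons, as stated before the theorem.
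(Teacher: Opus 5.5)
Your proof is correct, and its skeleton (the sign claim on $P_i\setminus U$, a uniform bound $|f|>\delta(\varepsilon)$ off $U$, and the $\beta$-estimate \eqref{eq-beta-bound}) is what the paper intends, since for 3D it only says the proof is analogous to that of Theorem~\ref{th-tropical-limit}. The genuine difference is how you prove the lemma $|f|>\delta$ on $\mathbb{R}^3\setminus U$.

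The paper's 2D argument works as follows. It encloses all pairwise intersections of equality lines in a box $B$ and gets the bound on $B\setminus U$ by compactness. It then extends the bound outward along rays from $\partial B$ that cross no equality line; on these rays $f$ is linear and $|f|$ is smallest at the starting point.

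You instead cut $\mathbb{R}^3$ into the finitely many cells of the arrangement of equality planes, on each of which $f$ is affine. On cells where $f$ vanishes you use Hoffman's error bound, and on the others you use LP attainment. The paper's route is elementary and needs no external lemma. But, as you note, it does not carry over literally: in 3D, equality planes meet along lines that leave every box. Replacing pairwise intersections by vertices does not help either, since a cylinder over a general-position 2D configuration satisfies the 3D general-position conditions and has no vertices at all. So ``analogously'' hides real work. Your route is dimension-free and needs no general position for this step. It also gives $\delta$ linear in $\varepsilon$ for small $\varepsilon$, which together with \eqref{eq-beta-bound} yields an explicit $O(1/\beta)$ rate for the Hausdorff distance.

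Two small slips in your converse direction:
\begin{enumerate}
\item Points of $P_i\setminus U$ are by definition at distance at least $\varepsilon$ from $x_0$. So you need ``within distance $C\varepsilon$'', with $C$ uniform over the possibly unbounded active polygons. The same Hoffman bound, applied to $\{x:\ l_i(x)-l_k(x)\ge 2L\varepsilon\ \forall k\ne i\}$ with $L$ a Lipschitz constant of $f$, supplies this.
\item The intermediate value argument should be applied to $\sum_k s_ke^{\beta l_k}-1$, which is continuous on all of $\mathbb{R}^3$, rather than to $F$. The segment connecting the two points may leave the domain of $F$.
\end{enumerate}
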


\section{Expressive power}\label{sec:proofs}

In this supplementary material, we demonstrate the expressive power of patchworks: first, in 2D, then, in 3D.

\begin{theorem}[Expressive power] \label{th-representation-property-2D}
    For any Jordan curve and any $\varepsilon>0$, there exists a patchwork within the Hausdorff 
    distance $\varepsilon$ from the curve. For a smooth curve, there is a patchwork of width $O(1/\varepsilon^2)$ with this property.
\end{theorem}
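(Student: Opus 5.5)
We follow the grid construction~\eqref{eq-hex} from Section~\ref{sec:method} and combine it with Theorem~\ref{th-tropical-limit}. First rescale so the curve $\gamma$ lies in the box $|x|,|y|\le 1/2$. Fix $N$ and take the $(2N+1)^2$ linear functions $l_{k,l}(x,y)=kx+ly+(1-k^2-l^2)/(2N)$. Multiplying by $N$ shows that $l_{k,l}\ge l_{k',l'}$ is equivalent to $|Nx-k|^2\le |Nx-k'|^2$ up to the common term $N(x^2+y^2)$. Hence the candidate polygon of $(k,l)$ inside the box is the grid square of side $1/N$ centred at $(k/N,l/N)$; this is a Voronoi/power diagram, cf.~\cite[Lemma~1]{pirahmad2025}. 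The function $l_{0,0}=1/(2N)$ is the required positive constant. Choose $s_{k,l}=-1$ for squares whose centre lies in the interior of $\gamma$ and $+1$ otherwise.

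In the limit $\beta=+\infty$, the set $f=0$ is then the digital curve: the union of edges separating light and dark squares. Any such edge lies between a square with centre inside $\gamma$ and one with centre outside, so the segment joining the centres meets $\gamma$. Thus every point of the digital curve is within distance $O(1/N)$ of $\gamma$. Conversely, every point $p\in\gamma$ is within $O(1/N)$ of the digital curve, provided the grid resolves $\gamma$ near $p$: some square within $O(1/N)$ of $p$ has its centre inside and some has it outside.

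For an arbitrary Jordan curve this second direction is the main obstacle, since thin fingers of the interior may contain no centres. We handle it by choosing $N$ large relative to $\varepsilon$. The bounded complementary component is open and every point of $\gamma$ is a limit of interior and of exterior points. By compactness, for each $\varepsilon$ there is $\eta>0$ such that every $\varepsilon/3$-ball centred on $\gamma$ contains an interior disc and an exterior disc of radius $\eta$. Taking $1/N<\eta/2$ forces a light and a dark square within $\varepsilon/3$ of $p$, and the digital curve separates them. The two sides then give Hausdorff distance $<\varepsilon/2$.

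The general position hypotheses of Theorem~\ref{th-tropical-limit} fail on the grid: four squares meet at a vertex, so four $l_{k,l}$ coincide there. We fix this with a tiny perturbation $c_{k,l}\mapsto c_{k,l}+\tau_{k,l}$, which moves the limiting patchwork by $O(\max|\tau|)$. Alternatively we use the honeycomb lattice of Fig.~\ref{fig:hex}, which is generic. Theorem~\ref{th-tropical-limit} then gives $\beta$ with the $F=0$ curve within $\varepsilon/2$ of $f=0$, and the triangle inequality finishes the first claim.

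For a smooth curve, the interior and exterior tangent-ball condition (positive reach $r$) allows $\eta\asymp\varepsilon$ for $\varepsilon<r$. Then $N=O(1/\varepsilon)$ suffices, and the width $(2N+1)^2$ is $O(1/\varepsilon^2)$.
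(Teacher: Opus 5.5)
Your proof is correct, but it takes a different route from the paper's.

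\textbf{The paper's route.} It uses the affinely transformed honeycomb lattice $c_{k,l}=-(k^2+l^2+kl-1)/N$. It first reduces an arbitrary Jordan curve to a smooth one by Hausdorff approximation. It then quotes \cite[Lemma~3.8]{KSS-25}, which says a hex-curve of step $\delta$ lies within $C\delta$ of a smooth curve, and this gives both claims at once. It works throughout with $\beta=+\infty$, since $f=0$ is itself a patchwork by definition, so Theorem~\ref{th-tropical-limit} is never invoked.

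\textbf{Your route.} You use the square grid of Section~\ref{sec:method} and treat a general Jordan curve directly:
\begin{itemize}
\item Digital edges are $O(1/N)$-close to $\gamma$, because the segment between opposite-sign centres meets $\gamma$.
\item Conversely, the Jordan curve theorem (both complementary components have boundary $\gamma$) plus compactness gives a uniform $\eta(\varepsilon)$. The sign change of $f$ between an interior and an exterior centre inside $B(p,\varepsilon/3)$ then produces a zero of $f$ near $p$.
\item For smooth curves, the two-sided tangent-disc (positive reach) condition gives $\eta\asymp\varepsilon$, which replaces the cited lemma.
\end{itemize}

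\textbf{What each buys.} Your version is self-contained and needs no smoothing step. It uses the same explicit parameters for fractal and smooth curves alike, and it additionally produces a finite-$\beta$ patchwork. The paper's version is shorter. Its honeycomb lattice has only trivalent vertices, so the digital curve is a disjoint union of simple polygons. It thus avoids the checkerboard pinch points a square grid can create; these are irrelevant for Hausdorff distance but matter if one cares about topology.

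\textbf{Corrections.} One aside is inaccurate: the honeycomb parameters are not in general position in the sense of Appendix~\ref{sec:dependence}. For example, the six functions with $k^2+kl+l^2=1$ all have $c_{k,l}=0$ and vanish at the origin, which violates the four-index condition. So either keep your generic perturbation, or note that the proof of Theorem~\ref{th-tropical-limit} only uses two things: the positive constant, and the fact that $f$ takes both signs arbitrarily close to each of its zeros. Both hold for either lattice with $s=\pm1$. Also, the normalization should be multiplication by $2N$, giving $|(Nx,Ny)-(k,l)|^2$.
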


Here, roughly speaking, a Jordan curve is a closed, continuous curve without self-intersections. More precisely, a \emph{Jordan curve} is the image of an injective continuous map of a circle into the plane. A Jordan curve need \emph{not} be smooth, so that the theorem applies to fractal curves as well. The generalization to curves having finitely many disjoint components is straightforward. Thus, we obtain a method to approximate essentially any plane curve by a patchwork.

Note that \emph{any} continuous piecewise linear function can be represented in form~\eqref{eq-def-patchwork-limiting}; see \cite{siahkamari2020piecewise}. Hence, 
\emph{any} polyline without self-intersections or disjoint union of such polylines, can be represented as a patchwork. The advantage of our 
method is a simple, explicit expression for the parameters.

\begin{proof}
We prefer to use a hexagonal lattice for approximation. Replace the index $i$ 
in~\eqref{eq-def-patchwork} with a pair of indices $k,l\in [-N,N]$, so we write $a_{k,l}$ instead of $a_{i}$ etc. Consider a patchwork with parameters 
\begin{equation}\label{eq-hex}
a_{k,l}=k, \quad 
b_{k,l}=l, \quad 
c_{k,l}=-(k^2+l^2+kl-1)/N, \quad
s_{k,l}\in\{+1,-1\}.
\end{equation}
The resulting candidate polygons are hexagons with the sides parallel to the lines $x=0$, $y=0$, and $x=y$. 
They form an affine-transformed honeycomb lattice; see Fig.~\ref{fig:hex} on the right. For $\beta=+\infty$, the resulting patchwork (that is, the union of polylines separating candidate polygons corresponding to the indices $k,l$ with $s_{k,l}$ of opposite signs) is called a \emph{hex-curve} (\emph{of step $1/N$}). 

It suffices to prove Theorem~\ref{th-representation-property-2D} for a smooth curve, because any Jordan curve can be approximated by smooth ones in Hausdorff distance. 
By \cite[Lemma~3.8]{KSS-25}, for any smooth Jordan curve, there is a constant $C$ (depending on the curve) such that for any sufficiently small mesh step $\delta>0$ there is a hex-curve of step $\delta$ within Hausdorff distance $C\delta$ from the smooth curve. For $\delta<\varepsilon/C$, the hex-curve is within Hausdorff distance $\varepsilon$. Assume WLOG that the smooth curve lies in the box $|x|,|y|<1$. Then the hex-curve is the patchwork with parameters~\eqref{eq-hex} and has width at most $(2/\delta)^2=O(1/\varepsilon^2)$, as required. QED
\end{proof}

Note that for a non-smooth curve, the Hausdorff distance to the hex-curve can decrease with $\delta\to 0$ slower than any linear function in $\delta$ \cite[Remark after Lemma~3.8]{KSS-25}.

As a byproduct, our representation allows direct sampling of certain complicated random fractals. Namely, if one paints the hexagons of a honeycomb lattice in two colors randomly with equal probabilities independently of each other, then the hex-curve approximates a random fractal known as CLE$_6$ \cite{Miller-etal-17}. 
Patchwork parameters~\eqref{eq-hex} can be easily modified to represent the hex-curve (before the affine transformation of the honeycomb lattice) and thus to sample the random CLE$_6$ fractal. Other random fractals arising from sampling random patchworks can also be of interest and are the subject of future work.

Another side remark is that the relation between the patchwork parameters and the tesselation by candidate polygons is most visual in the context of \emph{isotropic geometry} and \emph{metric duality}; see \cite[Section~2]{pirahmad2025} for an introduction. 

Now we proceed to the 3D case.
Let us approximate a given surface $S\subset [-1,1]^3$ by a patchwork. Take an integer $N$ and 
replace the index $i$ 
in~\eqref{eq-def-3d-patchwork} with a triple of indices $k,l,m\in [-N,N]$, so we write $a_{k,l,m}$ instead of $a_{i}$ etc. Set 
\begin{equation}\label{eq-3d-grid} 
\begin{aligned}
&a_{k,l,m}=(k,l,m), \quad c_{k,l,m}= \frac{1-k^2-l^2-m^2}{2N}, \\
&s_{k,l,m}=\begin{cases}
-1, &\text{if $(k,l,m)/N$ is inside $S$,}\\ 
+1, &\text{otherwise.}
\end{cases}
\end{aligned}
\end{equation}
The resulting candidate polygons are squares forming a cubic grid. For $\beta=+\infty$, the patchwork is the union of the squares separating cube midpoints inside and outside~$S$. 
This union is called a \emph{digital surface}. 
For large $\beta$ and $N$, the patchwork is close 
to the digital surface
and hence to~$S$. 
We arrive at the following theorem.

\begin{theorem}[Expressive power] \label{th-3d-representation-property}
    Any compact embedded surface without boundary in $\mathbb{R}^3$ can be approximated by patchworks, that is, for each $\varepsilon>0$, there exists a patchwork within the Hausdorff distance $\varepsilon$ from the surface.
\end{theorem}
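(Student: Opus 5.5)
We follow the 3D construction just described, splitting the error as $\varepsilon/2$ for the discretization (digital surface versus $S$) and $\varepsilon/2$ for the tropical limit (the patchwork $F=0$ versus the digital surface).

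\textbf{Step 1: the candidate polyhedra form a cubic grid.} For the parameters \eqref{eq-3d-grid}, we have
\[
l_{k,l,m}(x)=\langle (k,l,m),x\rangle+\frac{1-k^2-l^2-m^2}{2N}=\frac{1}{2N}\Bigl(1+\|Nx\|^2-\|Nx-(k,l,m)\|^2\Bigr).
\]
Hence maximizing $l_{k,l,m}(x)$ over the indices is the same as minimizing $\|Nx-(k,l,m)\|$. So the candidate polyhedra are the Voronoi cells of the lattice points $(k,l,m)/N$, namely cubes of side $1/N$ centered at these points, restricted to the box $[-1,1]^3$ (the cells at the boundary of the index range are unbounded). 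The index $(0,0,0)$ gives $l_{0,0,0}=1/(2N)$, a positive constant.

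\textbf{Step 2: the digital surface approximates $S$.} Assume WLOG that $S\subset(-1+\eta,1-\eta)^3$ for some $\eta>0$. With $\beta=+\infty$, the patchwork $f=0$ is the union of the common faces of adjacent cubes whose centers lie on opposite sides of $S$. The $\mathbb{R}^3$ analogue of \cite[Lemma~3.8]{KSS-25} would finish this step, but we can argue directly in both directions.

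First, every such face lies within distance $O(1/N)$ of $S$. Its two cube centers are $1/N$ apart and separated by $S$, so the segment joining them meets $S$.

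Conversely, every $p\in S$ has a digital-surface face within $O(1/N)$. Since $S$ is compact, embedded and without boundary, it separates space and bounds a compact region $\Omega$. Near $p$ there are points of $\Omega$ at depth $\gg 1/N$ as well as points of the complement at depth $\gg 1/N$. For a smooth $S$, use a normal segment of length $\sim 3\sqrt3/N$ centered at $p$. Its endpoints lie in cubes whose centers fall on opposite sides of $S$. A chain of adjacent cubes joining these two cubes, contained in the $O(1/N)$-ball around $p$, must then cross a face of the digital surface.

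\textbf{Main obstacle: the tropical limit.} The parameters \eqref{eq-3d-grid} are \emph{not} in general position: at grid vertices, eight functions $l_i$ are equal. Hence Theorem~\ref{th-tropical-limit-3d} cannot be applied verbatim, and this is the step needing care. I would redo its proof, which only needs two facts.

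(a) $|f|>\delta(\varepsilon)>0$ outside the $\varepsilon$-neighborhood $U$ of the digital surface.

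(b) On the cube $P_i\setminus U$, the sign of $f$ equals the sign of $s_i$.

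Fact (b) is immediate here, because away from faces separating opposite signs, $f$ is a max of positive-sign functions minus a max of negative-sign functions, one of which is the dominating cube's own function. For fact (a), observe that $f$ is piecewise linear with finitely many pieces. It is nonzero on the closed set $[-R,R]^3\setminus U$, because $f$ vanishes only on faces where both maxima tie, that is, on separating faces (degenerate vertex ties lie on such faces too). Outside a large box, all functions are linear along rays, as in the 2D proof. Given (a) and (b), the bound \eqref{eq-beta-bound} goes through unchanged with $c=1/(2N)$ and all $|s_k|=1$. We then take $\beta$ large so that $F=0$ lies within $\varepsilon/2$ of the digital surface, and within $\varepsilon/2$ of $S$ by Step 2.

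If $S$ is only topologically embedded, we first approximate it in Hausdorff distance by a smooth embedded surface, just as in Theorem~\ref{th-representation-property-2D}.
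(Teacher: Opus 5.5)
Your argument is correct in substance, but it takes a different route from the paper and proves more. The paper's proof never leaves $\beta=+\infty$. Since $f=0$ is itself called a patchwork, the paper only shows that the digital surface $D$ is Hausdorff-close to $S$, and it does so directly for topological embeddings. Interior points at depth $\ge 2\delta$, where $\delta$ is the grid step, lie in inside cubes: a cube of diameter $\sqrt{3}\,\delta$ meeting them cannot meet $S$. So $D$ is squeezed between $S$ and this deep-interior set $S_\delta$, and Dini's theorem applied to $x\mapsto\mathrm{dist}(x,S_\delta)$ on $S$ makes the squeeze uniform.

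You instead prove ``$S$ is near $D$'' only for smooth $S$, via normal segments in a tubular neighbourhood. You then reduce the topological case to the smooth one and add a finite-$\beta$ step. That extra step buys the statement for the level set $F=0$ actually used in the optimization. You are right that Theorem~\ref{th-tropical-limit-3d} does not apply verbatim to the grid parameters, since eight functions tie at every vertex; the paper's remark ``for large $\beta$ and $N$'' glosses over this. Note that your other inclusion (a separating face lies near $S$ because the segment between its two cube centers meets $S$) already works for topological $S$.

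Two steps need tightening.

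\textbf{The reverse inclusion.} Facts (a) and (b) only give $\{F=0\}\subset U$. The reverse inclusion is exactly where general position enters in general. The paper's own degenerate example satisfies (a) and (b), yet by AM--GM $\sum_k s_k e^{\beta l_k}\ge\tfrac23 e^{\beta}>1$ everywhere for $\beta>\log\frac32$. So there $F=0$ is empty while $f=0$ is the point $(1,1)$. For the grid the missing fact is easy: every point of $D$ lies on a closed face shared by a $+$ cube and a $-$ cube. If the radius $r$ of $U$ is below $1/(4N)$, each of these cubes contains a point outside $U$ within $2\sqrt{3}\,r$. The intermediate value theorem for $\sum_k s_ke^{\beta l_k}-1$ then gives a point of $F=0$ nearby. (Similarly, the 2D ray argument for (a) does not transfer literally, because equality planes meet along unbounded lines. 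Here a direct computation gives $f\ge\frac N2\bigl(\eta^2-\frac{3}{4N^2}\bigr)>0$ outside $[-1,1]^3$.)

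\textbf{The reduction to smooth $S$.} This is asserted, not proved, and in 3D it is less innocent than in 2D, because topological embeddings can be wild (Alexander's horned sphere). The approximation is true, e.g.\ via a regular level set of a mollified signed distance. But showing that such a level set comes near \emph{every} point of $S$ requires each point of $S$ to be uniformly close to points of both complementary regions at positive depth. That is precisely what the paper's Dini argument supplies. It is simpler to use this fact in place of your normal segments:
\begin{itemize}
\item pick $q_\pm$ near $p$ at depth $\ge 2/N$ on the two sides;
\item their cubes then have centers on the correct sides;
\item a chain of cubes between them crosses $D$ near $p$.
\end{itemize}
This makes the smoothing step unnecessary.
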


Here, by a \emph{compact embedded surface without boundary}, we mean the image of an injective continuous map of a compact two-dimensional 
manifold into $\mathbb{R}^3$. The manifold need not be connected, and the map need not be smooth.

\begin{proof}
It suffices to show that the Hausdorff distance between the initial surface $S$ and the digital surface $D$ tends to zero as $\delta\to 0$, because any digital surface is a patchwork with parameters~\eqref{eq-3d-grid}. Consider the set $S_\delta$ of points inside $S$ at a distance $\ge 2\delta$ from it. All those points lie in the cubes inside $S$; otherwise, there would be a cube of diameter $\sqrt{3}\,\delta$ intersecting both  $S$ and $S_\delta$. All the points outside of $S$ lie in the cubes, not inside $S$. Thus, $D$ lies between $S$ and $S_\delta$, and the Hausdorff distance between $S$ and $D$ is not greater than the one between $S$ and $S_\delta$. The latter distance tends to zero as $\delta\to 0$: indeed, applying Dini's theorem to the 
functions $f_\delta\colon S\to\mathbb{R}$, $x\mapsto\mathrm{dist}(x,S_\delta)$, we conclude that $\max_{x\in S} f_\delta(x)\to 0$ as $\delta\to 0$. QED
\end{proof}

\section{Representation complexity}
\label{sec:representation-complexity}

Now we address the question of the \emph{representation complexity} of patchworks, that is, what complexity of patchworks
is required to represent the
shape of given complexity. Remarkably, for some complexity measures, we can achieve lower bounds and thus contribute to the famous \emph{problem of lower bounds} in computational complexity theory.

In this work, we focus on a single toy example of a complexity measure to provide a proof of concept for patchwork complexity lower bounds. Namely, we study rough bounds on the numerical precision (the number of digits in the decimal representation) of the patchwork coefficients $a_{i}$ and $b_i$ ($a_{i}$ in the 3D case) required to represent a given shape.  
So far, we have not attempted to bound the width $n$, which is a more natural but more complicated complexity measure for the analysis, to be addressed in future work. 

Without loss of generality, we assume that all the coefficients $a_{i}$ and $b_i$ are integers, because we can always scale them by a common factor and absorb it into $\beta$, without changing the patchwork. We also assume that they 
are non-negative, because we can always add a common constant to them. Then, introduce the \emph{patchwork precision complexity}
\begin{equation}
\begin{aligned}
M:=\max_{i}\{a_{i},b_i\}.
\end{aligned}
\end{equation}
The value $\log_{10}M$ roughly measures the numerical precision of the coefficients before scaling and adding a constant.

Analogously, in 3D, denote $a_i=(a_{i,1},a_{i,2},a_{i,3})$, assume 
$a_{i,j}\in\mathbb{Z}_{\ge0}$ for all $i,j$, and set
\begin{equation}
\begin{aligned}
M:=\max_{i,j}\{a_{i,j}\}.
\end{aligned}
\end{equation}

\begin{table*}[htb]
    \footnotesize
\centering
\setlength{\aboverulesep}{1pt}
\setlength{\belowrulesep}{1pt}
\caption{A few lower bounds on the precision complexity $M$ of a patchwork representing a given bounded curve. The \emph{$x$-projection degree} $D$ is the maximal number of intersections of the curve with an $y$-parallel line.
See Section~\ref{sec:representation-complexity} for details.}
\vspace{2mm}
\begin{tabular}{@{}llll}
    \toprule
     \textbf{Curve complexity measure \quad }&\textbf{Notation \quad }&\textbf{
    Bound} & \textbf{Underlying theorem}   \\
    \midrule
    $x$-projection degree
    &
    $D$ &
    $M\ge D$ 
    &
    B\'ezout's theorem\\
    number of components &
    $C$ &
    $M\ge \sqrt{2C}$ \quad 
    &
    Harnack's curve theorem
    \\
    \bottomrule
\end{tabular}
\vspace{-3mm}
\label{tab-bounds}
\end{table*}

\begin{table*}[htb]
    \footnotesize
\centering
\setlength{\aboverulesep}{1pt}
\setlength{\belowrulesep}{1pt}
\caption{A few lower bounds on the precision complexity $M$ of a patchwork representing a given closed surface. The \emph{$xy$-projection degree} $D$ is the maximal number of intersections of the surface with a $z$-parallel line. The \emph{number of handles} $G$ is the total genus of all the surface components. See Section~\ref{sec:representation-complexity} for details.}
\vspace{2mm}
\begin{tabular}{@{}llll}
    \toprule
     \textbf{Surface complexity measure \quad
     }& \textbf{
     \quad}&\textbf{
     Bound} & \textbf{Underlying theorem}   \\
    \midrule
    $xy$-projection degree
    &
    \hspace{-1cm} $D$ &
    $M\ge D$
    &
    B\'ezout's theorem\\
    number of components \quad &
    \hspace{-1cm} $C$ &
    $M\ge \sqrt[3]{2C}$ 
    &
    Generalized Harnack Inequality 
    \\
    number of handles  &
    \hspace{-1cm} $G$ &
    $M\ge \sqrt[3]{2(C+G)}$ \quad
    &
    \cite{itenberg2006asymptotically} and Proposition~\ref{prop-bound} 
    \\
    \bottomrule
\end{tabular}
\vspace{-3mm}
\label{tab-3d-bounds}
\end{table*}

In Tables~\ref{tab-bounds}--\ref{tab-3d-bounds}, we collect a few lower bounds on $M$ for a patchwork with finite $\beta$ and all $a_{i},b_i\in\mathbb{Z}_{\ge0}$ (or $a_{i,j}\in\mathbb{Z}_{\ge0}$) 
representing a given shape. They come from the observation that the \emph{exponential map} $(x,y)\mapsto \left(e^{\beta x},e^{\beta y}\right)$ 
takes the patchwork to an algebraic set of degree~$M$, and the known theorems on the complexity of the latter.

Conversely, if for some $\beta>0$, the exponential map
takes a given set to an algebraic hypersurface of degree~$M$, then the set is a patchwork of precision complexity $M$. This is a toy example of a \emph{patchwork complexity upper bound}.

Specifically, in Tables~\ref{tab-bounds}--\ref{tab-3d-bounds}, we use B\'ezout's and Harnack's theorem, and the following folklore one.

\begin{proposition}[Algebraic surface complexity bound] \label{prop-bound} Consider a bounded algebraic surface of degree $M$ in $\mathbb{R}^3$ with $C$ components and the sum $G$ of the genera of all components. Assume that the extension of the algebraic surface to complex projective space is smooth. 
Then 
\begin{equation}
M(M^2-4M+6)\ge 2(C+G).
\end{equation}
\end{proposition}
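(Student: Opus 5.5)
The plan is to bound the total Betti number of the real surface by the total Betti number of its complexification, computed from the degree alone. We use the Smith--Thom inequality together with the Euler characteristic of a smooth complex surface of degree $M$ in $\mathbb{CP}^3$.

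\textbf{Step 1: pass to projective space.} Let $X\subset\mathbb{CP}^3$ be the projective closure of the complexification. By assumption $X$ is a smooth complex surface of degree $M$, defined over $\mathbb{R}$. Its real part $\mathbb{R}X\subset\mathbb{RP}^3$ coincides with the given surface $S$, because $S$ is bounded: the closure adds no real points at infinity. Being compact and smooth, each component of $S$ bounds a region in $\mathbb{R}^3$, so each component is orientable; a component of genus $g$ therefore has $\mathbb{Z}/2$ Betti numbers $1,2g,1$. Summing over components,
\[
\sum_k b_k(S;\mathbb{Z}/2)=2C+2G .
\]

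\textbf{Step 2: Smith--Thom inequality.} For a real algebraic variety, the total mod-$2$ Betti number of the real part is bounded by that of the complex part:
\[
\sum_k b_k(\mathbb{R}X;\mathbb{Z}/2)\le \sum_k b_k(X;\mathbb{Z}/2).
\]

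\textbf{Step 3: compute the topology of $X$.} A smooth surface of degree $M$ in $\mathbb{CP}^3$ is simply connected by the Lefschetz hyperplane theorem. Its cohomology is torsion free, so $b_0=b_4=1$, $b_1=b_3=0$, and the total Betti number equals its Euler characteristic. By adjunction, $c(X)=(1+h)^4/(1+Mh)$ restricted to $X$, which gives
\[
\chi(X)=M\bigl(6-4M+M^2\bigr)=M^3-4M^2+6M .
\]
As checks, $M=1$ gives $3$ (the plane $\mathbb{CP}^2$), $M=2$ gives $4$ (the quadric $\mathbb{CP}^1\times\mathbb{CP}^1$), and $M=4$ gives $24$ (a K3 surface).

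\textbf{Step 4: combine.} Steps 1--3 together yield $2(C+G)\le M(M^2-4M+6)$.

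The main obstacle is Step 2, which I would simply quote as the classical Smith--Thom inequality, obtained from Smith theory applied to complex conjugation. Care is also needed to justify that the real points at infinity are absent and that the real part carries the stated topology; both follow from boundedness and smoothness. Everything else is routine computation.
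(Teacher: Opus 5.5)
Your proposal is correct and follows the paper's proof: the real part has mod-$2$ Betti sum $2C+2G$, the Smith--Thom inequality (which the paper invokes as the Generalized Harnack Inequality) bounds this by the total Betti number of the smooth complex projective surface, and that number is $M(M^2-4M+6)$, which you derive via Lefschetz and adjunction where the paper simply cites it. The one step to tighten is Step~1: boundedness alone does not rule out real points at infinity (e.g.\ $x^2+y^2+z^4=1$ is bounded, yet its projective closure meets the plane at infinity in a real line of singular points), so you genuinely need the smoothness of the complexification there, as you yourself note at the end.
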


Proposition~\ref{prop-bound} follows from known results surveyed in~\cite{itenberg2006asymptotically}, but the proof requires familiarity with basic topology and algebraic geometry.   

\begin{proof}
Since the algebraic surface is bounded and non-singular, it is a disjoint union of finitely many closed compact orientable surfaces, and its projective closure in real projective space 
has the same topological type. Then the projective closure has
the sum $C+2G+C$ of the Betti numbers over $\mathbb{Z}/2\mathbb{Z}$. 
By the Generalized Harnack Inequality \cite[Theorem~1.1]{itenberg2006asymptotically}, this sum is not greater than the analogous sum for the extension (Zariski closure) to the complex projective space.
By the known computation (see e.g., \cite{Yuan-14} or \cite[p.~93]{itenberg2006asymptotically}), the latter sum equals $M(M^2-4M+6)$, and the desired inequality follows.  QED
\end{proof}

\section{Geometric initialization}
\label{sec:geometric-initialization}

In this supplementary material, we derive geometric initialization~\eqref{eq-geometric-initialization} from assumptions~\eqref{eq-PLR-assumptions}.
We start with introducing the functions 
\begin{equation} \label{eq-fpm-def}
f^\pm(x) = \max_{i} \{\langle a_i^\pm, x \rangle + c_i^\pm\}.
\end{equation}
By our assumptions~\eqref{eq-PLR-assumptions}, the maximum for $x=x_j$ is attained at $i=j$. Then it is natural to assume that for $x$ sufficiently close to $x_j$, 
we still have
\begin{equation}\label{eq-fpmx}
f^\pm(x) = \langle a_j^\pm, x \rangle + c_j^\pm
\end{equation}
and hence
\begin{equation}
\nabla f^\pm(x) = a_j^\pm.
\end{equation}
Substituting $x=x_j$, we get
\begin{equation}\label{eq-fpm}
f^\pm(x_j) = \langle a_j^\pm, x_j \rangle + c_j^\pm
\qquad\text{and}\qquad \nabla f^\pm(x_j) = a_j^\pm.
\end{equation}
Recall definition~\eqref{eq-hpm} of convex functions 
\begin{equation}
h^+(x) = \tfrac{1}{2} \rho \|x\|^2 + \text{sdf}(x)\qquad\text{and}\qquad h^-(x) = \tfrac{1}{2} \rho \|x\|^2.
\end{equation}
To satisfy 
\begin{equation}
f(x_j)=\text{sdf}(x_j) = 0\qquad\text{and}\qquad\nabla f(x_j)=\nabla \text{sdf}(x_j) = n_j,
\end{equation}
it is natural to require the conditions
\begin{equation}
f^\pm(x_j)=h^\pm(x_j)\qquad\text{and}\qquad\nabla f^\pm(x_j)=\nabla h^\pm(x_j).
\end{equation}
Comparing with equation~\eqref{eq-fpm}, we get the desired closed-form solution~\eqref{eq-geometric-initialization}:
\begin{equation}
\begin{aligned}
a^+_j &= \nabla h^+(x_j) = \rho x_j + n_j, &
c_j^+ &= h^+(x_j) - \langle a^+_j, x_j \rangle = -\tfrac{1}{2} \rho \|x_j\|^2 -\langle n_j, x_j \rangle, \\
a^-_j &= \nabla h^-(x_j) = \rho x_j, &
c_j^- &= h^-(x_j) - \langle a^-_j, x_j \rangle = -\tfrac{1}{2} \rho \|x_j\|^2.
\end{aligned}
\end{equation}
The resulting solution automatically satisfies assumptions~\eqref{eq-PLR-assumptions}: The plot of linear function~\eqref{eq-fpmx} is the tangent plane to the plot of the convex function $h^\pm(x)$ at the point $x=x_j$. Thus, the maximum in~\eqref{eq-fpm-def} for $x=x_j$ is attained at $i=j$. 

\section{Explicit extraction}
\label{sec:explicit-extraction}
Our method supports explicit extraction of the patchwork (for $\beta=+\infty$), that is, the union of active segments (in 2D) or active polygons (in 3D);
see Fig.~\ref{fig:suppl_explicit_extraction}. For each \emph{interior} candidate polygon/polyderon (i.e., corresponding to an index $i$ with $s_i<0$), 
compute the equality lines/planes 
$\langle a_i,x \rangle + c_i=\langle a_j,x \rangle + c_j$
for all $j=1,\dots,n$
(i.e., the halfspaces that bound the current candidate polygon).
Then find its Chebyshev center $x$ by maximizing $y$ over all pairs $(x,y)$ satisfying the constraint
\begin{equation}
\langle a_{j}-a_i, x \rangle + y ||a_{j}-a_i||_2 + c_{j}-c_{i} \leq 0, \ \forall j \in [n]. 
\end{equation}

If the resulting $y>0$, then compute the intersections of those halfspaces. 
Then we compute the convex hull 
and determine connectivity. After performing the same for all interior candidate polygons/polydera, we identify all active segments/polygons 
for the final patchwork extraction.

We use Marching Cubes for the main results because explicit extraction does not preserve the smoothness induced by the soft maximum.

\begin{figure*}
\centering
\includegraphics[width=0.8\textwidth]{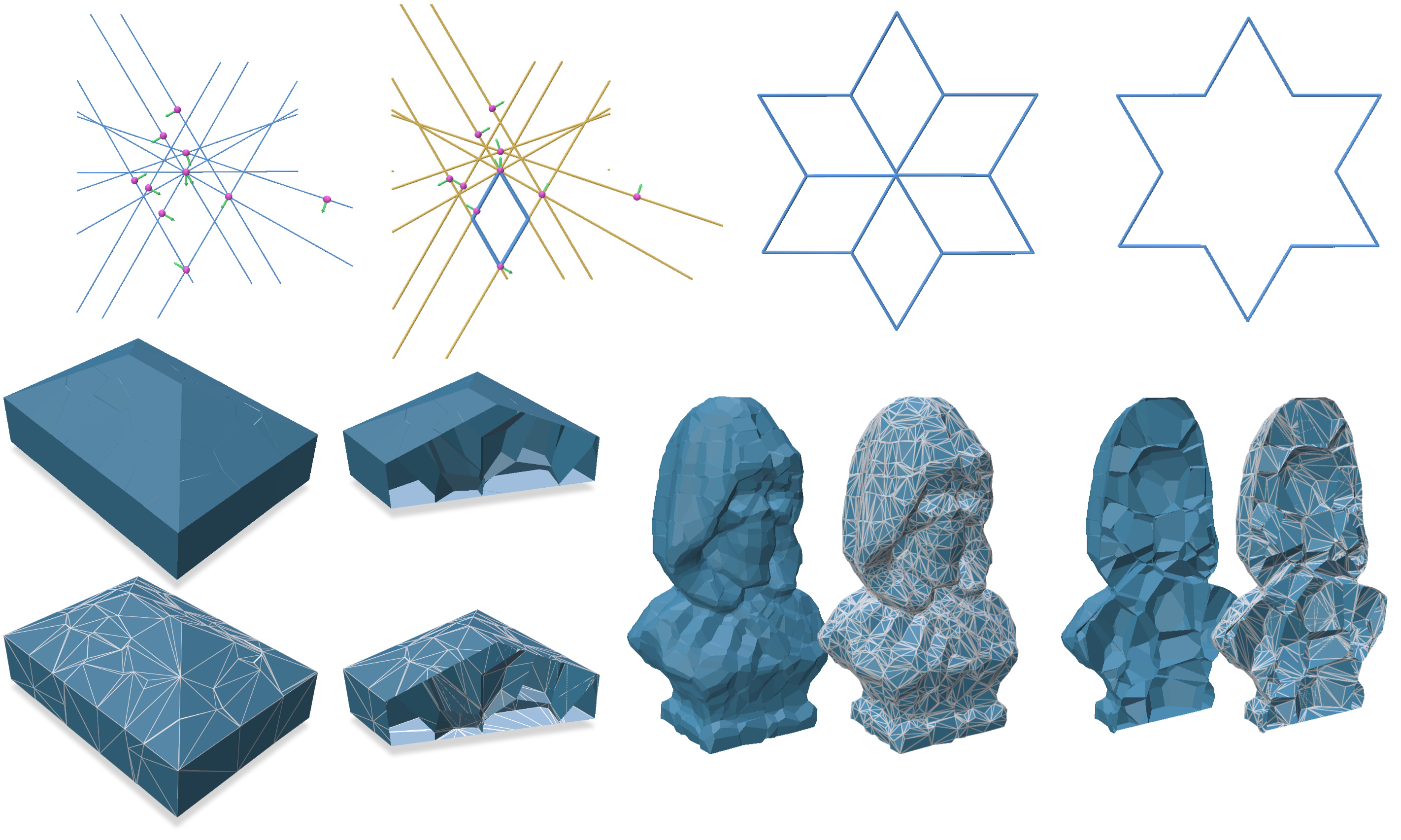}
\captionof{figure}{Explicit extraction. The top row illustrates the patchwork extraction procedure, from left to right: 1) equality lines; 2) halfspace intersections with connectivity; 3) resulting interior candidate polygons; 
4) resulting active segments. 
The bottom row shows extractions, interior candidate polyhedra, and tessellations for 3D cases. We use Qhull \cite{qhull} for convex hull computation and Polycope \cite{polyscope} for visualization.}
\label{fig:suppl_explicit_extraction}
\end{figure*}

\section{Failure cases}
\label{sec:failure-cases}
As illustrated in Fig.~\ref{fig:supp_failure_cases}, we have two failure cases resulting in close to zero F-score for the Thingi10K dataset. The first case (top row) has two layers of surfaces of the same normal orientations. Our initialization creates many cracks that cannot be resolved. The second case (bottom row) has many intricate isolated parts, for which our sampling and losses are insufficient to bound our initialization.

\section{Future work}
\label{sec:future-work}

We propose the following future work addressing the theory and applications of the proposed representation.

{\bf Theory:} Patchworks represent either smooth (for finite $\beta$) or piecewise-linear shapes (for $\beta=+\infty$). It is interesting to modify the construction to allow sharp corners and edges for non-piecewise-linear curves and surfaces.
It is interesting to get bounds on the precision complexity and width of the patchwork representation of a shape; for instance, to determine if the maximal number of intersections of the shape with a line is a lower bound on both. 
Furthermore, we conjecture that the upper bound $O(1/\varepsilon^2)$ on the width described in Theorem~B.1 
in the supplementary material can be improved to $O(1/\varepsilon)$.

{\bf Initialization and Expansion:} Our current implementation relies on a faithful initialization, with $\mathcal{L}_{\text{sur}}$, $\mathcal{L}_{\text{normal}}$, and $\mathcal{L}_{\text{reg}}$ bounding and refining the initialization, while the side effect of $\mathcal{L}_{\text{reg}}$ gradually expands polyhedra acrossing zero level set to cover non-contributing ones. When our initialization failed, or when the samples are insufficient to bound the geometric details, our method can produce inaccurate reconstructions, as illustrated by our two failure cases. There are also cases where details of our initialization get blurred out in the expansion procedure, such as the bottom row of Fig.~\ref{fig:supp_comparison_thingi10k}. We anticipate that developing more robust initialization and expansion schemes will improve our robustness, enabling our method to fully realize its theoretical expressiveness.

{\bf Applications:} The most important applications are in the area of low complexity shape modeling. Specifically, we believe that the representation is very promising for generative modeling of man-made objects and for the 3D reconstruction of shapes from point clouds and images. This would require assembling a larger dataset of 3D shapes in the newly proposed representation and training a diffusion model.

\input{fig/suppl_3d/fig_supp_houses}
\input{fig/suppl_3d/fig_supp_abc}
\input{fig/suppl_3d/fig_supp_thingi10k}
\input{fig/suppl_3d/fig_supp_failure_cases_small}
\input{fig/suppl_2d/fig_supp_2d}

%% file: fig/suppl_3d/fig_supp_houses.tex
\renewcommand{\figpath}{fig/suppl_3d}

\setlength{\cmpimgw}{0.12\linewidth}
\setlength{\cmpspace}{\dimexpr(\linewidth - 9\cmpimgw)/8\relax}
\setlength{\cmplabelw}{\dimexpr\linewidth/9\relax}
\renewcommand{\cmplabel}[1]{\parbox[c]{\cmplabelw}{\centering\scriptsize #1}}

\renewcommand{\cmprow}[2]{%
  \cmpimg{#1}{#2}{comparison_mc}\cmpsep
  \cmpimg{#1}{#2}{comparison_rtfa}\cmpsep
  \cmpimg{#1}{#2}{comparison_spsr}\cmpsep
  \cmpimg{#1}{#2}{comparison_voromesh}\cmpsep
  \cmpimg{#1}{#2}{comparison_ponq}\cmpsep
  \cmpimg{#1}{#2}{comparison_siren}\cmpsep
  \cmpimg{#1}{#2}{ablation_geo_init}\cmpsep
  \cmpimg{#1}{#2}{ours3}\cmpsep
  \cmpimg{#1}{#2}{gt}
}

\begin{figure*}[t]
\centering

\cmprow{jing_houses}{BJ39_500_099049_0005}
\cmprow{jing_houses}{BK39_500_012029_0067}
\cmprow{jing_houses}{CG10_500_036063_0043}
\cmprow{jing_houses}{CG10_500_037057_0034}
\cmprow{jing_houses}{CG10_500_041069_0004}
\cmprow{jing_houses}{CG10_500_042060_0021}
\vspace{-10pt}\hrule\vspace{3pt}

\cmplabels

  \caption{More qualitative results on the roof modeling dataset.}
  \label{fig:supp_comparison_roof_modeling}

\end{figure*}

%% file: fig/suppl_3d/fig_supp_abc.tex
\renewcommand{\figpath}{fig/suppl_3d}

\setlength{\cmpimgw}{0.12\linewidth}
\setlength{\cmpspace}{\dimexpr(\linewidth - 9\cmpimgw)/8\relax}
\setlength{\cmplabelw}{\dimexpr\linewidth/9\relax}
\renewcommand{\cmplabel}[1]{\parbox[c]{\cmplabelw}{\centering\scriptsize #1}}

\renewcommand{\cmprow}[2]{%
  \cmpimg{#1}{#2}{comparison_mc}\cmpsep
  \cmpimg{#1}{#2}{comparison_rtfa}\cmpsep
  \cmpimg{#1}{#2}{comparison_spsr}\cmpsep
  \cmpimg{#1}{#2}{comparison_voromesh}\cmpsep
  \cmpimg{#1}{#2}{comparison_ponq}\cmpsep
  \cmpimg{#1}{#2}{comparison_siren}\cmpsep
  \cmpimg{#1}{#2}{ablation_geo_init}\cmpsep
  \cmpimg{#1}{#2}{ours2}\cmpsep
  \cmpimg{#1}{#2}{gt}
}

\begin{figure*}[t]
\centering

\cmprow{abc}{00015159_57353d3381fb481182d9bdc6}
\cmprow{abc}{00015877_e675452ef2bc43a98476b3b5}
\cmprow{abc}{00016606_3d6966cd42eb44ab8f4224f2}
\cmprow{abc}{00017846_08893609d30e453493c4c079}

\vspace{-10pt}\hrule\vspace{3pt}

\cmplabels

  \caption{More qualitative results on the ABC dataset.}
  \label{fig:supp_comparison_abc}

\end{figure*}

%% file: fig/suppl_3d/fig_supp_thingi10k.tex
\renewcommand{\figpath}{fig/suppl_3d}

\setlength{\cmpimgw}{0.12\linewidth}
\setlength{\cmpspace}{\dimexpr(\linewidth - 9\cmpimgw)/8\relax}
\setlength{\cmplabelw}{\dimexpr\linewidth/9\relax}
\renewcommand{\cmplabel}[1]{\parbox[c]{\cmplabelw}{\centering\scriptsize #1}}

\renewcommand{\cmprow}[2]{%
  \cmpimg{#1}{#2}{comparison_mc}\cmpsep
  \cmpimg{#1}{#2}{comparison_rtfa}\cmpsep
  \cmpimg{#1}{#2}{comparison_spsr}\cmpsep
  \cmpimg{#1}{#2}{comparison_voromesh}\cmpsep
  \cmpimg{#1}{#2}{comparison_ponq}\cmpsep
  \cmpimg{#1}{#2}{comparison_siren}\cmpsep
  \cmpimg{#1}{#2}{ablation_geo_init}\cmpsep
  \cmpimg{#1}{#2}{ours2}\cmpsep
  \cmpimg{#1}{#2}{gt}
}

\begin{figure*}[t]
\centering

\cmprow{thingi10k}{54725}
\cmprow{thingi10k}{73133}
\cmprow{thingi10k}{274379}
\cmprow{thingi10k}{398259}

\vspace{-10pt}\hrule\vspace{3pt}

\cmplabels

  \caption{More qualitative results on the Thingi10K dataset.}
  \label{fig:supp_comparison_thingi10k}

\end{figure*}

%% file: fig/suppl_3d/fig_supp_failure_cases_small.tex

\begin{figure}[htb]
    \centering

    \begin{subfigure}[t]{0.99\textwidth}
        \centering
        \includegraphics[width=0.15\linewidth]{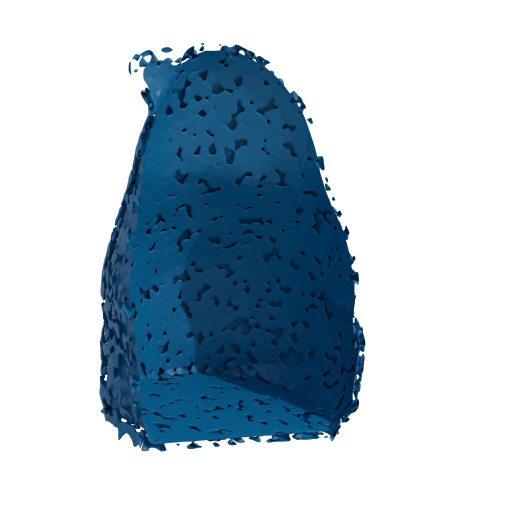}
        \includegraphics[width=0.15\linewidth]{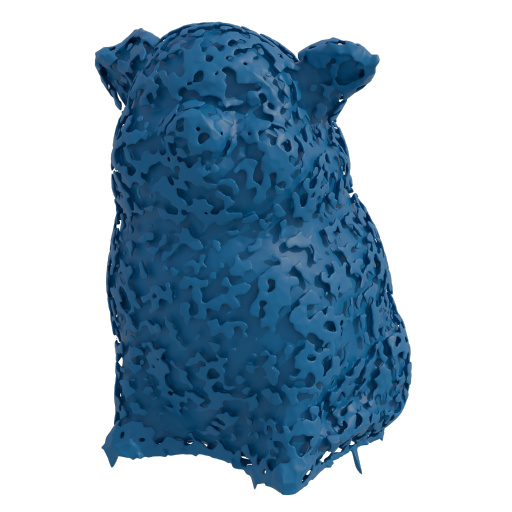}
        \includegraphics[width=0.15\linewidth]{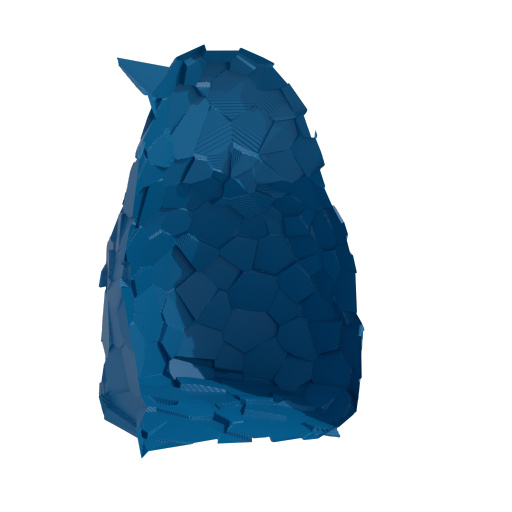}
        \includegraphics[width=0.15\linewidth]{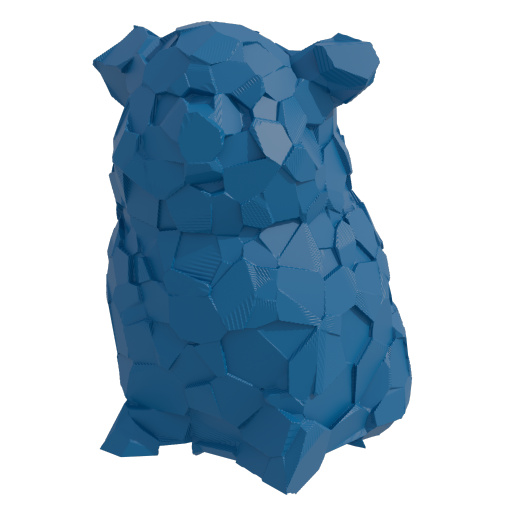}
        \includegraphics[width=0.15\linewidth]{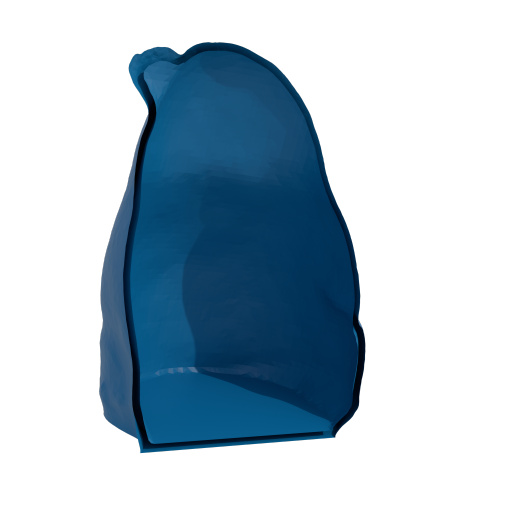}
        \includegraphics[width=0.15\linewidth]{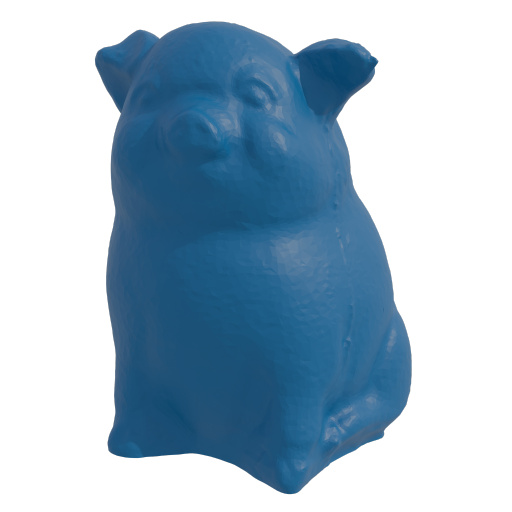}
        
    \end{subfigure}\hfill
    \begin{subfigure}[t]{0.99\textwidth}
        \centering
        \includegraphics[width=0.15\linewidth]{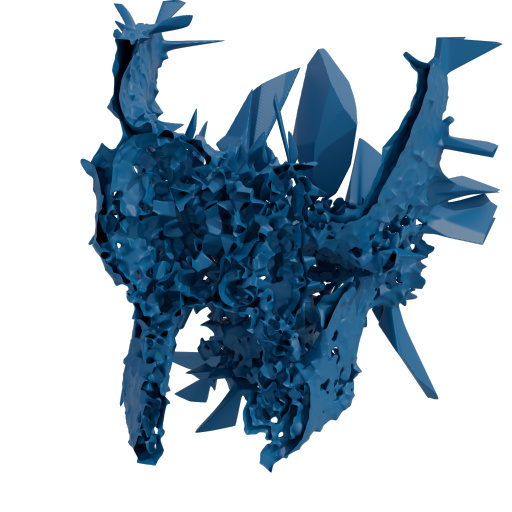}
        \includegraphics[width=0.15\linewidth]{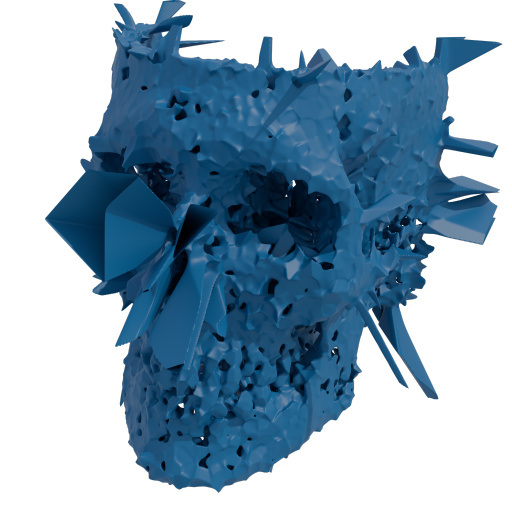}
        \includegraphics[width=0.15\linewidth]{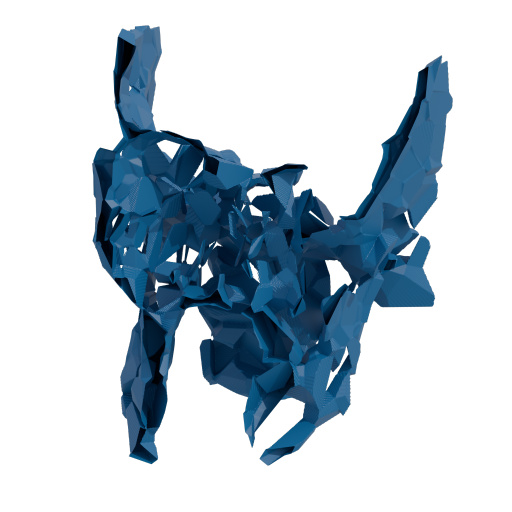}
        \includegraphics[width=0.15\linewidth]{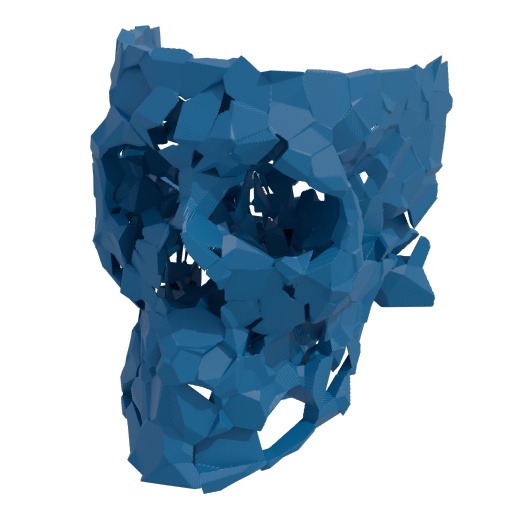}
        \includegraphics[width=0.15\linewidth]{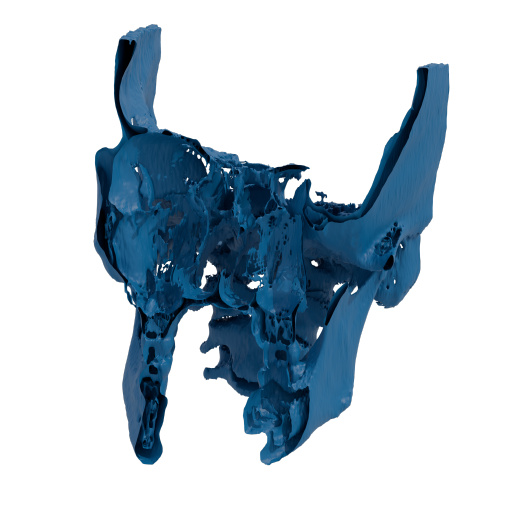}
        \includegraphics[width=0.15\linewidth]{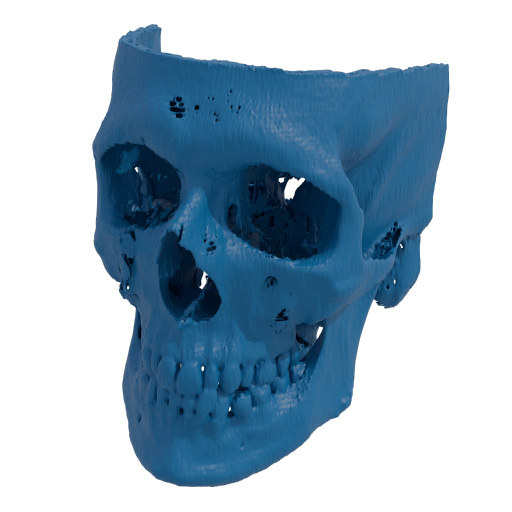}
        
    \end{subfigure}


\caption{Two failure cases of our method. Per row, from left to right, the pairs correspond to Ours Init, Ours, and GT.}
\label{fig:supp_failure_cases}
\end{figure}

%% file: fig/suppl_2d/fig_supp_2d.tex
\begin{figure}[htb]
    \centering

    \begin{subfigure}[t]{0.31\textwidth}
        \centering
        \includegraphics[width=0.48\linewidth]{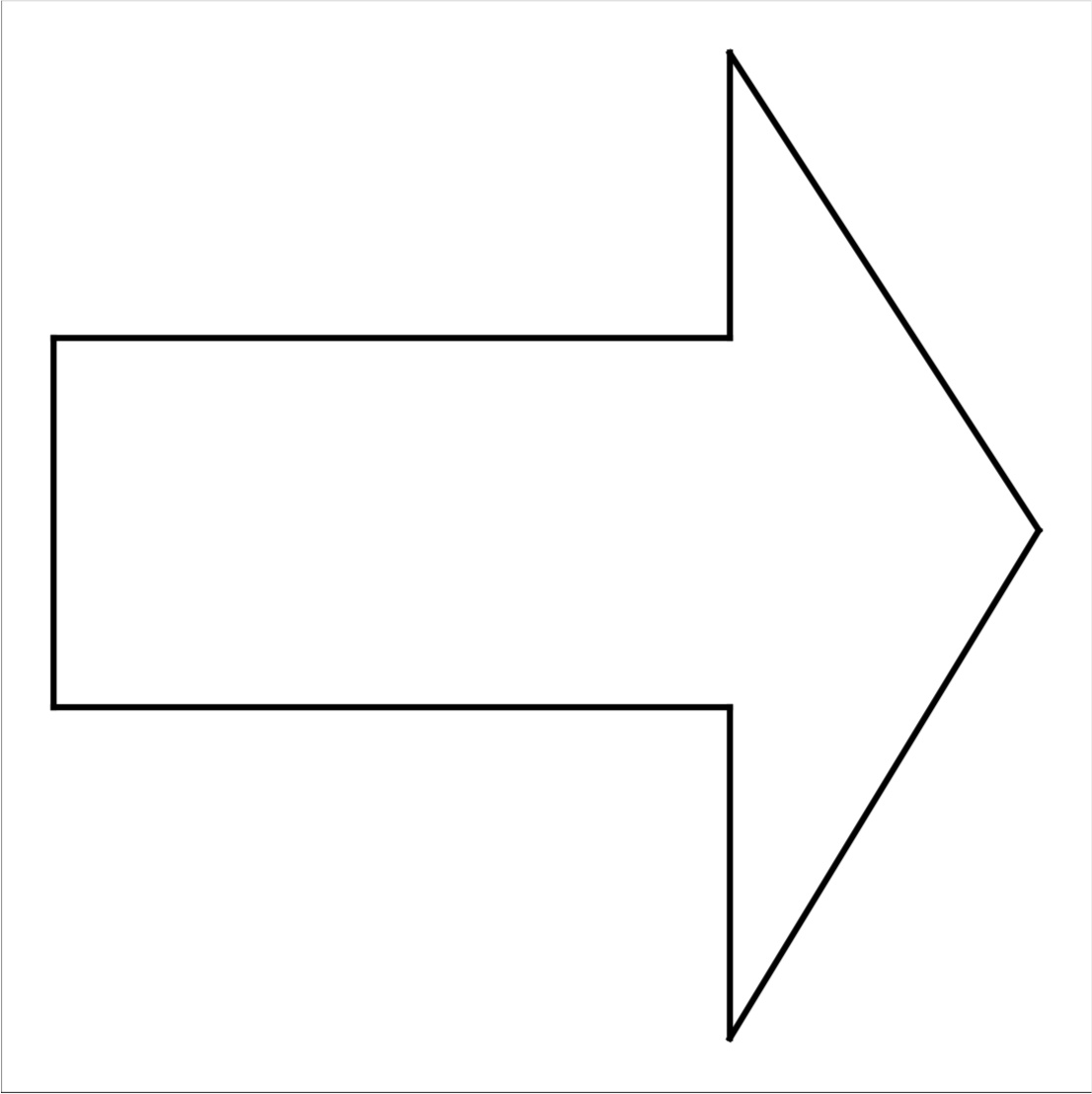}
        \includegraphics[width=0.48\linewidth]{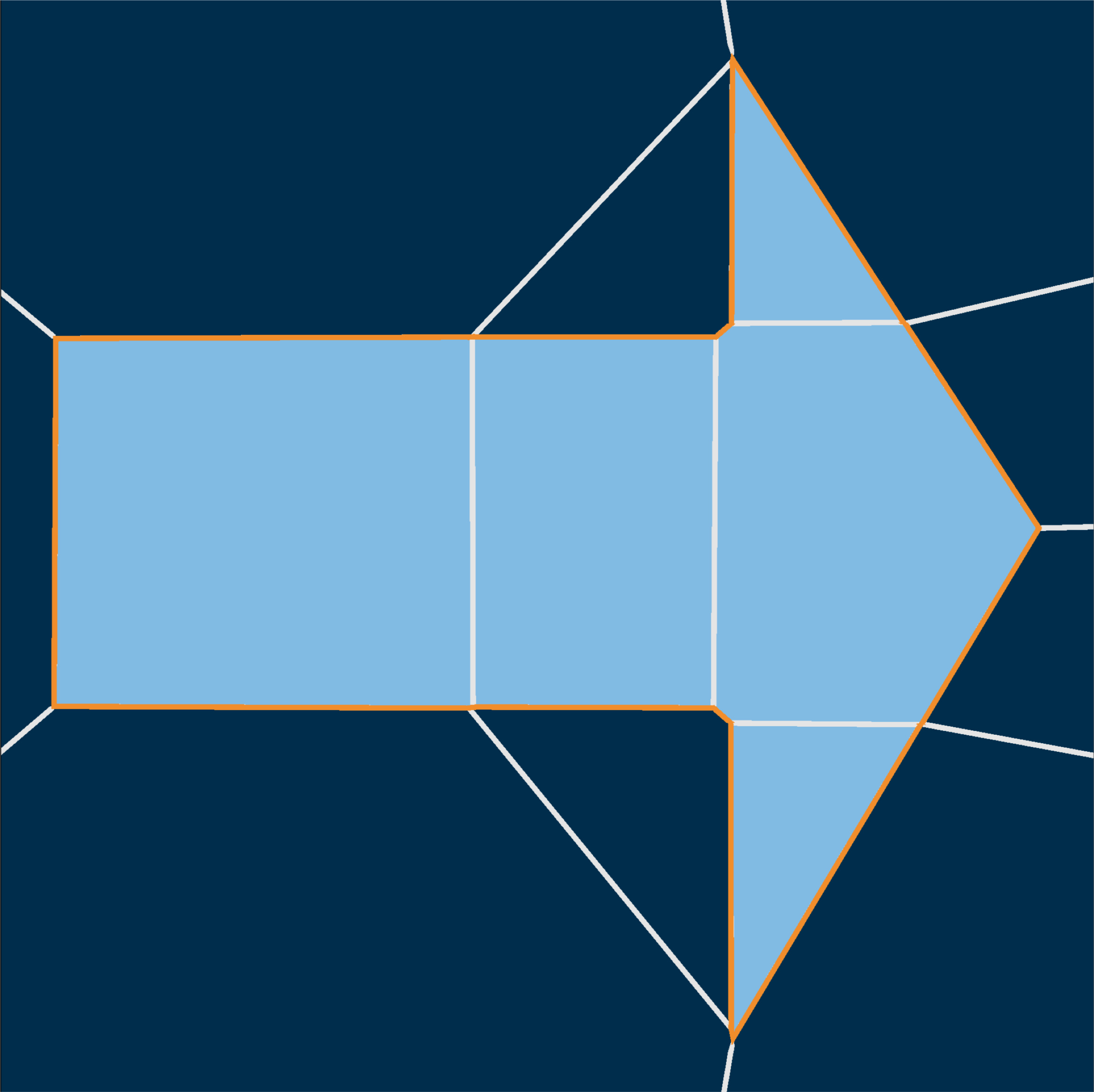}
        \caption*{42}
    \end{subfigure}\hfill
    \begin{subfigure}[t]{0.31\textwidth}
        \centering
        \includegraphics[width=0.48\linewidth]{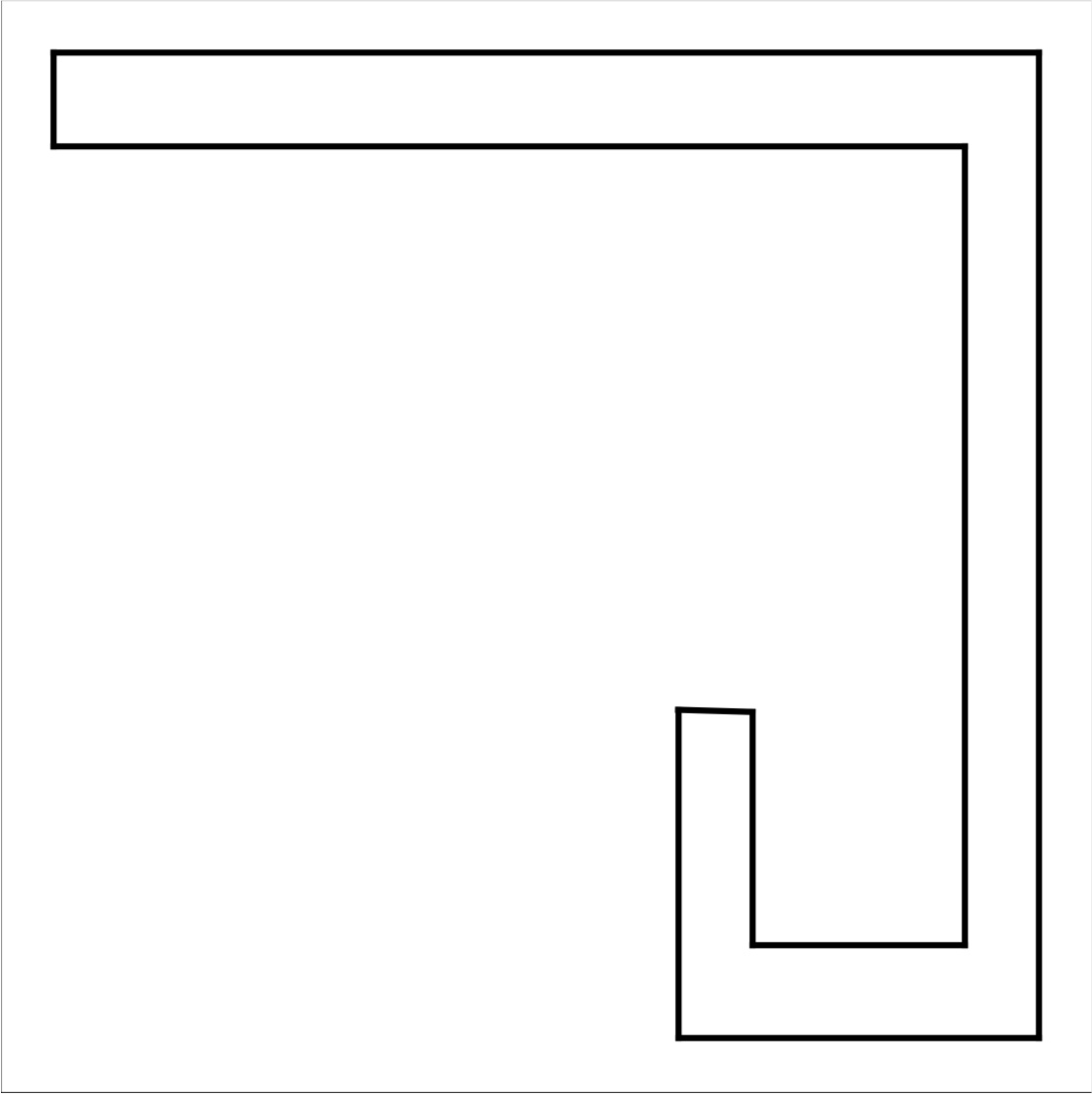}
        \includegraphics[width=0.48\linewidth]{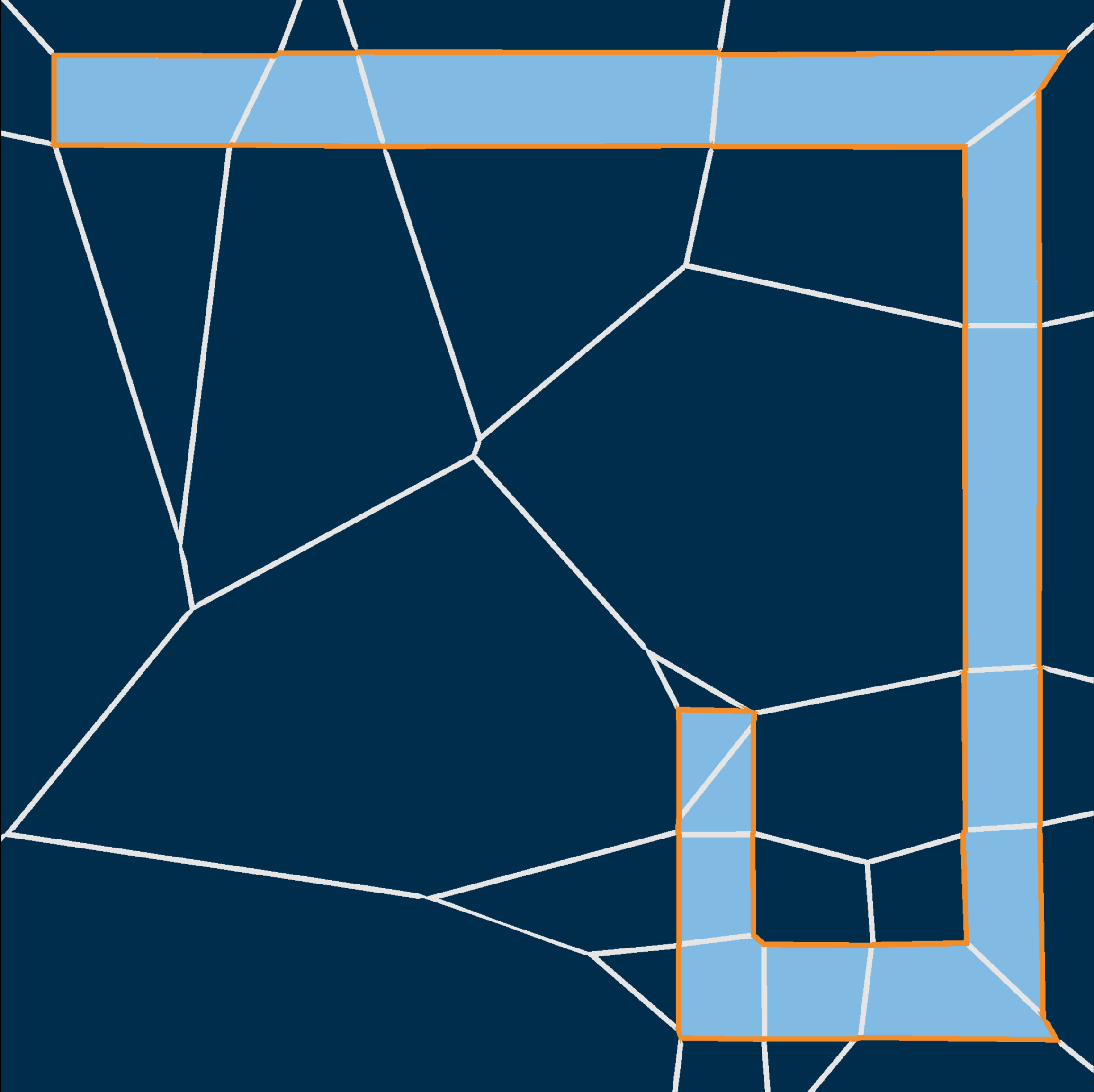}
        \caption*{132}
    \end{subfigure}\hfill
    \begin{subfigure}[t]{0.31\textwidth}
        \centering
        \includegraphics[width=0.48\linewidth]{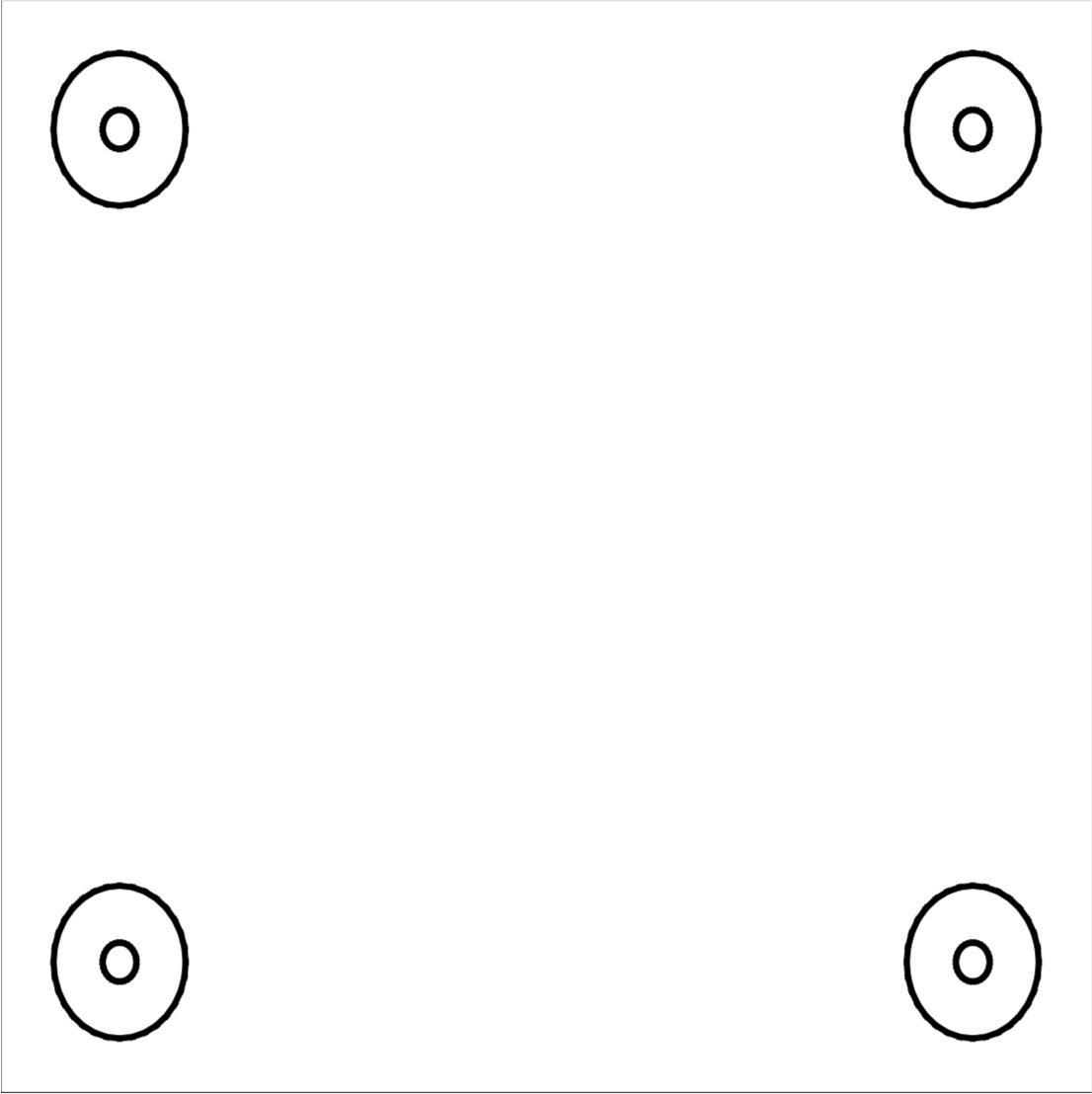}
        \includegraphics[width=0.48\linewidth]{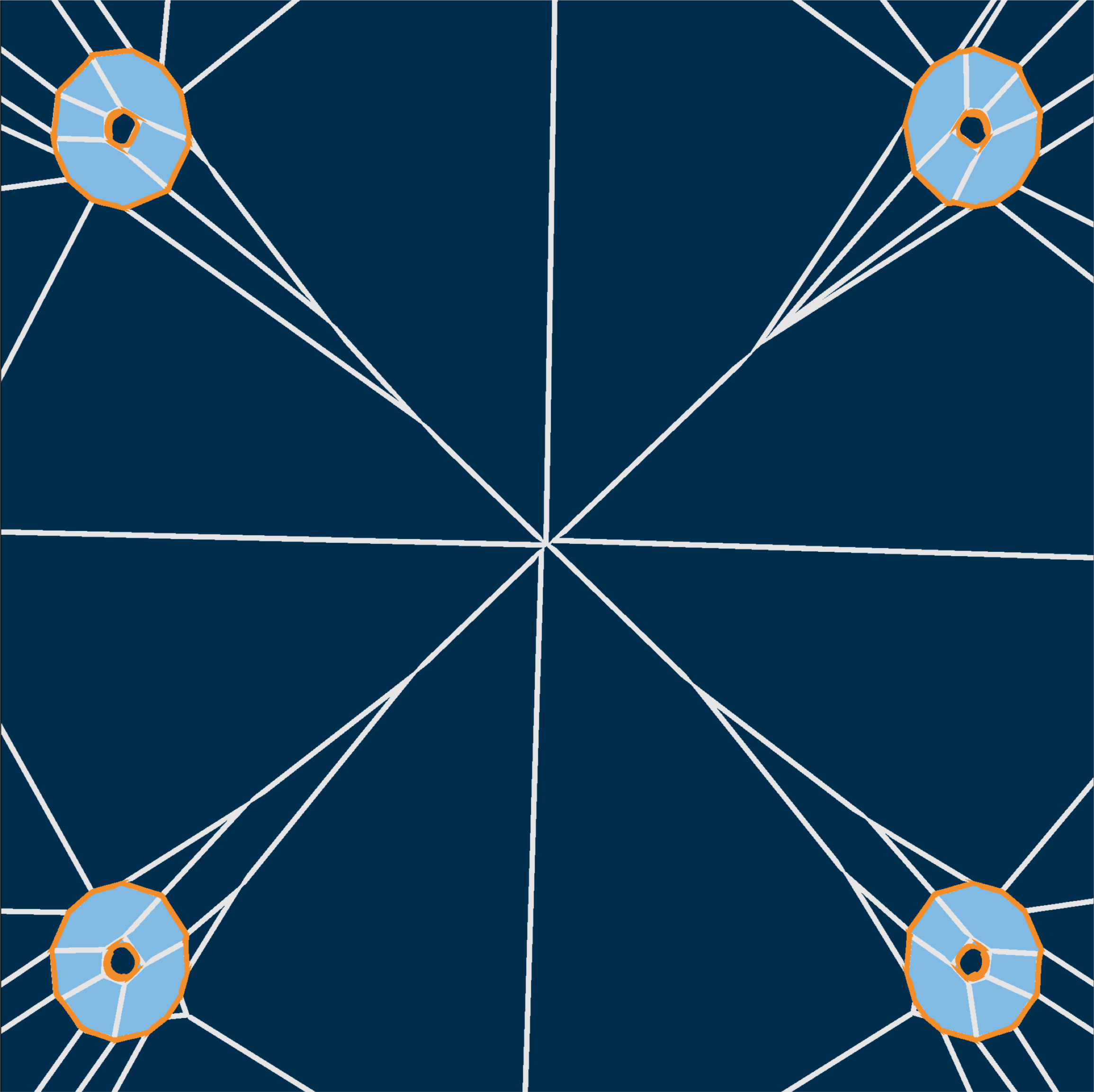}
        \caption*{372}
    \end{subfigure}

    \vspace{0.8em}

    \begin{subfigure}[t]{0.31\textwidth}
        \centering
        \includegraphics[width=0.48\linewidth]{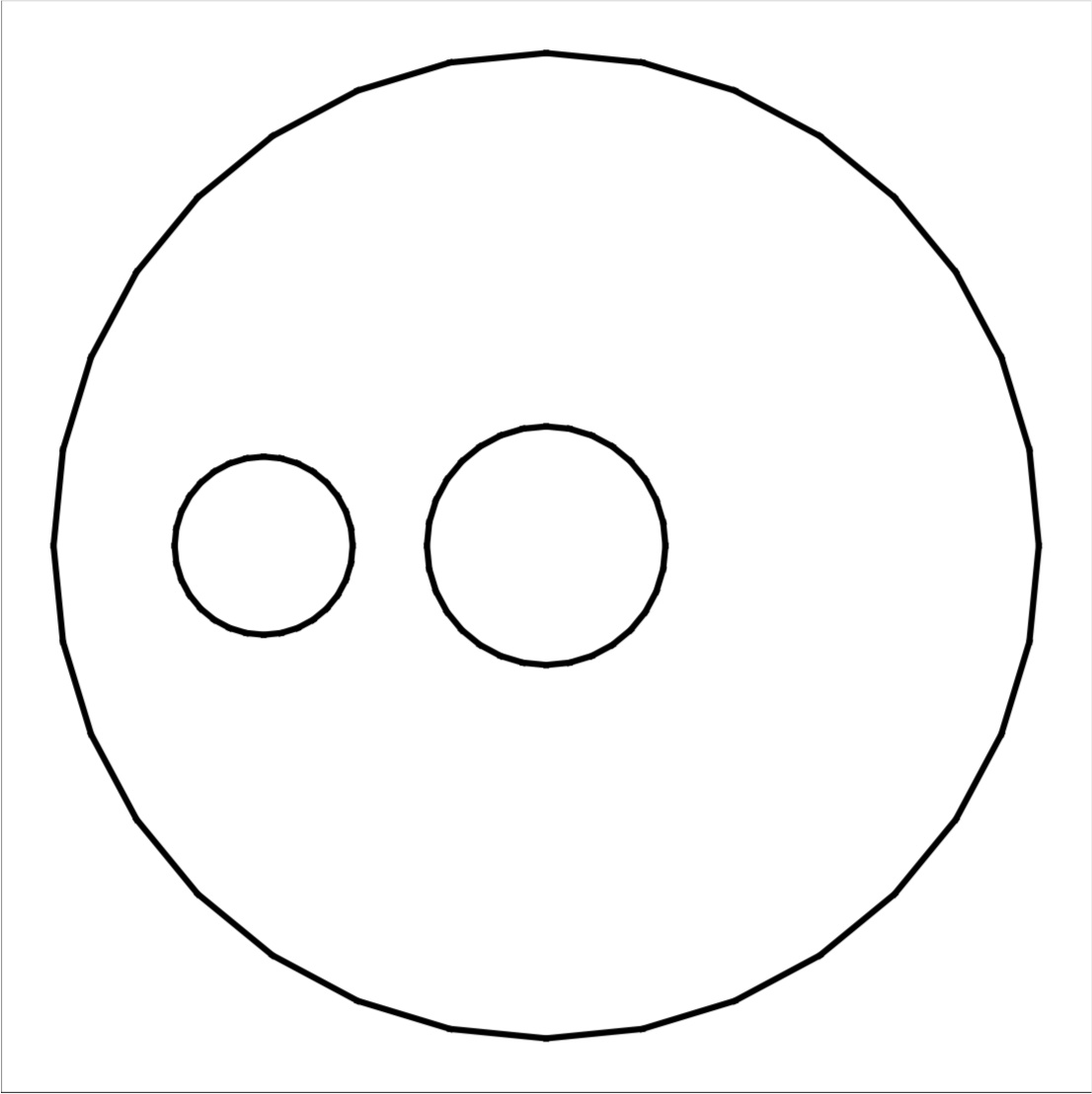}
        \includegraphics[width=0.48\linewidth]{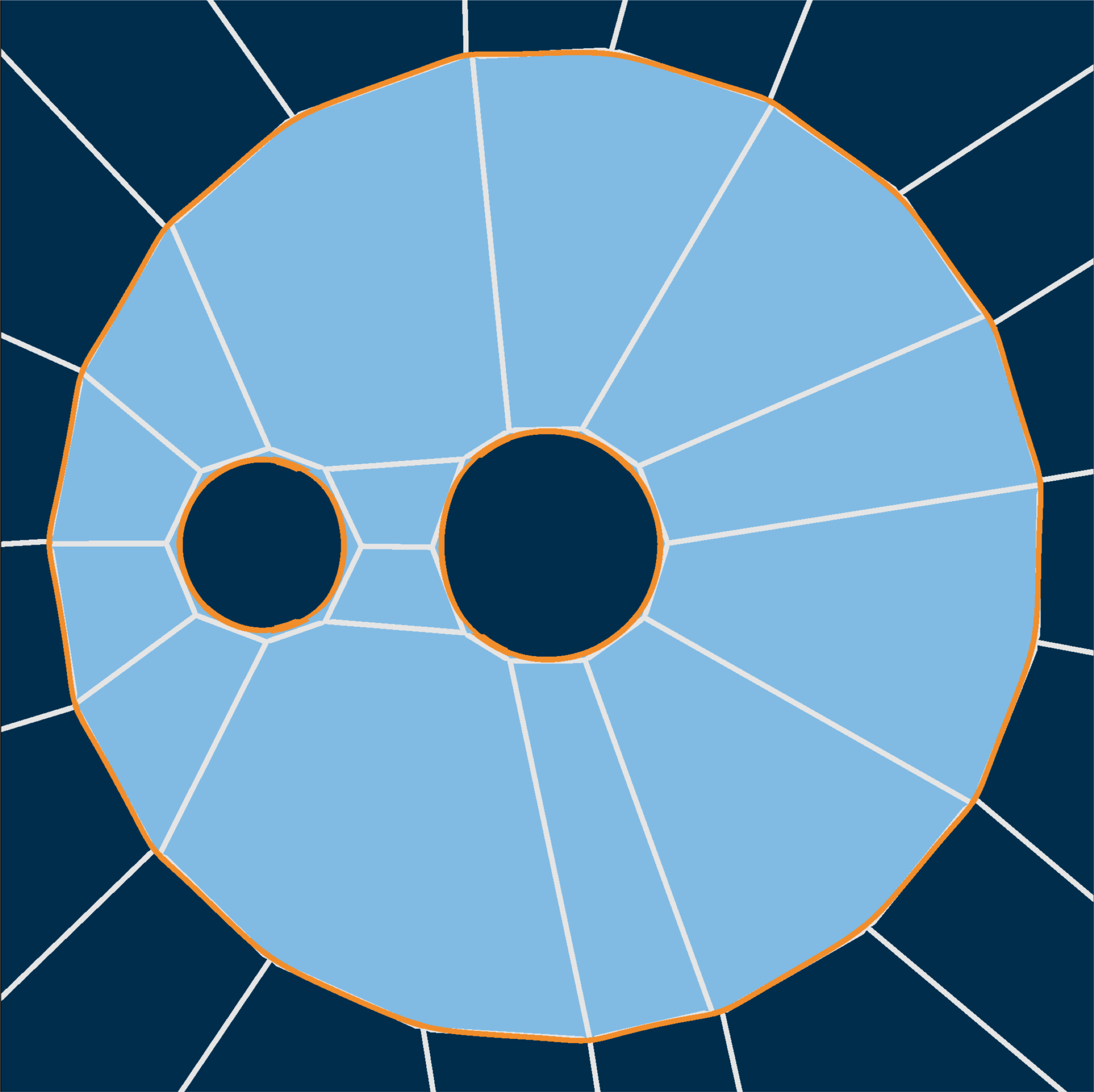}
        \caption*{105}
    \end{subfigure}\hfill
    \begin{subfigure}[t]{0.31\textwidth}
        \centering
        \includegraphics[width=0.48\linewidth]{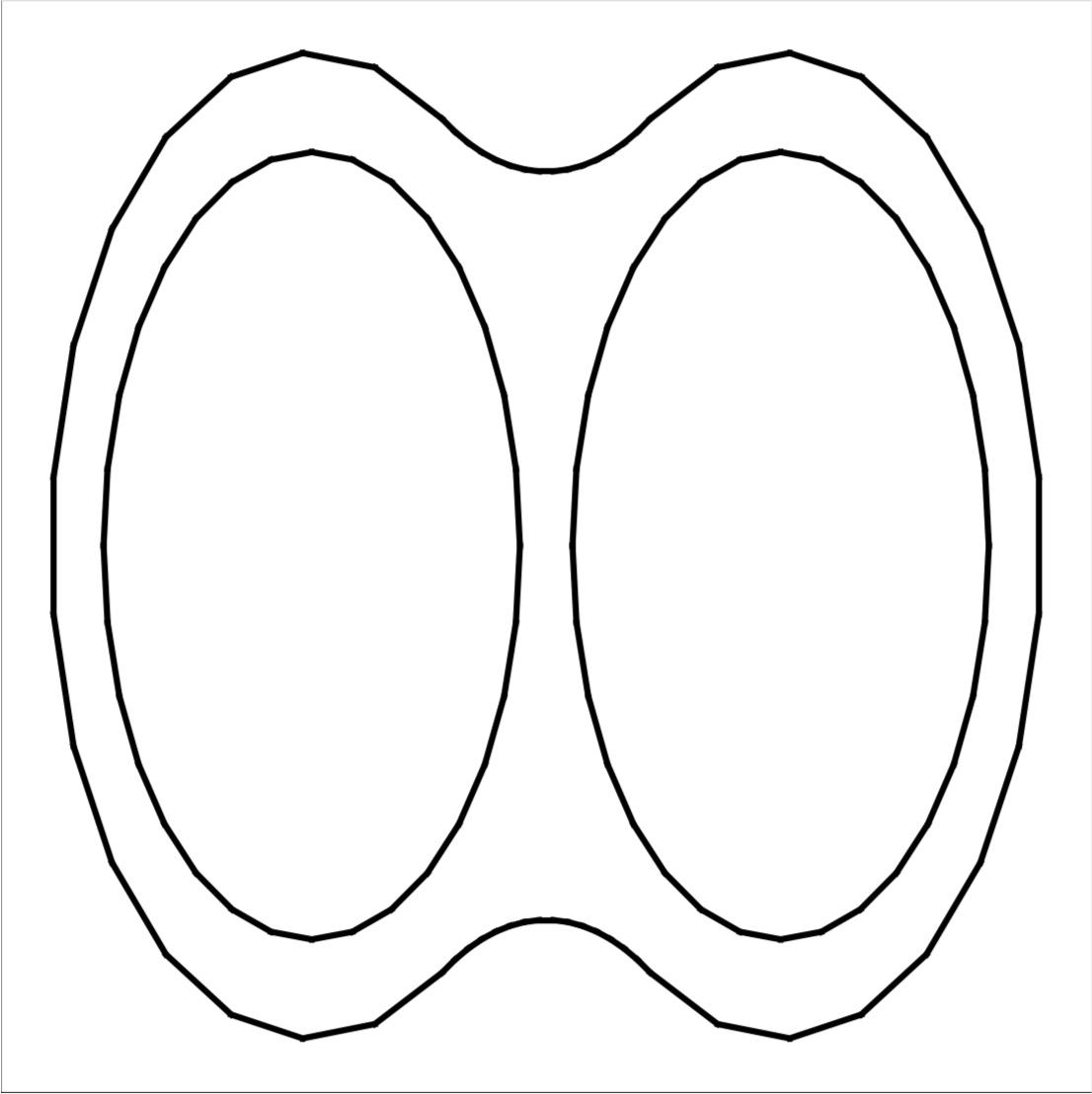}
        \includegraphics[width=0.48\linewidth]{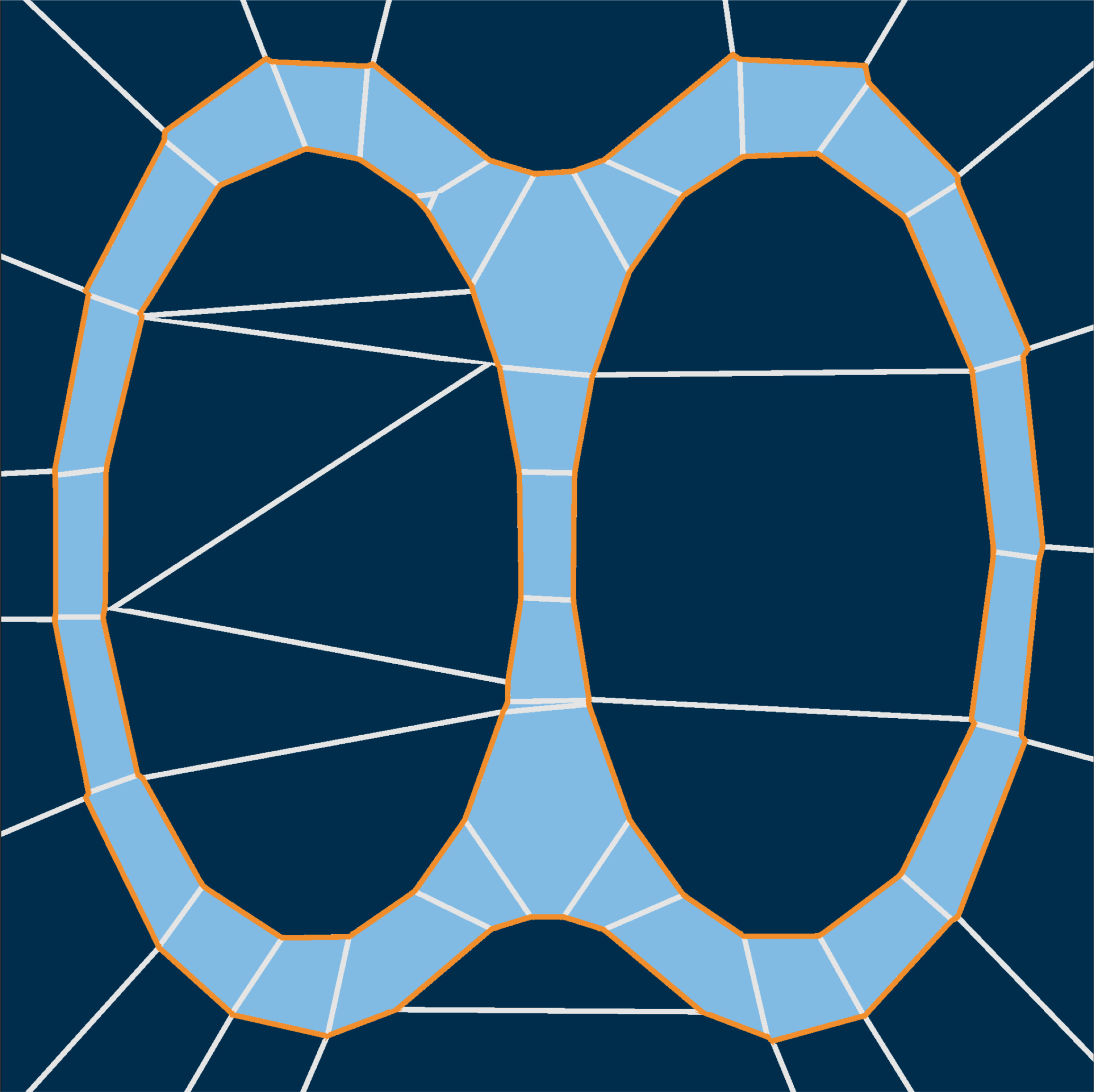}
        \caption*{183}
    \end{subfigure}\hfill
    \begin{subfigure}[t]{0.31\textwidth}
        \centering
        \includegraphics[width=0.48\linewidth]{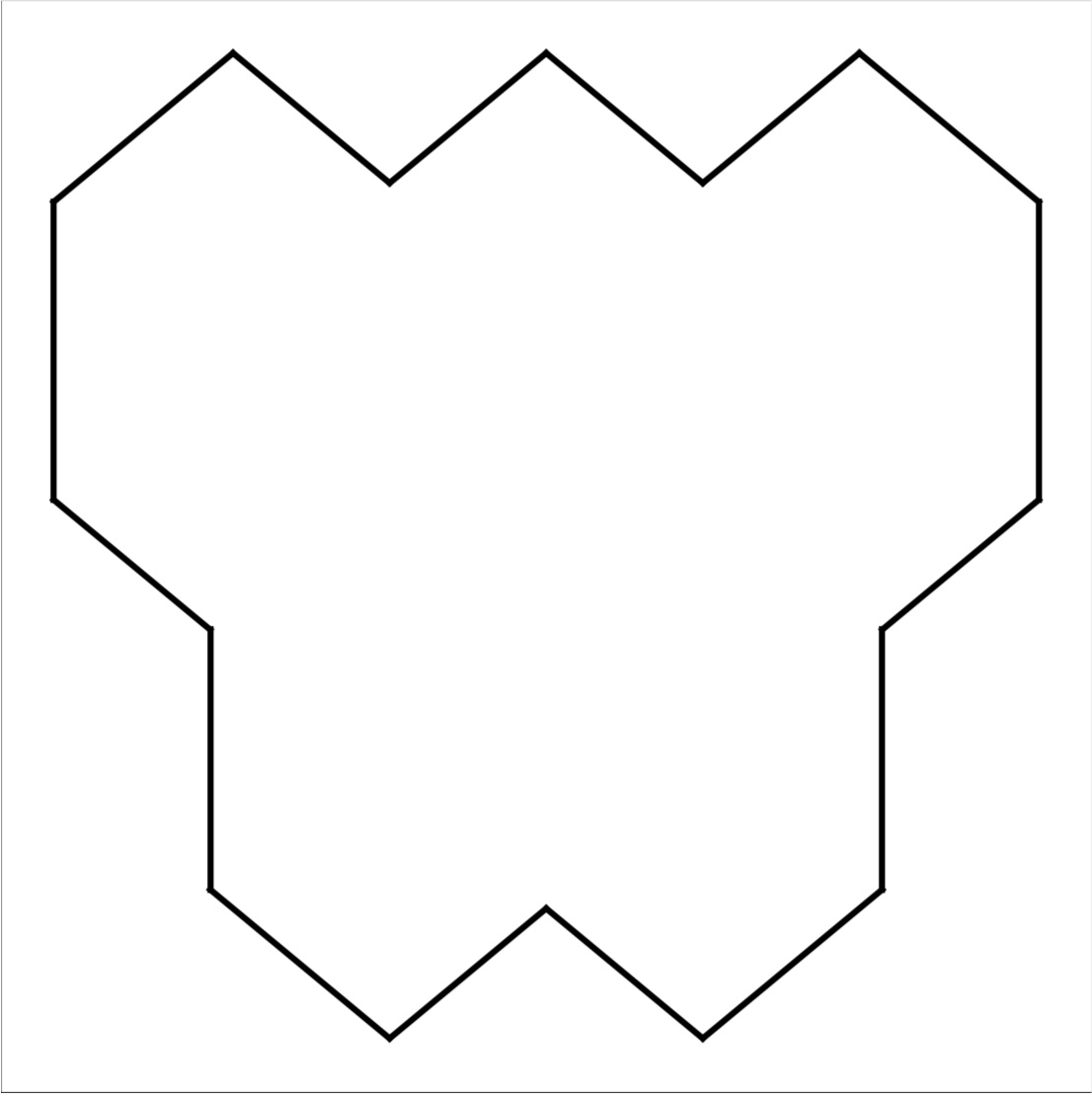}
        \includegraphics[width=0.48\linewidth]{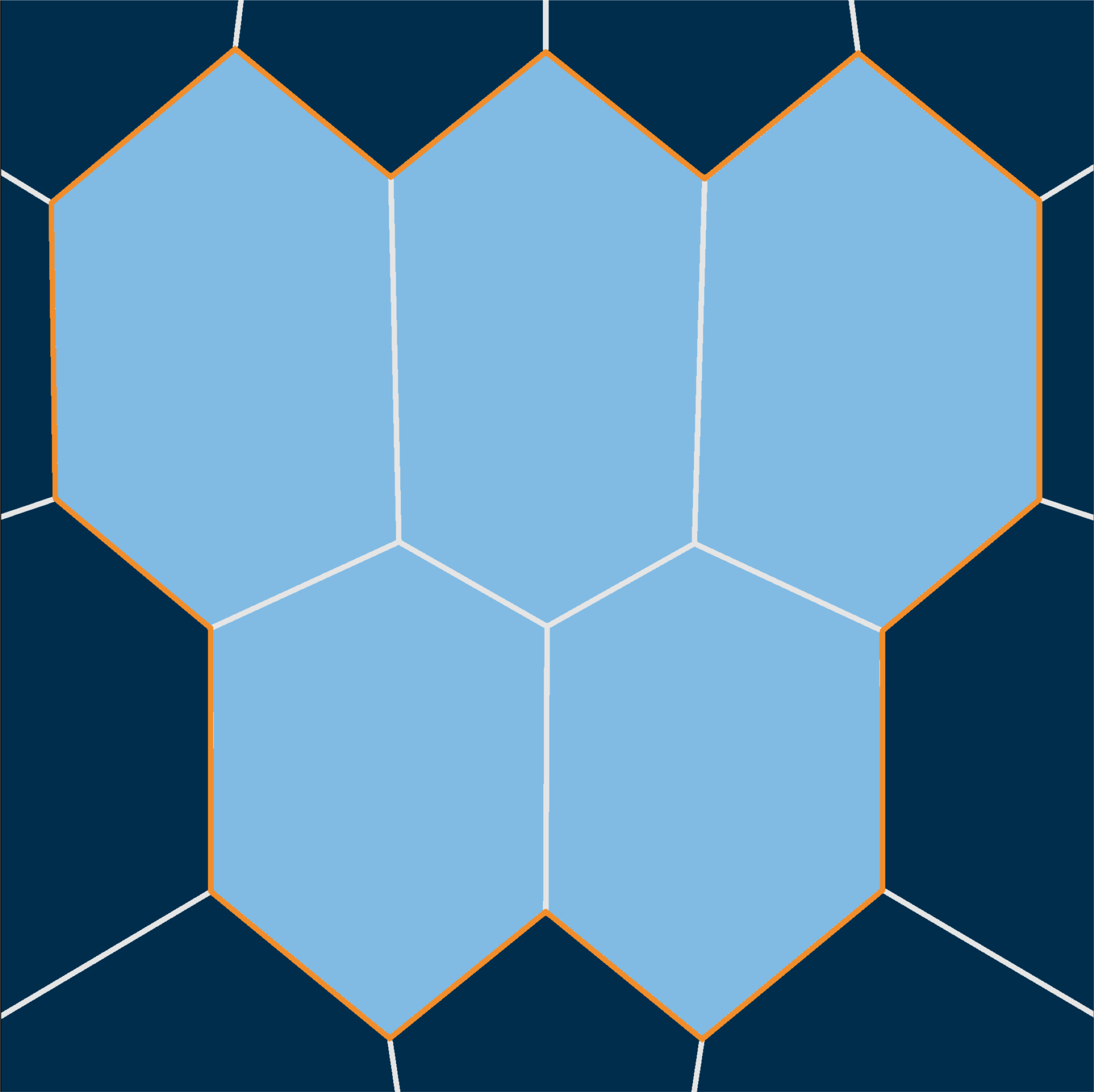}
        \caption*{48}
    \end{subfigure}

    \vspace{0.8em}

    \begin{subfigure}[t]{0.31\textwidth}
        \centering
        \includegraphics[width=0.48\linewidth]{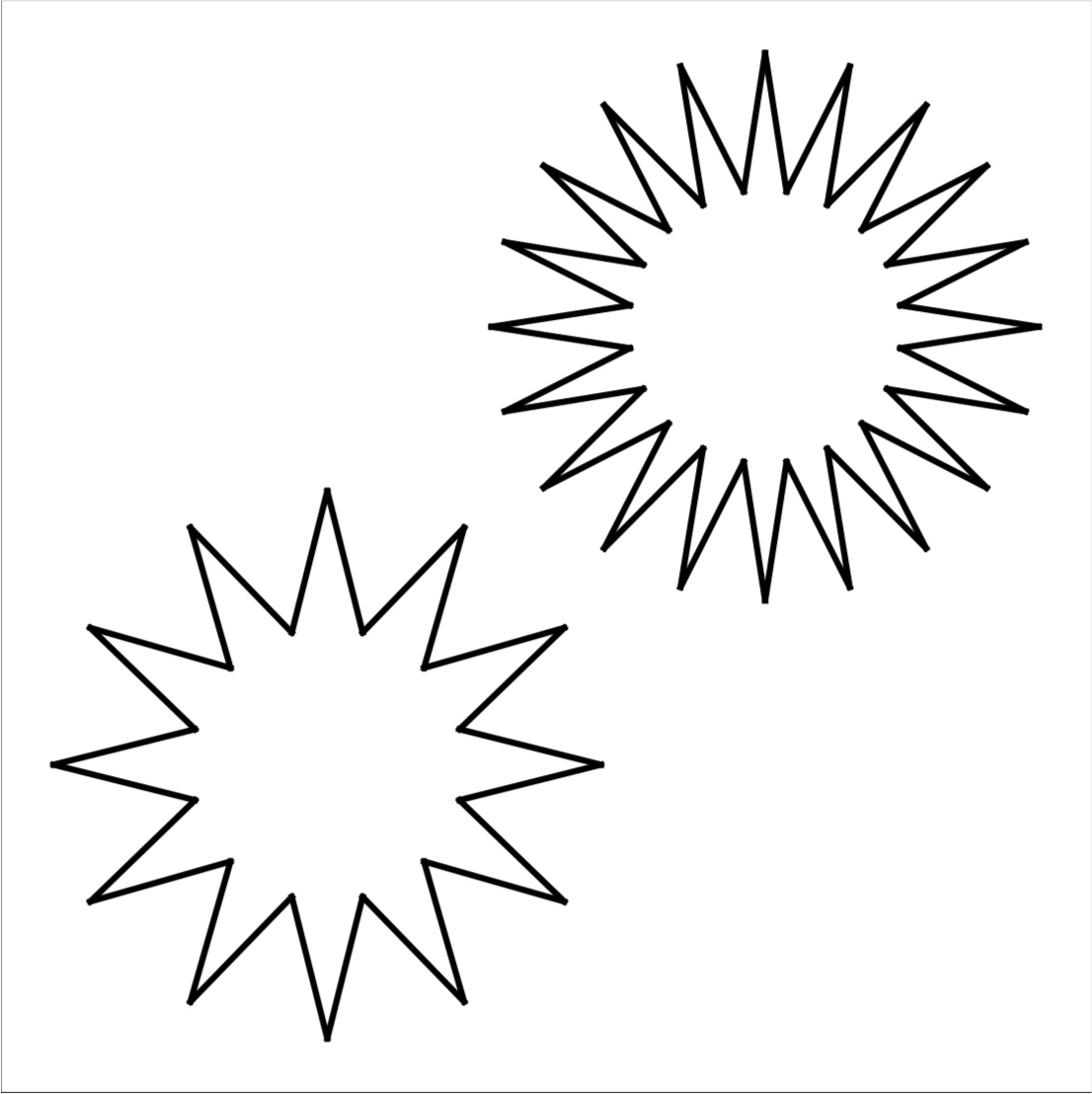}
        \includegraphics[width=0.48\linewidth]{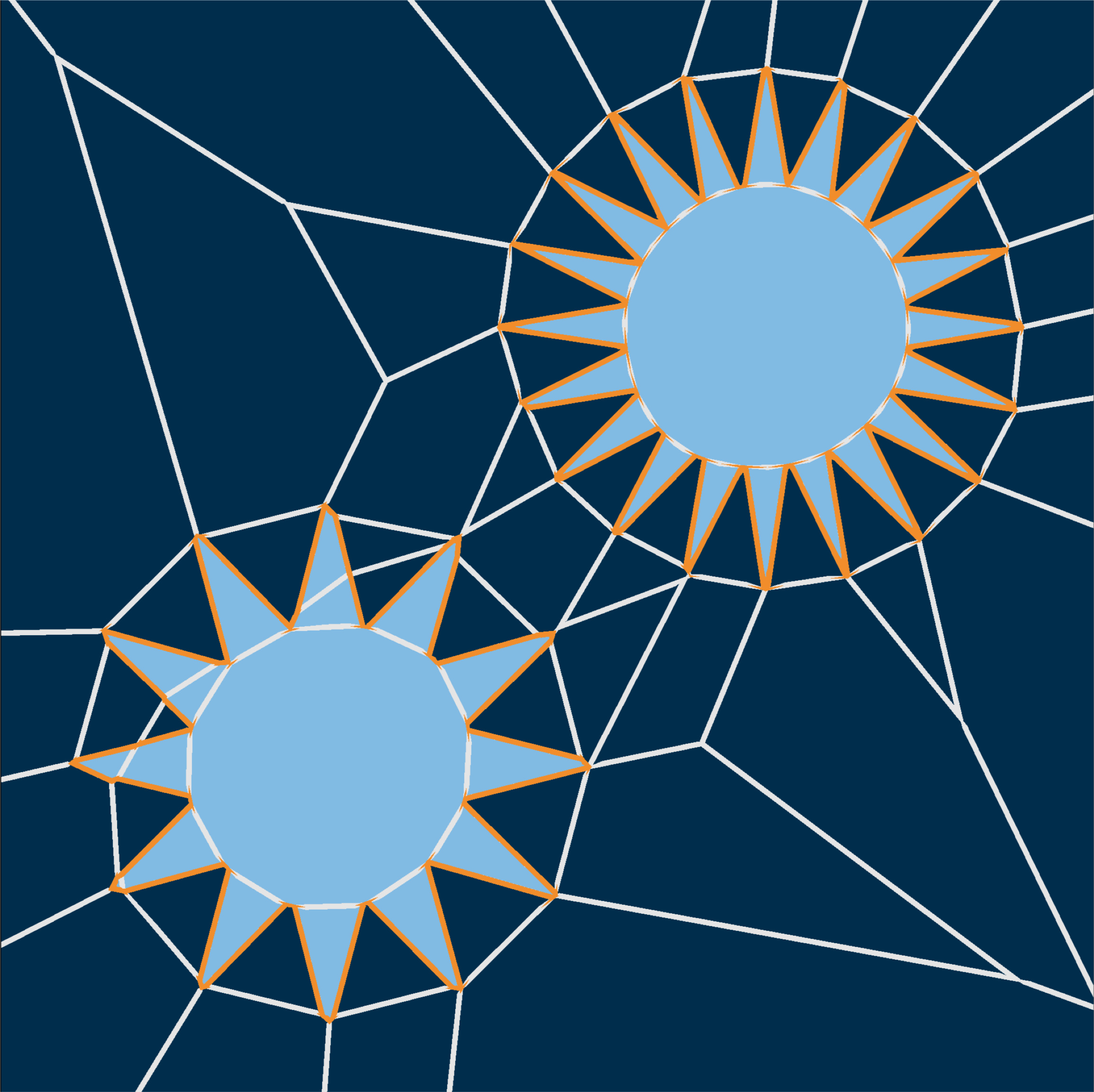}
        \caption*{324}
    \end{subfigure}\hfill
    \begin{subfigure}[t]{0.31\textwidth}
        \centering
        \includegraphics[width=0.48\linewidth]{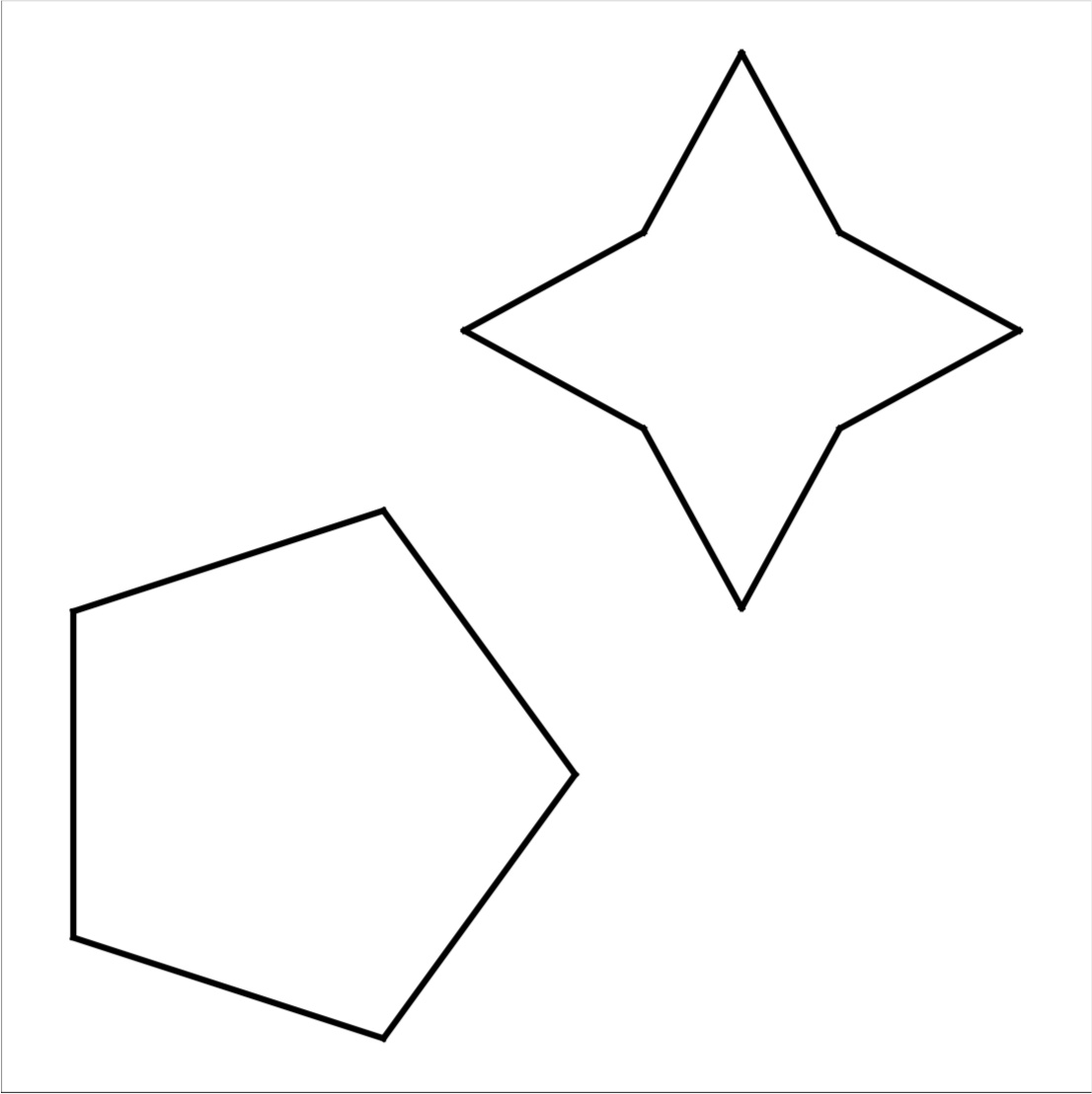}
        \includegraphics[width=0.48\linewidth]{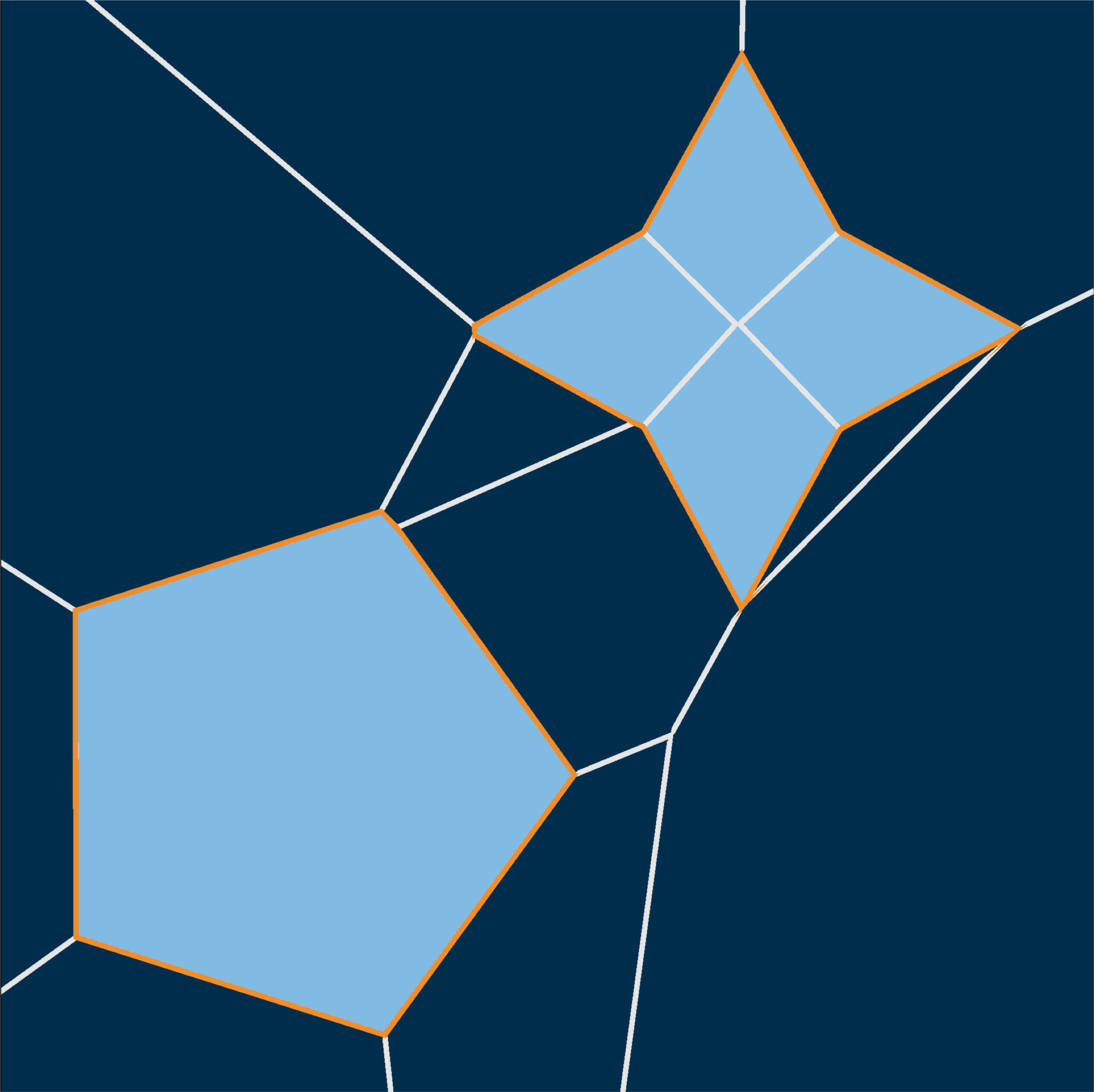}
        \caption*{45}
    \end{subfigure}\hfill
    \begin{subfigure}[t]{0.31\textwidth}
        \centering
        \includegraphics[width=0.48\linewidth]{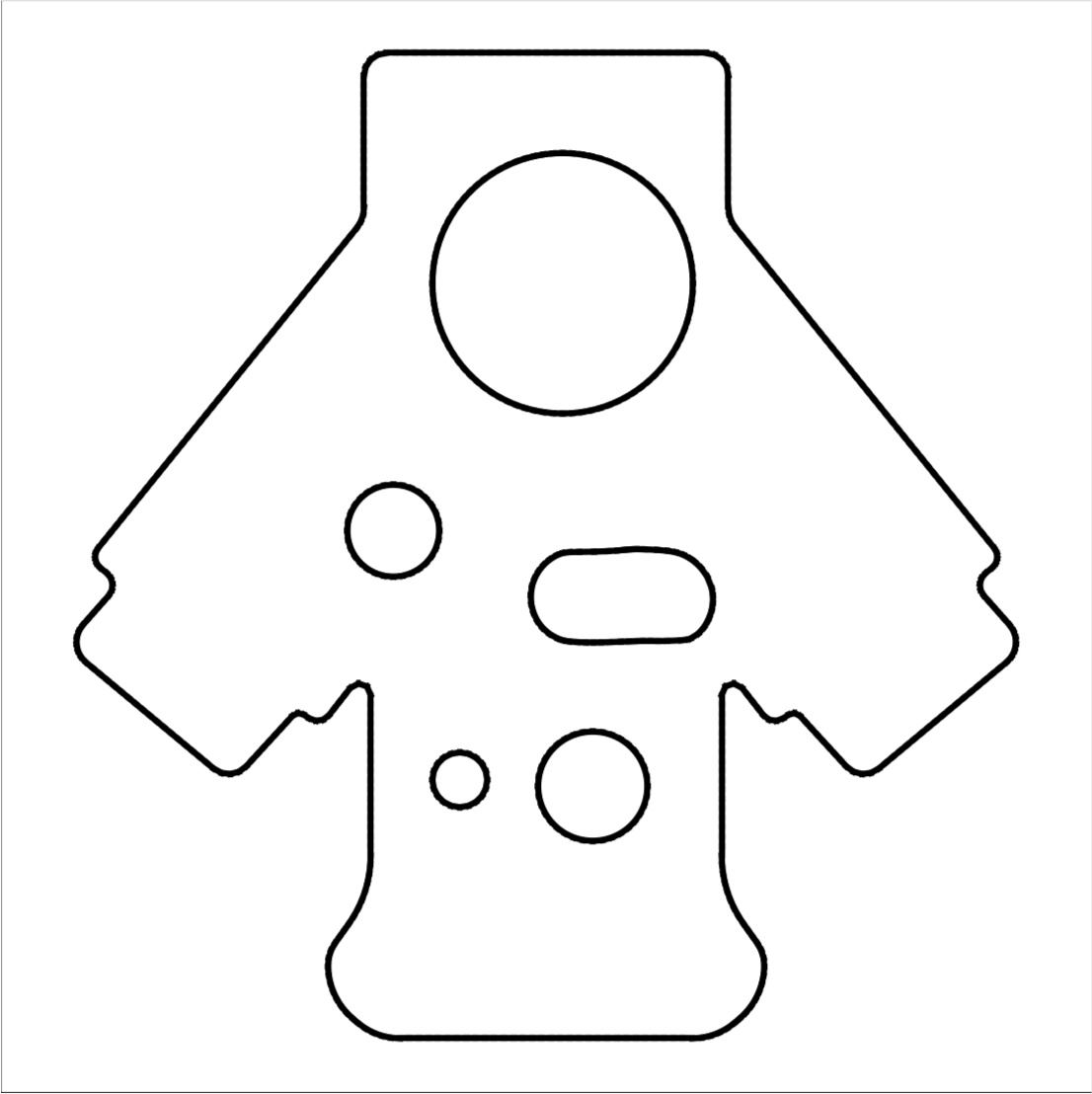}
        \includegraphics[width=0.48\linewidth]{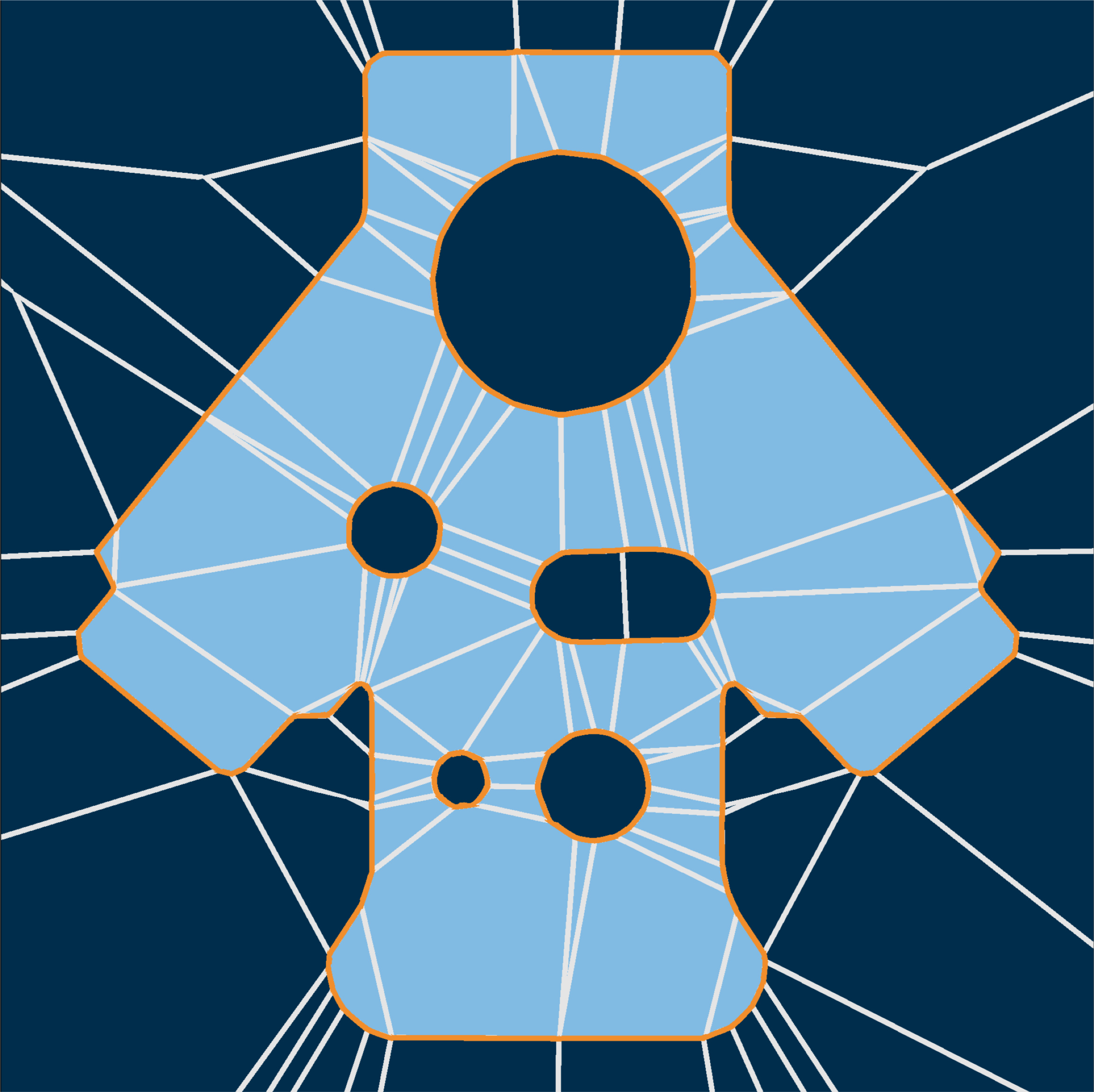}
        \caption*{471}
    \end{subfigure}

    \caption{Some 2D examples. Numbers indicate the number of parameters (i.e., $3 \times$ the number of design lines) needed to represent them as a patchwork.}
    \label{fig-suppl-2d-examples}
\end{figure}